\documentclass[12pt, reqno]{amsart}
\makeatletter
\ExplSyntaxOn

\ExplSyntaxOff

\def
\part{\@startsection{part}{0}%
  \z@{\linespacing\@plus\linespacing}{.5\linespacing}%
{\normalfont\bfseries\centering}}
\def\yj@part{part}
\def\@seccntformat#1{%
  \expandafter\ifx\csname yj@#1\endcsname\yj@part
  \csname the#1\endcsname\@secnumpunct
  \else
  \protect\textup{\protect\@secnumfont
    \csname the#1\endcsname
    \protect\@secnumpunct
  }%
  \fi
}
\ExplSyntaxOn

\PassOptionsToPackage{dvipsnames}{xcolor}

\usepackage{tikz}
\ExplSyntaxOff
\usetikzlibrary{calc,arrows.meta}
\ExplSyntaxOn
\usepackage{amsmath}
\allowdisplaybreaks
\usepackage{mathtools}
\usepackage{amssymb}
\DeclareMathOperator{\rmRe}{Re}
\DeclareMathOperator{\rmIm}{Im}
\DeclareMathOperator{\id}{id}
\DeclareMathOperator{\Hom}{Hom}
\DeclareMathOperator{\SO}{SO}
\DeclareMathOperator{\SU}{SU}
\DeclareMathOperator{\Res}{Res}
\DeclareMathOperator{\Li}{Li}
\DeclareMathOperator{\disc}{disc}
\DeclareMathOperator{\Sp}{Sp}
\DeclareMathOperator{\Ad}{Ad}
\DeclareMathOperator{\DT}{DT}

\DeclarePairedDelimiterX\setc[2]{\{}{\}}{
  \,#1 \;\delimsize\vert\; #2\,
}
\DeclarePairedDelimiterX\vargs[2]{(}{)}{
  #1 \,\delimsize\vert\, #2
}
\DeclarePairedDelimiter{\abs}{\lvert}{\rvert}
\DeclarePairedDelimiter{\norm}{\lVert}{\rVert}
\NewDocumentCommand{\cdbr}{m}{\mathopen{[\![}#1\mathclose{]\!]}}

\usepackage{mleftright}
\usepackage{newtxtext}
\defaultfontfeatures{}

\usepackage[varvw]{newtxmath}

\DeclareSymbolFont{largesymbolsCM}{OMX}{cmex}{m}{n}

\let\sum\relax
\DeclareMathSymbol{\sum}{\mathop}{largesymbolsCM}{"50}

\let\prod\relax
\DeclareMathSymbol{\prod}{\mathop}{largesymbolsCM}{"51}

\let\coprod\relax
\DeclareMathSymbol{\coprod}{\mathop}{largesymbolsCM}{"60}

\let\rtimes\relax
\DeclareMathSymbol{\rtimes}{\mathbin}{AMSb}{"6F}

\usepackage[cal=cm, bb=ams]{mathalfa}
\usepackage{hyperref}
\tl_const:Nn \c__yjrefs_shade_tl { 85 }
\hypersetup{
  linkcolor  = YellowOrange!\c__yjrefs_shade_tl!black,
  citecolor  = Aquamarine!\c__yjrefs_shade_tl!black,
  urlcolor   = violet!\c__yjrefs_shade_tl!black,
  colorlinks = true,
}

\usepackage{zref-clever}
\zcsetup{cap, nameinlink=tsingle, noabbrev}
\usepackage{keytheorems}

\theoremstyle{plain}
\newkeytheorem{theorem}[parent=section]
\newkeytheorem{corollary}[sibling=theorem]
\newkeytheorem{lemma}[sibling=theorem]
\newkeytheorem{proposition}[sibling=theorem]

\theoremstyle{definition}
\newkeytheorem{definition}[sibling=theorem]
\newkeytheorem{problem}[
  sibling=theorem,
  refname={problem,problems},
  Refname={Problem,Problems}
]

\theoremstyle{remark}
\newkeytheorem{remark}[sibling=theorem]

\cs_set_eq:NN \c@equation \c@theorem
\cs_set_eq:NN \c@figure \c@theorem

\usepackage{float}

\usepackage{enumitem}
\setlist[enumerate,1]{
  font=\normalfont,
  label=\textup{(\roman*)},
  listparindent=\parindent
}

\usepackage[
  backend=biber,
  style=alphabetic,
  isbn=false,
  doi=false,
  url=false,
  giveninits=true,
  maxnames=4,
  minnames=4,
  maxalphanames=4,
  minalphanames=4,
  date=year,
  labeldate=year
]{biblatex}
\usepackage{biblatex-shortfields}
\usepackage[margin=1in]{geometry}

\ProvideDocumentCommand{\orcid}{m}{}
\ExplSyntaxOff
\makeatother
\title[Resurgence and Riemann--Hilbert problems]{Resurgence and
Riemann--Hilbert problems for orientifolded conifolds}

\author[Wu-yen Chuang]{Wu-yen Chuang\orcid{0000-0003-3230-3252}}
\address{Department of Mathematics and TIMS, National Taiwan University, Taipei, Taiwan}
\email{wychuang@gmail.com}

\author[Yi-Jing Tseng]{Yi-Jing Tseng\orcid{0009-0008-8349-6772}}
\address{Department of Mathematics, National Taiwan University, Taipei, Taiwan}
\email{r12221014@ntu.edu.tw}

\keywords{resurgence, topological strings, Donaldson--Thomas invariants}
\subjclass[2020]{Primary 14N35; Secondary 81T30}

\begin{document}

\begin{abstract}
  We study the crosscap part of the large-$N$ $\SO/\Sp$ orientifold conifold free energies. The unprojected crosscap block is a single $q$-Pochhammer tower. Its rank-one shift equation matches the functional equation for Faddeev's quantum dilogarithm after a change of variables. Using the known Borel-resurgence theorem for this quantum dilogarithm, we compute the Borel transform, pole structure, sectorial sums, Stokes jumps, and limiting sectors of the primitive block and of its odd projection. Combining the odd-projected crosscap calculation with the resolved-conifold summation theorem gives the corresponding resurgence statement for the large-$N$ orientifold free energy.
  We also formulate an axiomatic classical self-dual Riemann--Hilbert problem whose local wall-crossing factors are extracted from the crosscap Stokes jumps. The construction uses a doubled charge lattice and a chosen O-plane incidence function. Within this axiomatic setting, normalized scalar crosscap sectorial functions give $\tau$-functions whose elementary shift-ratios solve the coordinate problem.
\end{abstract}

\maketitle

\section{Introduction}

The free energy of topological string theory has a perturbative expansion given by
\begin{equation}\label{perturbative-free-energy}
  \mathcal{F}_{\text{pert}}(\lambda) = \sum_{g=0}^{\infty}
  \mathcal{F}_g\ {\lambda}^{2g-2}
\end{equation}
where $g$ is the genus and $\lambda$ is the string coupling. In topological-string examples the coefficients often grow like $\mathcal{F}_g\sim (2g)!$, so \eqref{perturbative-free-energy} is a zero-radius asymptotic expansion. Resurgence organizes such a series through sectorial Borel--Laplace sums and their Stokes jumps \cite{ecalle-1981-8c402eea}. For the resolved conifold, Borel and Stokes phenomena were studied in \cite{pasquetti-2010-c984dedf}, and the enumerative interpretation was developed further in \cite{cousosantamaria-2017-b93f05df}. Alim, Saha, Teschner, and Tulli \cite{alim-2023-de7fb1f0} gave explicit Borel sums for the Gromov--Witten potential and showed that the jumps encode Donaldson--Thomas invariants related to Bridgeland's Riemann--Hilbert problem. Bridgeland's construction \cite{bridgeland-2019-6bdcc39c,bridgeland-2020-7a4fdc8b} gives, for the resolved conifold, Barnes double- and triple-sine solutions and a $\tau$-function related to the Borel-resummed free energy.

We carry this picture to orientifolded conifolds. Sinha and Vafa \cite{sinha-2000-c6cf9be1} considered the large $N$ limits of $\SO(N)$ and $\Sp(N)$ Chern--Simons theories on $S^3$ and their closed string duals, which are topological strings on orientifolds of the resolved conifold. The orientifold contribution is carried by non-orientable worldsheets with crosscaps. In the convention used in \cite{sinha-2000-c6cf9be1,bouchard-2004-e19a9bf3}, the large-$N$ orientifold free energy is
\[
  \mathcal F_{\SO/\Sp}(\lambda,t)
  =
  -\frac12\mathcal F_{\mathrm{GV}}(\lambda,t)
  \pm i\sum_{k\textup{ odd}}
  \frac{1}{k}\frac{e^{\pi ikt}}{2\sin(\lambda k/2)}.
\]
Here $\mathcal F_{\mathrm{GV}}$ denotes the resolved-conifold Gromov--Witten potential. The first term is one half of this potential with the large-$N$ Chern--Simons sign, and the second term is the odd projection of a single crosscap summand. This is the formula studied in the paper.

The analytic calculation uses two inputs. From the Borel-resurgence theorem for Faddeev's quantum dilogarithm \cite{garoufalidis-2021-a73cf607}, recorded in \zcref{faddeev-resurgence-input}, we identify the primitive crosscap tower with the same rank-one shift equation and translate the quantum-dilogarithm data in \zcref{primitive-resurgence}. From the resolved-conifold summation theorem of \cite{alim-2023-de7fb1f0}, we combine the oriented contribution with the orientifold odd projection. The latter bookkeeping produces \zcref{crosscap-resurgence} for the crosscap poles, residues, sectorial sums, Stokes jumps, and limiting sector, and then \zcref{orientifold-resurgence} for the full large-$N$ orientifold free energy. A comparison with exact finite-$N$ multiple-sine normalizations appears after \zcref{orientifold-resurgence}.

The Riemann--Hilbert part extracts a self-dual wall-crossing factor model from the scalar crosscap jumps. A doubled charge lattice and an O-plane incidence function produce rational Stokes weights on O-plane-framed active lifts. The normalized crosscap $\tau$-functions then have elementary shift-ratios solving the coordinate crosscap Riemann--Hilbert problem by \zcref{crosscap-shift-ratio-normalization, crosscap-reflection-products,crosscap-tau-solution}.

\zcref{sosp-primitive-qdilog} recalls the $\SO/\Sp$ free energy, isolates the primitive crosscap block, and compares its rank-one shift equation with Faddeev's quantum dilogarithm. \zcref{resurgence-analysis} treats the odd-projected crosscap contribution and the large-$N$ orientifold free energy. \zcref{self-dual-rh} constructs the axiomatic classical self-dual Riemann--Hilbert problem and its normalized crosscap $\tau$-functions.

\section{SO/Sp Chern--Simons theory and the primitive quantum
dilogarithm}\label{sosp-primitive-qdilog}

This section fixes the large-$N$ $\SO/\Sp$ free-energy convention used later. We first recall the closed-form free energy and isolate the primitive crosscap product before the odd projection. We then compare its rank-one quantum-torus equation and formal expansion with Faddeev's quantum dilogarithm \cite{faddeev-1995-65fc7c50, fock-2009-80f26820}.

\subsection{Large-\texorpdfstring{$N$}{N} free energy and primitive
crosscap block}

Consider the deformed conifold, defined by $z_1z_4-z_2z_3=\mu$ in $\mathbb C^4$.  Topologically, it is the cotangent bundle $T^*S^3$. The effective theory of wrapping $N$ $D3$ branes over $S^3$ can be described by $\SU(N)$ Chern--Simons theory on $S^3$.  At large $N$ there is a dual description in terms of a blowup geometry where the $S^3$ is replaced by a $\mathbb P^1$ of finite size; this is the string-theoretic prototype of geometric transitions in Calabi--Yau threefolds \cite{gopakumar-2001-d6b8f4c3}.

Sinha and Vafa \cite{sinha-2000-c6cf9be1} extended this duality to the $\SO/\Sp$ cases.  Their approach is to start with the duality for the $\SU$ gauge group and then orientifold both sides.  On the $S^3$ conifold side, we consider the involution
\[
  z_1\mapsto\overline{z_4},
  \qquad z_2\mapsto-\overline{z_3},
\]
which leaves the $S^3$ invariant.  This orientifolding gives us $\SO(N)$ or $\Sp(N)$ Chern--Simons theories on $S^3$.

On the dual side, we must consider the orientifolding of the resolved conifold.  Their analysis shows that the closed-string target space has an $\mathbb{RP}^2$ instead of a $\mathbb P^1$.  There is also a choice of signs for worldsheets with crosscaps, leading to different duals for $\SO/\Sp$ theories.

Let $T$ denote the Kähler parameter in the Chern--Simons normalization. After identifying the parameters of the gauge theories with those of closed strings in the resolved-conifold geometry, Sinha and Vafa express the $\SO/\Sp$ free energy in terms of the Chern--Simons partition function \cite[eqs.~(4.38)--(4.39)]{sinha-2000-c6cf9be1}.  For our purposes we use the equivalent expression given in \cite[eq.~(3.2)]{bouchard-2004-e19a9bf3}, which we will use for subsequent manipulations:
\begin{equation}\label{sosp-free-energy-cs}
  \mathcal F_{\SO/\Sp}(g_s,T)
  =\frac{1}{2}\sum_{d=1}^{\infty}
  \frac{1}{d}\frac{e^{-dT}}
  {(q^{d/2}-q^{-d/2})^2}
  \mp\sum_{d\textup{ odd}}
  \frac{1}{d}\frac{e^{-dT/2}}{q^{d/2}-q^{-d/2}}.
\end{equation}
Here the upper sign is the $\SO$ sign and the lower sign is the $\Sp$ sign.  The Chern--Simons variable is $q=e^{g_s}$ with $g_s=2\pi i/(k+y)$, where $y=N-2$ for $\SO(N)$ and $y=N+1$ for $\Sp(N)$; the parameter $T$ is the corresponding 't Hooft parameter, $T=(N\mp1)g_s$, with the same sign convention as in \eqref{sosp-free-energy-cs}.

For the resurgence analysis we use the additive coordinate $t$ and the coupling $\lambda$ defined by
\[
  e^{-T/2}=e^{\pi i t},
  \qquad g_s=i\lambda,
  \qquad Q=e^{2\pi i t}.
\]
Thus $t$ is an additive lift of $Q^{1/2}$; replacing $t$ by $t+2\mathbb Z$ only relabels the multicover and Stokes indices below. Since $2\sinh(dg_s/2)=2i\sin(d\lambda/2)$, equation \eqref{sosp-free-energy-cs} becomes
\begin{equation}\label{sosp-free-energy-large-n}
  \mathcal F_{\SO/\Sp}(\lambda,t)
  =-\frac{1}{2}\mathcal F_{\mathrm{GV}}(\lambda,t)
  \pm i\sum_{k\textup{ odd}}
  \frac{1}{k}\frac{e^{\pi i kt}}{2\sin(\lambda k/2)},
\end{equation}
where
\[
  \mathcal F_{\mathrm{GV}}(\lambda,t)
  \coloneqq\sum_{k=1}^{\infty}
  \frac{1}{k}\frac{e^{2\pi i kt}}{(2\sin(\lambda k/2))^2}.
\]
Thus, with the large-$N$ convention in \eqref{sosp-free-energy-large-n}, the unoriented contribution is the odd-multicover projection of a single elementary crosscap summand. We now isolate that summand before imposing the parity projection.

In the M-theory lift of the orientifolded resolved conifold, the BPS partition function is written as a product over M2-brane oscillator modes.  The orientifold projection separates oriented M2-branes wrapping orientable cycles from unoriented M2-branes frozen at the orientifold fixed locus.  For a primitive unoriented sector, this fixed-locus condition removes the $z_2$ oscillator tower; the local factor therefore has only the $z_1$ oscillator index, as in \cite[eq.~(3.12)]{aganagic-2012-386ffcfc}.

We use this single-tower factor as the primitive crosscap block before the odd multicover projection. Let $x$ be the non-oscillator part of its one-particle weight, namely the product of the charge fugacity and the fixed intrinsic-spin weight. The surviving modes are indexed by $n\geq0$ and have weights $q^{n+1/2}x$.  Thus the sector contributes the single-tower product
\begin{equation}\label{fock-psi}
  \Psi(x)=\prod_{n\geq0}\mleft(1-q^{n+1/2}x\mright).
\end{equation}
Reindexing the product gives the shift identity
\begin{equation}\label{psi-shift}
  \Psi(q^{-1/2}x)=(1-x)\Psi(q^{1/2}x).
\end{equation}
This elementary identity is why we isolate the unprojected summand: it is the rank-one shift equation that will be compared with the quantum dilogarithm.

On the upper half-plane branch set
\[
  x=Q^{1/2}=e^{\pi it},
  \qquad
  q=e^{i\lambda}.
\]
The product \eqref{fock-psi} gives the $q$-Pochhammer branch
\begin{equation}\label{primitive-qpochhammer-branch}
  F^+(\lambda,t)
  \coloneqq
  i\log\mleft(Q^{1/2}e^{i\lambda/2};e^{i\lambda}\mright)_\infty.
\end{equation}
For $\rmIm(t)>0$ this is analytic when $\rmIm(\lambda)>0$, where the infinite product converges.  Expanding the logarithm in that half-plane gives
\[
  i\log\Psi(Q^{1/2})
  =
  -i\sum_{k=1}^{\infty}
  \frac{Q^{k/2}e^{ik\lambda/2}}{k(1-e^{ik\lambda})}.
\]
Using
\[
  \frac{1}{2\sin(k\lambda/2)}=
  -i\frac{e^{ik\lambda/2}}{1-e^{ik\lambda}},
\]
one obtains
\[
  F^+(\lambda,t)=\sum_{k=1}^{\infty}
  \frac{1}{k}\frac{e^{\pi ikt}}{2\sin(\lambda k/2)}.
\]
The lower half-plane branch is the reflected branch
\[
  F^-(\lambda,t)\coloneqq
  -F^+(-\lambda,t)
  =
  -i\log\mleft(Q^{1/2}e^{-i\lambda/2};e^{-i\lambda}\mright)_\infty .
\]
It is analytic when $\rmIm(\lambda)<0$ and has the same sectorial expansion.  Thus, on their respective half-planes,
\begin{equation}\label{primitive-sectorial-series}
  F^\pm(\lambda,t)=\sum_{k=1}^{\infty}
  \frac{1}{k}\frac{e^{\pi ikt}}{2\sin(\lambda k/2)}.
\end{equation}
The odd part of \eqref{primitive-sectorial-series} is
\begin{equation}\label{primitive-odd-projection}
  F^\pm(\lambda,t)-\frac12F^\pm(2\lambda,2t)
  =
  \sum_{\substack{k\geq1\\ k\textup{ odd}}}
  \frac{1}{k}\frac{e^{\pi ikt}}{2\sin(\lambda k/2)}.
\end{equation}
This is the crosscap term in \eqref{sosp-free-energy-large-n}, with the overall $\SO/\Sp$ sign recorded there.

Expanding the summand in \eqref{primitive-sectorial-series} at $\lambda=0$ and collecting coefficients gives the formal Laurent series
\begin{equation}\label{primitive-formal-series}
  \widetilde F(\lambda,t)\coloneqq
  \frac{1}{\lambda}\Li_2(Q^{1/2})+
  \sum_{g=1}^{\infty}\lambda^{2g-1}
  \frac{(-1)^{g-1}(1-2^{1-2g})B_{2g}}{(2g)!}
  \Li_{2-2g}(Q^{1/2}).
\end{equation}
This coefficientwise expansion is a formal operation: the expansion of the $k$-th summand in \eqref{primitive-sectorial-series} has radius $\abs{\lambda}<2\pi/k$, and these radii tend to zero. \zcref{primitive-resurgence} proves that the correction term of \eqref{primitive-formal-series} is Gevrey-$1$ and that the sectorial functions above are recovered by Borel--Laplace summation.

\subsection{Comparison with the quantum
dilogarithm}

We now compare the crosscap product with Faddeev's quantum dilogarithm. First record \eqref{psi-shift} in operator form. The analytic solution with the matching boundary value is identified below with the corresponding branch of Faddeev's quantum dilogarithm.

The shift identity can be recorded in terms of its two elementary operations: multiplication by the variable $x$ and the shifts $x\mapsto q^{\pm1/2}x$.  Let $X$ denote multiplication by $x$, and let $Y^{\pm1/2}$ denote the corresponding half-shift operators, so that on functions of $x$ one may write $(Y^{\pm1/2}f)(x)=f(q^{\pm1/2}x)$.  The algebra generated by these operators is the rank-one quantum torus; we complete it in the $X$-direction so that power series in $X$ are allowed.  Its Weyl relation is
\begin{equation}\label{rank-one-weyl-relation}
  YXY^{-1}=qX.
\end{equation}
Conjugation by $Y^{\pm1/2}$ replaces $X$ by $q^{\pm1/2}X$ inside any power series in $X$.  Applying this to the product \eqref{fock-psi}, the shift identity \eqref{psi-shift} becomes the quantum-torus identity
\begin{equation}\label{psi-weyl}
  Y^{-1/2}\Psi(X)Y^{1/2}=(1-X)\,
  Y^{1/2}\Psi(X)Y^{-1/2}.
\end{equation}
Indeed, the left-hand side of \eqref{psi-weyl} is $\Psi(q^{-1/2}X)$ and the right-hand side is $(1-X)\Psi(q^{1/2}X)$. This is the reindexing of the primitive oscillator product in operator form.

Represent the generators on functions of $z$ by
\begin{equation}
  (Xf)(z)=-e^z f(z),\qquad
  \mleft(Y^{\pm 1/2}f\mright)(z)=f(z\pm i\pi\tau),\qquad
  q=e^{2\pi i\tau}.
\end{equation}
This representation satisfies \eqref{rank-one-weyl-relation}. Under the substitution $x=-e^z$, the product relation \eqref{psi-weyl} is the same scalar functional equation as the one satisfied by the normalized solution used below:
\begin{equation}\label{f-tau-shift-equation}
  f_\tau(z-i\pi\tau)=(1+e^z)f_\tau(z+i\pi\tau).
\end{equation}
\eqref{f-tau-shift-equation} is the scalar equation whose normalized analytic solution is $f_\tau$.  The product normalization $\Psi(0)=1$ corresponds to the boundary condition $\lim_{\rmRe(z)\to-\infty}f_\tau(z)=1$.  This equation is recalled in \cite[eq.~(1.1)]{garoufalidis-2021-a73cf607} in the context of the quantization of Teichm\"uller theory; its Borel-summed formal solution is identified there with Faddeev's quantum dilogarithm.  Thus the $\SO/\Sp$ crosscap product and Faddeev's quantum dilogarithm are selected by the same rank-one shift equation and boundary value. The next paragraphs specify the variables and logarithmic branch used in the resurgence theorem.

We record the precise facts from \cite{garoufalidis-2021-a73cf607} used below, in that paper's notation.
\begin{theorem}\label{faddeev-resurgence-input}
  Let $f_\tau$ be the normalized solution of
  \[
    f_\tau(z-i\pi\tau)=(1+e^z)f_\tau(z+i\pi\tau),
    \qquad
    \lim_{\rmRe(z)\to-\infty}f_\tau(z)=1.
  \]
  Suppose first that $\abs{\rmIm(z)}<\pi$, and set
  \begin{equation}\label{gk-formal-correction}
    \widehat\phi_\tau(z)=\sum_{n=1}^{\infty}
    (2\pi i)^{2n-1}\,\frac{B_{2n}(1/2)}{(2n)!}\,
    \partial_z^{2n}\Li_2(-e^z)\,\tau^{2n-1}.
  \end{equation}
  Then the following statements hold.
  \begin{enumerate}
    \item The normalized solution has the formal expansion
      \begin{equation}\label{gk-asymptotic}
        \log f_\tau(z)\sim
        \frac{1}{2\pi i\tau}\Li_2(-e^z)+\widehat\phi_\tau(z),
        \qquad \tau\to0.
      \end{equation}
    \item Put
      \[
        G_{\textup{GK}}(\xi,z)
        \coloneqq \mathcal B[\widehat\phi_{(-)}(z)](\xi).
      \]
      For $\abs{\xi}<\pi-\abs{\rmIm(z)}$,
      \begin{equation}\label{gk-borel-transform}
        G_{\textup{GK}}(\xi,z)
        =
        \frac{1}{2\pi i}\sum_{n=1}^{\infty}
        \frac{(-1)^n}{n^2}
        \mleft(
          \frac{1}{1+e^{\xi/n-z}}
          +
          \frac{1}{1+e^{-\xi/n-z}}
        \mright),
      \end{equation}
      expanded as a power series at $\xi=0$.  This function continues meromorphically in $\xi$, with simple poles at $n\xi_m(z)$, where
      \[
        \xi_m(z)=z+(2m+1)\pi i,\qquad
        n\in\mathbb Z\setminus\{0\},\quad m\in\mathbb Z,
      \]
      and the principal part at $n\xi_m(z)$ is
      \begin{equation}\label{gk-residue}
        G_{\textup{GK}}(\xi,z)=
        \frac{(-1)^{n+1}}{2\pi i n}\,
        \frac{1}{\xi-n\xi_m(z)}
        +\textup{holomorphic}.
      \end{equation}
    \item For a ray $\rho\subset\mathbb{C}^\times$ avoiding every line $\mathbb R\,\xi_m(z)$, the relevant Laplace transform is
      \begin{equation}\label{gk-laplace-transform}
        (\mathcal L_\rho G_{\textup{GK}})(\tau,z)
        =
        \int_\rho e^{-\xi/\tau}G_{\textup{GK}}(\xi,z)\,d\xi .
      \end{equation}
      Its value is independent of moving $\rho$ inside one component of the complement of these lines, and it is holomorphic for $\tau$ satisfying $\rmRe(\xi/\tau)>0$ along $\rho$.
    \item Put $q=e^{2\pi i\tau}$.  If $\rmIm(\tau)>0$, $\rmRe(z)<0$, and $\abs{\rmIm(z)}<\pi$, then the product branch is
      \begin{equation}\label{gk-product-branch}
        \log f_\tau(z)=
        \log(-q^{1/2}e^z;q)_\infty
        -
        \log(-e^{-\pi i/\tau}e^{z/\tau};
        e^{-2\pi i/\tau})_\infty .
      \end{equation}
      When a sectorial branch crosses the line $\mathbb R\,\xi_k(z)$, the adjacent-branch difference is
      \begin{equation}\label{gk-wall-crossing}
        \log\mleft(1+e^{\xi_k(z)/\tau}\mright).
      \end{equation}
      Combining these jumps with the product branch gives the vertical limiting identity
      \begin{equation}\label{gk-vertical-limit}
        (\mathcal L_{i\mathbb R_{>0}}G_{\textup{GK}})(\tau,z)
        =
        \log(-q^{1/2}e^z;q)_\infty .
      \end{equation}
    \item If $\tau>0$, $b^2=\tau$, and $\abs{\rmIm(z)}<\pi$, then the positive-real Laplace transform is Faddeev's quantum dilogarithm $\Phi_b$ of \cite[eq.~(1.5)]{garoufalidis-2021-a73cf607}:
      \begin{equation}\label{gk-positive-faddeev}
        \log\Phi_b\mleft(\frac{z}{2\pi b}\mright)
        =
        \frac{1}{2\pi i\tau}\Li_2(-e^z)
        +
        (\mathcal L_{\mathbb R_{>0}}G_{\textup{GK}})(\tau,z).
      \end{equation}
  \end{enumerate}
  These are the parts of \cite[eqs.~(1.1)--(1.3), Theorems~1.1 and~1.3, eqs.~(1.19)--(1.21), eqs.~(1.23)--(1.24), and Lemma~2.1]{garoufalidis-2021-a73cf607} used in this paper.
\end{theorem}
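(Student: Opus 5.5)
This theorem is a transcription of results from \cite{garoufalidis-2021-a73cf607}, so the proof consists mainly of checking that conventions match. Where practical, I will also re-derive the key formulas so that signs and normalizations are verified rather than only quoted. The plan follows the order of the items.

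For (i), I would start from the integral representation of $\log\Phi_b$, or equivalently from the Euler--Maclaurin expansion of $\log f_\tau$ obtained by writing the shift equation as $\log f_\tau(z-i\pi\tau)-\log f_\tau(z+i\pi\tau)=\log(1+e^z)$. Inverting the difference operator $-2\sinh(i\pi\tau\partial_z)$ formally gives
\[
-\frac{1}{2\sinh(i\pi\tau\partial_z)}\log(1+e^z)=\sum_{n\ge0}(2\pi i\tau)^{2n-1}\frac{B_{2n}(1/2)}{(2n)!}\partial_z^{2n-1}\log(1+e^z).
\]
This uses the generating function $w/(2\sinh(w/2))=\sum B_{2n}(1/2)w^{2n}/(2n)!$ and $\partial_z\Li_2(-e^z)=-\log(1+e^z)$. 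It reproduces \eqref{gk-asymptotic}, up to a sign check on the $n=0$ term.

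For (ii), I would expand $\partial_z^{2n}\Li_2(-e^z)=-\partial_z^{2n-1}\log(1+e^z)$ via the partial fraction expansion of $\frac{d}{dz}\log(1+e^z)=\frac{1}{1+e^{-z}}$ over its poles $\xi_m(z)$. The Bernoulli coefficients then resum through the Borel transform $\tau^{2n-1}\mapsto\xi^{2n-2}/(2n-2)!$. This yields a sum over $n$ and $m$ of simple poles, which reorganizes into \eqref{gk-borel-transform}. Residues at $n\xi_m$ are read off from the term $1/(1+e^{\pm\xi/n-z})$: near $\xi=n\xi_m$ its residue is $\mp n$. Multiplying by $(-1)^n/(2\pi i n^2)$ gives \eqref{gk-residue}.

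Item (iii) is standard. Since $G_{\textup{GK}}$ has at most polynomial growth (in fact it is bounded away from the poles along rays), the Laplace integral converges when $\rmRe(\xi/\tau)>0$. Contour rotation within a pole-free sector gives independence of $\rho$.

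For (iv), crossing the ray through $n\xi_k$ for $n\ge1$ picks up residues. Their sum is
\[
\sum_n 2\pi i\cdot\frac{(-1)^{n+1}}{2\pi i n}e^{-n\xi_k/\tau}\cdot(\pm1)=\pm\log(1+e^{-\xi_k/\tau}).
\]
After matching the orientation convention of \cite{garoufalidis-2021-a73cf607}, this gives \eqref{gk-wall-crossing}. For the product branch \eqref{gk-product-branch}, I would use the standard factorization of $\Phi_b$ into $q$- and $\tilde q$-Pochhammers for $\rmIm(\tau)>0$.

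The vertical limit \eqref{gk-vertical-limit} should follow by accumulating all jumps between $\mathbb R_{>0}$ and $i\mathbb R_{>0}$. This sum of $\log(1+e^{\xi_k/\tau})$ over the relevant $k$ should exactly cancel the $\tilde q$-Pochhammer factor $-\log(-e^{-\pi i/\tau}e^{z/\tau};e^{-2\pi i/\tau})_\infty$. The cancellation also removes the $\Li_2$ term via the modular behaviour of the Pochhammer symbol.

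Item (v) is the definition of $\Phi_b$ combined with Borel summability along $\mathbb R_{>0}$, which is pole-free when $\abs{\rmIm z}<\pi$.

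The main obstacle is (iv), specifically the vertical limit. It requires controlling infinitely many wall crossings as the ray rotates toward the accumulation direction, together with the precise sign and orientation bookkeeping. I would not re-prove this. Instead I would cite \cite[Theorem~1.3, eqs.~(1.23)--(1.24)]{garoufalidis-2021-a73cf607} and only verify that the translated conventions ($q=e^{2\pi i\tau}$, the sign of $e^z$, and the branch of the logarithm) agree with those stated.
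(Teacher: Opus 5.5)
Because the paper offers no argument for this theorem beyond citing \cite{garoufalidis-2021-a73cf607}, your overall plan of citing and then checking conventions is the same as the paper's. The problem is that your convention check for (iv) does not go through, and at two places you force agreement instead of finding the mismatch.

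First, the poles $n\xi_k(z)$, $n\ge1$, on $\mathbb R_{>0}\xi_k(z)$ give the jump $\pm\log(1+e^{-\xi_k(z)/\tau})$. No orientation convention turns this into \eqref{gk-wall-crossing}, because orientation only flips the overall sign. The exponent $+\xi_k(z)/\tau$ is the jump across the opposite half-line $\mathbb R_{<0}\xi_k(z)$, which carries the poles $-N\xi_k(z)$, $N\ge1$, with residues $(-1)^N/(2\pi iN)$. This is also the half-line the paper uses later: $l_k=\pi i(t+2k)\mathbb R_{<0}$ in \zcref{primitive-resurgence}. For $\rmRe z<0$, the half-lines $\mathbb R_{<0}\xi_m(z)$ with $m\le-1$ are exactly the Stokes rays between $\mathbb R_{>0}$ and $i\mathbb R_{>0}$. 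Since $e^{\xi_m(z)/\tau}=\tilde q^{\,-m-1/2}e^{z/\tau}$ with $\tilde q=e^{-2\pi i/\tau}$, the residue theorem on the first quadrant gives
\[
(\mathcal L_{\mathbb R_{>0}}G_{\textup{GK}})(\tau,z)-(\mathcal L_{i\mathbb R_{>0}}G_{\textup{GK}})(\tau,z)=-\log\bigl(-\tilde q^{1/2}e^{z/\tau};\tilde q\bigr)_\infty .
\]

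That is everything the jumps produce, so your step ``the cancellation also removes the $\Li_2$ term via the modular behaviour of the Pochhammer symbol'' fails. Combine the display with (v) (continued to $\rmIm\tau>0$) and with \eqref{gk-product-branch}. This gives $\frac{1}{2\pi i\tau}\Li_2(-e^z)+(\mathcal L_{i\mathbb R_{>0}}G_{\textup{GK}})(\tau,z)=\log(-q^{1/2}e^z;q)_\infty$, and the $\Li_2$ term cannot be absorbed. Indeed, for $\xi=ir$ one has $\abs{e^{\pm ir/n-z}}=e^{-\rmRe z}>1$, so $G_{\textup{GK}}$ is bounded on $i\mathbb R_{>0}$. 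Watson's lemma then gives $\mathcal L_{i\mathbb R_{>0}}G_{\textup{GK}}\sim\widehat\phi_\tau(z)$ as $\tau\to0$ along $i\mathbb R_{>0}$, whereas $\log(-q^{1/2}e^z;q)_\infty\sim\frac{1}{2\pi i\tau}\Li_2(-e^z)+\widehat\phi_\tau(z)$.

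So \eqref{gk-vertical-limit} is correct only with this principal part added to the left-hand side. That is in fact how the paper uses it in the proof of \zcref{primitive-resurgence}(iv), where $\mathcal S_{\rho_k}\widetilde F$ contains $\frac1\lambda\Li_2(Q^{1/2})$. A genuine convention check should surface this rather than hide it.

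Two smaller slips:
\begin{enumerate}
\item Your display in (i) is off by an overall sign, not only at $n=0$. Inverting $-2\sinh(i\pi\tau\partial_z)$ gives $-\sum_{n\ge0}(2\pi i\tau)^{2n-1}\frac{B_{2n}(1/2)}{(2n)!}\partial_z^{2n-1}\log(1+e^z)$, which then matches \eqref{gk-asymptotic} term by term.
\item In (v), Borel summability along $\mathbb R_{>0}$ only produces some function with the right asymptotics. That this function equals $\log\Phi_b-\frac{1}{2\pi i\tau}\Li_2(-e^z)$ is the substantive content of the cited theorem, not a consequence of the definition of $\Phi_b$.
\end{enumerate}
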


The change of variables relating the crosscap series \eqref{primitive-formal-series} to Faddeev's quantum dilogarithm, in the notation of \zcref{faddeev-resurgence-input}, is
\begin{equation}\label{faddeev-change-of-variables}
  \tau=\frac{\lambda}{2\pi},\qquad
  z(t)=\pi i(t-1),\qquad
  q=e^{2\pi i\tau}=e^{i\lambda}.
\end{equation}
Then
\begin{equation}\label{faddeev-variable-identities}
  -e^{z(t)}=e^{\pi it}=Q^{1/2}.
\end{equation}
Other logarithmic representatives differ from $z(t)$ by $2\pi i$ and give the same formal coefficients.  The chosen representative lies in the strip $\abs{\rmIm(z)}<\pi$ when $0<\rmRe(t)<2$; outside this initial strip the identities below are understood by analytic continuation in $t$.

Write $\Phi(\tau,t)$ for the correction term of \eqref{primitive-formal-series} after the substitution $\lambda=2\pi\tau$, so that
\begin{equation}\label{phi-correction-term}
  \widetilde F(\lambda,t)=\frac{1}{\lambda}\Li_2(Q^{1/2})
  +\Phi\mleft(\frac{\lambda}{2\pi},t\mright).
\end{equation}
The formal comparison is then immediate.

\begin{lemma}\label{crosscap-faddeev-identification}
  For the branch $z(t)=\pi i(t-1)$ one has
  \begin{equation}\label{primitive-phihat-relation}
    \Phi(\tau,t)=i\,\widehat\phi_\tau(z(t)).
  \end{equation}
  Consequently, with $\tau=\lambda/(2\pi)$,
  \begin{equation}\label{primitive-logf-relation}
    \widetilde F(\lambda,t)\sim
    i\log f_\tau(z(t)).
  \end{equation}
\end{lemma}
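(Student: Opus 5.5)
The plan is a direct coefficient-by-coefficient comparison of the two formal series, using three elementary facts: the derivative rule for polylogarithms, the value of Bernoulli polynomials at $1/2$, and the change of variables \eqref{faddeev-change-of-variables}. Throughout, $t$ is taken in the initial strip $0<\rmRe(t)<2$, so that $z(t)$ satisfies $\abs{\rmIm(z)}<\pi$. Outside this strip, both sides of \eqref{primitive-phihat-relation} are built from the same functions $\Li_{2-2g}(Q^{1/2})$, which are single-valued (rational) in $Q^{1/2}$ for $g\ge1$. The identity therefore persists by analytic continuation, and other logarithmic representatives of $z$ give the same coefficients.

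First, I would rewrite the derivatives in \eqref{gk-formal-correction}. Since $\partial_z\Li_s(-e^z)=\Li_{s-1}(-e^z)$, induction gives
\[
  \partial_z^{2n}\Li_2(-e^z)=\Li_{2-2n}(-e^z).
\]
By \eqref{faddeev-variable-identities}, at $z=z(t)$ this equals $\Li_{2-2n}(Q^{1/2})$. Second, I would use the classical identity
\[
  B_{2n}(1/2)=-(1-2^{1-2n})B_{2n}.
\]
This follows from the multiplication formula $B_m(x/2)+B_m((x+1)/2)=2^{1-m}B_m(x)$ evaluated at $x=1$, together with $B_m(1)=B_m$ for $m\ge2$.

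Third, I would compare the prefactors term by term. The key simplification is
\[
  i\,(2\pi i)^{2n-1}=i^{2n}(2\pi)^{2n-1}=(-1)^n(2\pi)^{2n-1}.
\]
Combining this with the sign from $B_{2n}(1/2)$, the $n$-th term of $i\,\widehat\phi_\tau(z(t))$ becomes
\[
  (-1)^{n-1}(1-2^{1-2n})\frac{B_{2n}}{(2n)!}\Li_{2-2n}(Q^{1/2})\,(2\pi\tau)^{2n-1}.
\]
This is exactly the $g=n$ term of the correction in \eqref{primitive-formal-series} with $\lambda=2\pi\tau$, which proves \eqref{primitive-phihat-relation}.

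For \eqref{primitive-logf-relation}, I would multiply the asymptotic expansion \eqref{gk-asymptotic} by $i$. The leading term becomes
\[
  \frac{i}{2\pi i\tau}\Li_2(-e^{z(t)})=\frac{1}{\lambda}\Li_2(Q^{1/2}),
\]
and \eqref{primitive-phihat-relation} handles the remaining terms. Comparing with \eqref{phi-correction-term} then gives \eqref{primitive-logf-relation} as an identity of formal expansions.

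The only real obstacle is sign bookkeeping: the powers of $i$ and the sign of $B_{2n}(1/2)$ must be tracked together, since an error in either would flip alternate terms. I would check the final formula against the $n=1$ case, where $B_2(1/2)=-1/12$ and $(1-2^{-1})B_2=1/12$.
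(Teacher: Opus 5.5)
Your proposal is correct and follows the paper's own proof essentially step for step. Both use the derivative rule $\partial_z^{2n}\Li_2(-e^z)=\Li_{2-2n}(-e^z)$, the identity $B_{2n}(1/2)=-(1-2^{1-2n})B_{2n}$, the prefactor comparison $i(2\pi i)^{2n-1}B_{2n}(1/2)\tau^{2n-1}=(-1)^{n-1}(1-2^{1-2n})B_{2n}\lambda^{2n-1}$, and multiplication of the leading term of the asymptotic expansion by $i$; your derivation of the Bernoulli identity from the duplication formula and your $n=1$ check supply details the paper states without proof.
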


\begin{proof}
  By \eqref{faddeev-variable-identities}, the argument of every polylogarithm in \eqref{gk-formal-correction} is $Q^{1/2}$.  Moreover $\partial_z^m\Li_2(-e^z)=\Li_{2-m}(-e^z)$ for $m\geq1$, and
  \[
    B_{2n}(1/2)=-\mleft(1-2^{1-2n}\mright)B_{2n}.
  \]
  Hence
  \[
    i(2\pi i)^{2n-1}B_{2n}(1/2)\tau^{2n-1}
    =(-1)^{n-1}(1-2^{1-2n})B_{2n}\lambda^{2n-1},
  \]
  after substituting $\lambda=2\pi\tau$.  This gives \eqref{primitive-phihat-relation}.  Multiplying the leading term of \eqref{gk-asymptotic} by $i$ gives
  \[
    i\,\frac{1}{2\pi i\tau}\Li_2(-e^{z(t)})
    =\frac{1}{\lambda}\Li_2(Q^{1/2}),
  \]
  and \eqref{primitive-logf-relation} follows from \eqref{phi-correction-term} and \eqref{primitive-phihat-relation}.
\end{proof}

The proof of \zcref{primitive-resurgence} specifies the analytic branches and Borel rays used for the corresponding sums.

\section{Resurgence analysis}\label{resurgence-analysis}
We now fix the Borel--Laplace conventions and apply them to the formal genus expansions of the orientifolded conifold. The calculation follows the resolved-conifold strategy of \cite{alim-2023-de7fb1f0}, with \cite{pasquetti-2010-c984dedf,cousosantamaria-2017-b93f05df} as background. The primitive series $\widetilde F(\lambda,t)$ is treated first; the odd projection and the full $\SO/\Sp$ free energy are obtained from it.

\begin{definition}
  The \emph{Borel transform} is the map
  \[
    \begin{array}{rccc}
      \mathcal B\colon & z\mathbb{C}\cdbr{z} & \to &
      \mathbb{C}\cdbr{\zeta} \\
      &z^{n}            & \mapsto     & \zeta^{n-1}/(n-1)!
    \end{array}
  \]
  for $n\ge 1$. In particular, if
  \[
    \varphi(z)=\sum_{n=0}^\infty a_n z^{n+1},
  \]
  then
  \[
    \mathcal B[\varphi](\zeta)=\sum_{n=0}^\infty a_n \frac{\zeta^n}{n!}.
  \]
  The complex $\zeta$-plane is called the \emph{Borel plane}, and we also write $\widehat{\varphi}(\zeta)\coloneqq\mathcal B[\varphi](\zeta)$.
\end{definition}

\begin{definition}
  Let $\sigma>0$. We say that a formal power series $\varphi(z)$ is \emph{Gevrey-$\sigma$} if there exist two constants $M, \rho>0$ such that
  \[
    \abs{a_n}<M (n!)^\sigma\rho^n
  \]
  for all $n\geq 0$.
\end{definition}

\begin{lemma}
  If $\varphi\in\mathbb{C}\cdbr{z}$ is a Gevrey-$1$ series, then $\widehat{\varphi}$ is analytic in a neighbourhood of $\zeta=0$.
\end{lemma}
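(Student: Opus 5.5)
The plan is to bound the coefficients of $\widehat\varphi$ directly and compare with a geometric series. Write $\varphi(z)=\sum_{n\geq0}a_nz^{n+1}$. This is the form on which $\mathcal B$ is defined; a constant term, if present, is discarded or treated separately, since it does not affect analyticity. The Gevrey-$1$ hypothesis gives constants $M,\rho>0$ with $\abs{a_n}<M\,n!\,\rho^n$ for all $n\geq0$. By the definition of the Borel transform, the coefficient of $\zeta^n$ in $\widehat\varphi(\zeta)$ is $a_n/n!$. Hence
\[
  \abs*{\frac{a_n}{n!}}<M\rho^n
  \qquad\text{for all } n\geq0 .
\]

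Comparing with the geometric series $\sum_n M\rho^n\abs{\zeta}^n$, the power series $\widehat\varphi(\zeta)=\sum_n (a_n/n!)\zeta^n$ converges absolutely and locally uniformly on the disc $\abs{\zeta}<1/\rho$. Equivalently, the Cauchy--Hadamard formula gives radius of convergence at least
\[
  \Bigl(\limsup_{n\to\infty}\abs{a_n/n!}^{1/n}\Bigr)^{-1}\geq 1/\rho .
\]
A convergent power series is holomorphic on its disc of convergence, so $\widehat\varphi$ is analytic on a neighbourhood of $\zeta=0$, namely the disc of radius $1/\rho$.

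The only step needing care is the indexing: the Gevrey bound must be applied to the same $a_n$ that appear in the Borel convention $z^{n+1}\mapsto\zeta^n/n!$. With the opposite indexing one would instead get $a_n/(n-1)!$, which is bounded by $M\,n\rho^n$. The extra polynomial factor $n$ does not change the radius of convergence, since $n^{1/n}\to1$, so the conclusion is unaffected either way. I do not expect any genuine obstacle here.
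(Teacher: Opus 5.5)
Your proof is correct. The paper states this lemma without proof, as a standard fact, and your argument is the standard justification: the bound $\abs{a_n/n!}<M\rho^n$ and comparison with a geometric series give radius of convergence at least $1/\rho$. Your remark on the alternative indexing, where the bound becomes $M n\rho^n$, also covers an ambiguity in the paper's definition of Gevrey-$\sigma$, which does not specify which indexing of $a_n$ is meant.
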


\begin{definition}
  A Gevrey-$1$ asymptotic series $\varphi(z)$ is called \emph{resurgent} if its Borel transform $\widehat{\varphi}(\zeta)$ admits endless analytic continuation. When, in addition, the singularities of $\widehat{\varphi}$ consist only of simple poles and logarithmic branch points, we speak of a \emph{simple resurgent} series.
\end{definition}

\begin{definition}
  Let $\zeta_0$ be a singular point of $\widehat{\varphi}$. A ray in the Borel plane that starts at the origin and passes through $\zeta_0$ is called a \emph{Stokes ray}.
\end{definition}

For a ray $\rho\subset\mathbb{C}^\times$, let
\[
  \mathbb H_\rho
  =
  \setc*{w\in\mathbb{C}^\times}{\rmRe(\zeta/w)>0},
  \qquad \zeta\in\rho .
\]
This condition is independent of the choice of $\zeta\in\rho$.  It is the convergence half-plane for the directional Laplace kernel $e^{-\zeta/w}$.

For an ordinary Gevrey-$1$ formal power series $\varphi(w)=\sum_{n\geq0}a_nw^{n+1}$, Borel--Laplace summation in a non-singular direction $\rho$ means integrating its analytically continued Borel transform:
\[
  \mathcal S_\rho\varphi(w)\coloneqq
  \int_\rho e^{-\zeta/w}\widehat{\varphi}(\zeta)\,d\zeta .
\]
Whenever this integral is defined, its asymptotic expansion as $w\to 0$ is obtained by expanding the integrand and integrating termwise:
\[
  \mathcal S_\rho\varphi(w) \sim \sum_{n=0}^\infty a_n w^{n+1}.
\]
Thus the Borel--Laplace sum recovers the original formal power series as an asymptotic expansion.  The Borel transform typically has singularities on certain rays, and along those directions the integral fails to make sense directly.  Varying the integration ray without crossing singularities changes the sum analytically; crossing a Stokes ray produces a jump.

The free energies below have finite Laurent principal parts rather than being ordinary power series.  We therefore use the following convention throughout.
\begin{definition}
  \label{sectorial-sum-finite-principal-part}
  Let
  \[
    \widetilde\varphi(w)=\varphi_{\mathrm{prin}}(w)
    +\varphi_{\mathrm{reg}}(w),\qquad
    \varphi_{\mathrm{prin}}(w)\in\mathbb C[w^{-1}],\quad
    \varphi_{\mathrm{reg}}(w)\in w\mathbb C\cdbr{w} .
  \]
  If the regular part is Borel summable in a non-singular direction $\rho$, its sectorial sum with finite Laurent principal part is
  \[
    \mathcal S_\rho\widetilde\varphi(w)
    \coloneqq\varphi_{\mathrm{prin}}(w)
    +\int_\rho
    e^{-\zeta/w}\,\mathcal B[\varphi_{\mathrm{reg}}](\zeta)\,d\zeta .
  \]
  Thus the finite Laurent part is kept explicitly, while Borel--Laplace summation is applied only to the regular Gevrey part.
\end{definition}

\begin{definition}
  Let $\varphi(z)$ be a resurgent function whose Borel transform $\widehat{\varphi}$ has only simple poles on a fixed Stokes ray $\rho=e^{i\theta}\mathbb{R}_{>0}$. For each such pole $\zeta_\omega$, write
  \[
    \widehat{\varphi}(\zeta)=\frac{b_\omega}{\zeta-\zeta_\omega}
    +\text{holomorphic at }\zeta_\omega,
  \]
  and define the associated \emph{Stokes constant} by $S_\omega\coloneqq-2\pi i\,b_\omega$. Let $\rho_\pm=e^{i(\theta\pm\varepsilon)}\mathbb{R}_{>0}$ for some small $\varepsilon>0$, so that $\rho_+-\rho_-$ is a small clockwise contour around $\rho$. The associated discontinuity is the difference of the two adjacent sectorial sums:
  \[
    \disc_\theta \varphi(w)\coloneqq
    \mathcal S_{\rho_+}\varphi(w)-\mathcal S_{\rho_-}\varphi(w)
    =-2\pi
    i\sum_\omega\Res\mleft(e^{-\zeta/w}\widehat{\varphi}(\zeta),\zeta_\omega\mright)
    =\sum_\omega S_\omega e^{-\zeta_\omega/w},
  \]
  where the sum runs over the simple poles $\zeta_\omega$ of $\widehat{\varphi}$ lying on $\rho$.
\end{definition}

Throughout the resurgence statements, $Q^{1/2}$ denotes $e^{\pi it}$ for the chosen additive lift $t$. Sectorial sums are taken in non-Stokes directions in the sense of \zcref{sectorial-sum-finite-principal-part}. The logarithm and fractional-power branches in the multiple-sine formulae are fixed by the displayed contour or product representations on the positive-real sector and continued within the indicated non-Stokes sector.

We use Bridgeland's double-sine function from \cite[Section~4.1]{bridgeland-2020-7a4fdc8b}:

\begin{definition}
  For $z \in \mathbb{C}$ and $\omega_1, \omega_2 \in \mathbb{C}^{\times}$, we define
  \[
    F_2\vargs{z}{\omega_1, \omega_2}\coloneqq\exp \mleft(-\frac{\pi
    i}{2} B_{2,2}\vargs{z}{\omega_1, \omega_2}\mright)\sin_2\vargs{z}{\omega_1, \omega_2}
  \]
  where $\sin_2\vargs{z}{\omega_1, \omega_2}$ denotes the double sine function, and
  \[
    B_{2,2}\vargs{z}{\omega_1, \omega_2}= \frac{z^2}{\omega_1
    \omega_2}-\left(\frac{1}{\omega_1}+\frac{1}{\omega_2}\right)
    z+\frac{1}{6}\left(\frac{\omega_2}{\omega_1}+\frac{\omega_1}{\omega_2}\right)+\frac{1}{2}
  \]
  is the multiple Bernoulli polynomial.
\end{definition}

\begin{proposition}[\cite[Proposition~4.1]{bridgeland-2020-7a4fdc8b}]\label{f2-bridgeland-properties}
  Assume $\omega_1/\omega_2\notin\mathbb{R}_{<0}$.
  Then $F_2\vargs{z}{\omega_1,\omega_2}$ is a single-valued meromorphic function of $z\in\mathbb{C}$ and $\omega_1,\omega_2\in\mathbb{C}^\times$ with the following properties.
  \begin{enumerate}
    \item It is holomorphic and non-vanishing away from the lattice $\mathbb Z\omega_1+\mathbb Z\omega_2$. At a lattice point $z=a\omega_1+b\omega_2$, with $a,b\in\mathbb Z$, it has a zero if $a,b\le0$, a pole if $a,b>0$, and neither otherwise.
    \item It is invariant under simultaneous rescaling:
      \[
        F_2\vargs{\alpha z}{\alpha\omega_1,\alpha\omega_2}
        =F_2\vargs{z}{\omega_1,\omega_2}
      \]
      for $\alpha\in\mathbb{C}^\times$.
    \item If $\rmRe(\omega_i)>0$ for $i=1,2$ and $0<\rmRe(z)<\rmRe(\omega_1+\omega_2)$, then
      \begin{equation}\label{f2-integral-representation}
        F_2\vargs{z}{\omega_1,\omega_2}
        =
        \exp\mleft(
          \int_{\mathbb R+i0}
          \frac{e^{zs}}{\left(e^{\omega_1s}-1\right)\left(e^{\omega_2s}-1\right)}
          \frac{ds}{s}
        \mright),
      \end{equation}
      where $\mathbb R+i0$ denotes the real axis with a small upper half-plane detour around the origin.
  \end{enumerate}
\end{proposition}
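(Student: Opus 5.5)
The plan is to derive all three properties from the standard Barnes description of the double sine function, $\sin_2\vargs{z}{\omega_1,\omega_2}=\Gamma_2\vargs{\omega_1+\omega_2-z}{\omega_1,\omega_2}\,\Gamma_2\vargs{z}{\omega_1,\omega_2}^{-1}$. Here $\Gamma_2$ is the Barnes double gamma function, defined by zeta-regularization from $\zeta_2(s,z)=\sum_{a,b\geq0}(z+a\omega_1+b\omega_2)^{-s}$. The exponential prefactor in $F_2$ is entire and non-vanishing, and it only fixes a normalization. I would establish property (iii) first, on the model region $\rmRe(\omega_i)>0$. Then I would use (ii) to transport to arbitrary $\omega_1,\omega_2$ with $\omega_1/\omega_2\notin\mathbb R_{<0}$. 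Finally I would read off (i) from the Barnes product.

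\textbf{Integral representation.} Take $\rmRe(\omega_i)>0$ and $0<\rmRe(z)<\rmRe(\omega_1+\omega_2)$. The integrand $e^{zs}/\bigl((e^{\omega_1s}-1)(e^{\omega_2s}-1)s\bigr)$ decays exponentially as $s\to\pm\infty$ along $\mathbb R+i0$, so the integral $I(z)$ converges and is holomorphic in the strip. Define $G(z)=\exp I(z)$. A direct computation gives
\[
  I(z+\omega_1)-I(z)=\int_{\mathbb R+i0}\frac{e^{zs}}{e^{\omega_2s}-1}\frac{ds}{s}.
\]
This is a one-dimensional Fourier-type integral, which I would evaluate by closing the contour and summing residues at $s\in 2\pi i\omega_2^{-1}\mathbb Z$. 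The result should be $-\log\bigl(1-e^{2\pi i z/\omega_2}\bigr)$, up to a quadratic polynomial in $z$ coming from the residue at $s=0$. That polynomial is the coefficient extraction $\tfrac{1}{2}\partial$-terms of $e^{zs}s/(e^{\omega_2 s}-1)$, i.e.\ Bernoulli polynomials $B_1,B_2$. Symmetrically in $\omega_2$, $F_2$ must satisfy the same pair of difference equations. This follows from the corresponding equations $\sin_2(z+\omega_1)=\sin_2(z)/(2\sin(\pi z/\omega_2))$ and from the explicit difference of $B_{2,2}$ under $z\mapsto z+\omega_1$. For generic $\omega_1/\omega_2\notin\mathbb R$, a meromorphic function that is doubly quasi-periodic in this sense and has controlled growth is determined up to a constant. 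I would fix that constant at the symmetric point $z=(\omega_1+\omega_2)/2$. There $B_{2,2}$ is explicit and $\sin_2=1$ by the reflection $\sin_2(z)\sin_2(\omega_1+\omega_2-z)=1$. The integral is evaluated by the symmetry $s\mapsto -s$ together with the half-residue at $0$. The real-ratio case then follows by continuity in $\omega$.

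\textbf{Scaling and single-valuedness.} $B_{2,2}$ is visibly homogeneous of degree $0$ under $(z,\omega_1,\omega_2)\mapsto\alpha(z,\omega_1,\omega_2)$. The Barnes double sine is also degree-$0$ homogeneous: the $\alpha$-dependence of $\Gamma_2$ is $\alpha^{-\zeta_2(0,z)}$, and $\zeta_2(0,z)=\tfrac12B_{2,2}$ is antisymmetric under $z\mapsto\omega_1+\omega_2-z$, so the two $\Gamma_2$ factors' scaling cancels. This is precisely the place where the branch of $\log\alpha$ enters. I expect this to be the \emph{main obstacle}: the multivaluedness of $\Gamma_2$ in $\omega$ must cancel, so that $F_2$ is single-valued on the set $\{\omega_1/\omega_2\notin\mathbb R_{<0}\}$.

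My first attempt at this step is as follows. Every pair with $\omega_1/\omega_2\notin\mathbb R_{<0}$ can be rotated by some $\alpha$ into the region $\rmRe(\omega_i)>0$. On the overlap of two such rotations, the integral formula (iii) is manifestly invariant under $s\mapsto s/\alpha$, provided the rotated contour can be deformed back to $\mathbb R+i0$ without crossing poles. So $F_2$ can be \emph{defined} by (iii) after rotation, and (ii) and single-valuedness follow from the connectedness of the parameter space. Meromorphic continuation in $z$ from the strip uses the difference equations from the previous step.

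\textbf{Zeros and poles.} Finally, $\Gamma_2(z)$ has simple poles exactly at $z=-a\omega_1-b\omega_2$ with $a,b\geq0$, and no zeros. Hence $\Gamma_2(z)^{-1}$ vanishes at $z=a\omega_1+b\omega_2$ with $a,b\leq0$. Likewise $\Gamma_2(\omega_1+\omega_2-z)$ has poles at $z=a\omega_1+b\omega_2$ with $a,b\geq1$. The exponential factor contributes neither zeros nor poles. Off the lattice, $F_2$ is therefore holomorphic and non-vanishing, which gives (i).
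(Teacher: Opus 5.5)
The paper gives no argument for this proposition; it quotes Bridgeland's Proposition~4.1. Your proposal is therefore a self-contained reconstruction rather than a variant of a proof in the text. The route is workable:
\begin{enumerate}
  \item prove (iii) on $\mathrm{Re}\,\omega_i>0$ by matching difference equations, applying Liouville, and checking the value at $z=(\omega_1+\omega_2)/2$;
  \item obtain rescaling;
  \item read off zeros and poles from $\Gamma_2$.
\end{enumerate}
Your normalization check is correct: at the symmetric point both sides equal $\exp\bigl(\tfrac{\pi i}{24}(\omega_1/\omega_2+\omega_2/\omega_1)\bigr)$.

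\textbf{The main error.} The step you single out as the main obstacle is justified by a false fact. $\zeta_2(0,z)=\tfrac12B_{2,2}(z\,|\,\omega_1,\omega_2)$ is \emph{symmetric} under $z\mapsto\omega_1+\omega_2-z$, not antisymmetric, since $B_{2,2}(z)=z(z-\omega_1-\omega_2)/(\omega_1\omega_2)+\mathrm{const}$. Symmetry is exactly what you need. For the ratio $\Gamma_2(\omega_1+\omega_2-z)/\Gamma_2(z)$ the scaling factor is $\alpha^{\zeta_2(0,z)-\zeta_2(0,\omega_1+\omega_2-z)}$, which antisymmetry would make $\alpha^{B_{2,2}(z)}\neq1$.

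With the correct fact, single-valuedness is no obstacle:
\begin{enumerate}
  \item Changing the branch of $\log(z+a\omega_1+b\omega_2)$ for finitely many $(a,b)$ alters $\partial_s\zeta_2(s,z)|_{s=0}$ by multiples of $2\pi i$. It therefore leaves $\Gamma_2$ unchanged.
  \item The loop $(z,\omega_1,\omega_2)\mapsto e^{i\theta}(z,\omega_1,\omega_2)$, $0\le\theta\le2\pi$, generates the fundamental group of the domain $\{\omega_1/\omega_2\notin\mathbb R_{<0}\}$. It shifts every logarithm by $2\pi i$ and multiplies $\Gamma_2(z)$ by $e^{-2\pi i\zeta_2(0,z)}$.
  \item This monodromy cancels in $\sin_2$ by the same symmetry.
\end{enumerate}

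\textbf{The rotated-integral fallback.} Simply \emph{defining} $F_2$ by the rotated integral proves nothing about the Barnes-defined $F_2$ in the statement. You must first identify the two off the model region. Your own uniqueness argument does this. For $\omega_1/\omega_2\notin\mathbb R$, the rotated integral and $e^{-\pi iB_{2,2}/2}\sin_2$ satisfy the same branch-free difference equations. They also agree at the symmetric point. Also, the rotated integral is well defined because, for fixed $\omega$, the admissible rotation angles form an interval. The contour can be turned through that interval without crossing poles. Connectedness of the parameter space is not the reason; that space is not simply connected.

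\textbf{Smaller corrections.}
\begin{enumerate}
  \item The one-period integral has no polynomial correction. The origin lies below $\mathbb R+i0$, so closing upward picks up only the residues at $2\pi in/\omega_2$, $n\ge1$. This is valid first where $|e^{2\pi iz/\omega_2}|<1$, then by continuation. It gives exactly $\int_{\mathbb R+i0}\frac{e^{zs}}{e^{\omega_2s}-1}\frac{ds}{s}=-\log\bigl(1-e^{2\pi iz/\omega_2}\bigr)$, consistent with the paper's $E_1$ formula.
  \item On the Barnes side, $\exp\bigl(-\tfrac{\pi i}{2}(B_{2,2}(z+\omega_1)-B_{2,2}(z))\bigr)=ie^{-\pi iz/\omega_2}$. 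This converts $1/(2\sin(\pi z/\omega_2))$ into exactly $1/(1-e^{2\pi iz/\omega_2})$. A leftover polynomial would break the matching, so compute this rather than hedge.
  \item At the symmetric point, $\sin_2=1$ because it is a ratio of two equal $\Gamma_2$ values. The reflection formula alone only gives $\pm1$.
  \item The poles of $\Gamma_2$ are not simple when $\omega_1/\omega_2\in\mathbb Q_{>0}$. This is harmless, since (i) asserts no multiplicities.
  \item What you do need, and omit, is that the zero set $-\mathbb Z_{\ge0}\omega_1-\mathbb Z_{\ge0}\omega_2$ and the pole set $\omega_1+\omega_2+\mathbb Z_{\ge0}\omega_1+\mathbb Z_{\ge0}\omega_2$ are disjoint, so nothing cancels. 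A common point would force $(a+a')\omega_1+(b+b')\omega_2=0$ with positive integer coefficients. That is exactly what $\omega_1/\omega_2\notin\mathbb R_{<0}$ rules out.
\end{enumerate}
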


\begin{lemma}
  Let $b>0$, put $\tau=b^2$, and let $z\in\mathbb{C}$ satisfy $\abs{\rmIm(z)}<\pi$. Define $\Phi_b$, as in \cite[eq.~(1.5)]{garoufalidis-2021-a73cf607}, by
  \[
    \Phi_b(u)=
    \exp\mleft(
      \int_{\mathbb R+i0}
      \frac{e^{-2ixu}}{4\sinh(bx)\sinh(b^{-1}x)}
      \frac{dx}{x}
    \mright)
  \]
  and use the above definition of $F_2$. Then
  \begin{equation}\label{f2-phi-bridge-formula}
    \Phi_b\mleft(\frac{z}{2\pi b}\mright)
    =
    F_2\vargs*{\frac{1+\tau}{2}+\frac{z}{2\pi i}}{\tau,1}.
  \end{equation}
  In particular, for $z(t)=\pi i(t-1)$,
  \begin{equation}\label{f2-phi-bridge-specialization}
    \Phi_b\mleft(\frac{z(t)}{2\pi b}\mright)
    =
    F_2\vargs*{\frac{t+\tau}{2}}{\tau,1}.
  \end{equation}
\end{lemma}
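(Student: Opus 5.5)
The plan is to compare the two integral representations directly: substitute variables in the definition of $\Phi_b$ until its integrand has the shape of \eqref{f2-integral-representation} with periods $\omega_1=\tau$, $\omega_2=1$. Then \zcref{f2-bridgeland-properties}(iii) gives the identity, first on a strip of $z$ and then everywhere by analytic continuation.

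\textbf{Step 1: rewrite the kernel.} In the integral for $\Phi_b(u)$ with $u=z/(2\pi b)$, substitute $x=\pi b\,s$, so that $dx/x=ds/s$. Then
\[
  bx=\pi\tau s,\qquad b^{-1}x=\pi s,\qquad -2ixu=-\frac{i z s}{2}\cdot\pi\cdot\frac{1}{\pi}\cdot 2\cdot\frac12 .
\]
It is cleaner to write $-2ixu=-\tfrac{i}{\pi}\,z\cdot\pi s\cdot\tfrac{1}{1}$ and then rescale once more, $s\mapsto s/(2\pi i)$, so that the denominator becomes a product of exponentials. We then use
\[
  4\sinh(\pi\tau s)\sinh(\pi s)=e^{-\pi(1+\tau)s}\,(e^{2\pi\tau s}-1)(e^{2\pi s}-1).
\]
With $s'=2\pi s$, the integrand becomes
\[
  \frac{e^{c\,s'}}{(e^{\tau s'}-1)(e^{s'}-1)}\,\frac{ds'}{s'},
  \qquad
  c=\frac{1+\tau}{2}+\frac{z}{2\pi i}.
\]
Here the exponent from $e^{-2ixu}$ contributes $-iz s'/(2\pi)=z s'/(2\pi i)$, and the prefactor contributes $(1+\tau)s'/2$. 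These are exactly the terms of $c$, the argument on the right of \eqref{f2-phi-bridge-formula}.

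\textbf{Step 2: check the contour and orientation.} The substitutions are by positive real scalings, so $\mathbb R+i0$ maps to itself with the same upper detour around $0$. No extra rotation is needed, because the sinh-to-exponential rewriting already absorbed the factor $i$. I would double-check this sign carefully; it is the main place an error could enter. If a sign flips, the alternative is $s'\mapsto -s'$ combined with the reflection behaviour of $F_2$, and the bookkeeping must still land on $c$.

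\textbf{Step 3: apply \zcref{f2-bridgeland-properties}(iii).} The periods $\omega_1=\tau>0$ and $\omega_2=1$ have positive real part. The strip condition $0<\rmRe(c)<1+\tau$ reads
\[
  -\frac{1+\tau}{2}<\frac{\rmIm z}{2\pi}<\frac{1+\tau}{2}.
\]
This contains $\abs{\rmIm z}<\pi$, so \eqref{f2-integral-representation} holds there and gives \eqref{f2-phi-bridge-formula}. Convergence of the $\Phi_b$ integral needs exactly this strip, so both sides are defined on it.

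\textbf{Step 4: the specialization.} For \eqref{f2-phi-bridge-specialization}, substitute $z(t)=\pi i(t-1)$. Then $z/(2\pi i)=(t-1)/2$, and
\[
  c=\frac{1+\tau}{2}+\frac{t-1}{2}=\frac{t+\tau}{2}.
\]

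The only real obstacle is the exponent and orientation bookkeeping in Step 1–2. Everything else is the direct invocation of \zcref{f2-bridgeland-properties}.
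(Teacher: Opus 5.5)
Your proof is correct and is essentially the paper's: the composite substitution $x=\pi b s$, $s'=2\pi s$ is the paper's change of variables $s'=2x/b$. After it, both sides are exponentials of the same integral \eqref{f2-integral-representation} with $\omega_1=\tau$, $\omega_2=1$, on the strip $0<\rmRe(c)<1+\tau$, which contains $\abs{\rmIm(z)}<\pi$. In a write-up, delete the stray material in Step~1: the first display for $-2ixu$ evaluates to $-izs/2$ rather than the correct $-izs$, and the suggested rescaling $s\mapsto s/(2\pi i)$ would rotate the contour off $\mathbb R+i0$; neither is used in your final computation, which is correct.
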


\begin{proof}
  Set
  \[
    a=\frac{1+\tau}{2}+\frac{z}{2\pi i}.
  \]
  Since $\tau>0$ and $\abs{\rmIm(z)}<\pi$, we have $0<\rmRe(a)<\tau+1$. Taking the logarithm branch fixed by \eqref{f2-integral-representation} gives
  \[
    \log F_2\vargs{a}{\tau,1}
    =
    \int_{\mathbb R+i0}
    \frac{e^{as}}{(e^{\tau s}-1)(e^s-1)}
    \frac{ds}{s}.
  \]
  In this integral put $s=2x/b$. Then
  \[
    \frac{e^{as}}{(e^{\tau s}-1)(e^s-1)}
    =
    \frac{e^{-ixz/(\pi b)}}{4\sinh(bx)\sinh(b^{-1}x)},
  \]
  and $ds/s=dx/x$. This is the logarithm of $\Phi_b(z/(2\pi b))$ from the displayed definition, giving \eqref{f2-phi-bridge-formula}. The specialization $z=z(t)$ gives \eqref{f2-phi-bridge-specialization}.
\end{proof}

The main theorem of this section is the following.
\begin{theorem}\label{primitive-resurgence}
  Consider the formal series
  \begin{align*}
    \widetilde{F}(\lambda,t)&\coloneqq
    \frac{1}{\lambda}\Li_2(Q^{1/2})+\sum_{g=1}^\infty
    \lambda^{2g-1}\frac{(-1)^{g-1}(1-2^{1-2g})B_{2g}}{(2g)!}\Li_{2-2g}(Q^{1/2})\\
    &=\frac{1}{\lambda}\Li_2(Q^{1/2})+\Phi(\check{\lambda},t),
  \end{align*}
  with $\check{\lambda}\coloneqq\lambda/(2\pi)$ and $Q\coloneqq e^{2\pi i t}$. Then the following statements hold.
  \begin{enumerate}
    \item (Borel transform) Suppose $t\in\mathbb{C}^\times$ satisfies $\abs{\rmRe(t)}<1$, and write $G(\xi,t)\coloneqq \mathcal B[\Phi(-,t)](\xi)$ for the Borel transform of $\Phi(\check{\lambda},t)$. The series $G(\xi,t)$ is convergent in the disk $\abs{\xi}< \pi\abs{t}$. Moreover, $G(\xi,t)$ admits the series expansion
      \[
        G(\xi,t)=\frac{1}{2 \pi} \sum_{m=1}^{\infty}
        \frac{(-1)^m}{m^2}\left(\frac{1}{1-e^{-\pi i t+\xi /
        m}}+\frac{1}{1-e^{-\pi i t-\xi / m}}\right).
      \]
      This representation enables analytic continuation of $G(\xi,t)$ in the variable $\xi$ to a meromorphic function whose singularities are simple poles located at $\xi=\pi i (t+2k)m$ for $k\in\mathbb{Z}$ and $m\in\mathbb{Z}\setminus\{0\}$. The labelled principal part at $\xi=\pi i(t+2k)m$ has residue
      \[
        \Res(G(\xi,t),\pi i(t+2k)m)
        =
        \frac{(-1)^{m+1}}{2\pi m}.
      \]
      If several labels determine the same Borel point, the residue there is the sum of the corresponding labelled residues.

    \item (Sectorial sum) For each $k\in\mathbb{Z}$, set $l_k\coloneqq\pi i (t+2k) \mathbb{R}_{<0}$ and $l_\infty\coloneqq i \mathbb{R}_{<0} $. For any ray $\rho$ emanating from the origin to infinity that avoids $\{ \pm l_k\}_{k \in \mathbb{Z}} \cup\{ \pm l_{\infty}\}$, and for $\lambda$ belonging to the half-plane $\mathbb H_\rho$ centered at $\rho$, the sectorial sum of $\widetilde{F}(\lambda,t)$ along $\rho$, in the sense of \zcref{sectorial-sum-finite-principal-part}, is given by
      \[
        \mathcal S_\rho\widetilde F(\lambda,t)
        \coloneqq\frac{1}{\lambda}\Li_2(Q^{1/2})+\int_\rho
        e^{-\xi / \check{\lambda}} G(\xi, t)d\xi.
      \]
      When $\rho=\mathbb{R}_{>0}$, suppose that $0<\rmRe(t)<2$, $\rmIm(t)>0$, and that $\lambda$ belongs to the sector bounded by $l_0$ and $l_{-1}$. Under the additional condition $\rmRe(t)<\rmRe(\check{\lambda}+2)$, we obtain the identity
      \[
        \mathcal S_{\mathbb{R}_{>0}}\widetilde F(\lambda, t)=i\log
        F_2\vargs*{\frac{t}{2}+\frac{\check{\lambda}}{2}}{\check{\lambda}, 1}.
      \]
    \item (Stokes jumps) Let $\rho_k$ denote a ray lying in the sector bounded by the Stokes rays $l_k$ and $l_{k-1}$. When $\rmIm(t)>0$, the following relation holds on the intersection of their domains of definition in $\lambda$:
      \[
        \mathcal S_{\pm\rho_k}\widetilde F(\lambda,t)
        -\mathcal S_{\pm\rho_{k+1}}\widetilde F(\lambda,t)
        =\disc_\theta
        \widetilde{F}(\lambda,t)=\pm i \log \mleft(1+e^{\pm\pi i(t+2
        k) / \check{\lambda}}\mright)
      \]
      Here $\theta=\arg(\mp\pi i (t+2k))$.
    \item (Limits to $l_\infty$) Let $\rho_k$ denote a ray contained in the sector bounded by the Stokes rays $l_k$ and $l_{k-1}$. Assuming the conditions on $t$ and $\lambda$ from \textup{(ii)} are satisfied, we have
      \[
        \lim _{k \to\infty} \mathcal S_{\rho_k}\widetilde F(\lambda,
        t)=\sum_{k=1}^\infty\frac{1}{k}\frac{e^{\pi ikt}}{2\sin(\lambda k/2)}.
      \]
  \end{enumerate}
\end{theorem}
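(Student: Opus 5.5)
The plan is to transport everything from \zcref{faddeev-resurgence-input} through \zcref{crosscap-faddeev-identification}, using the substitution \eqref{faddeev-change-of-variables}: $\tau=\check\lambda$ and $z=\pi i(t-1)$. By \eqref{primitive-phihat-relation} we have $\Phi(\check\lambda,t)=i\,\widehat\phi_{\check\lambda}(z(t))$. Since the Borel transform is linear and commutes with this identification of the variable, $G(\xi,t)=i\,G_{\textup{GK}}(\xi,z(t))$.

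For (i), substitute $e^{-z}=-e^{-\pi i t}$ into \eqref{gk-borel-transform}. Each factor $1/(1+e^{\pm\xi/n-z})$ becomes $1/(1-e^{-\pi i t\pm\xi/n})$, and $i/(2\pi i)=1/(2\pi)$, which gives the displayed series. Its radius of convergence is $\pi-\abs{\rmIm z}=\pi-\pi\abs{\rmRe(t)-1}$. We must check that this matches $\pi\abs t$ on the stated domain; I expect some care is needed here, and if necessary I will estimate directly from the distance to the nearest pole $\pi i t$. The poles $n\xi_m(z)=n\pi i(t+2m)$ are read off from the same formula. The residue is $i\cdot(-1)^{n+1}/(2\pi i n)=(-1)^{n+1}/(2\pi n)$ by \eqref{gk-residue}. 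Coincident labels add, since principal parts are additive.

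For (ii), the Laplace integral in the variable $\check\lambda$ is exactly $i$ times \eqref{gk-laplace-transform}. The lines $\mathbb R\xi_m(z)$ are the rays $\pm l_k$. The accumulation direction of these rays as $k\to\pm\infty$ is $\pm l_\infty$, so that direction must be excluded as well. For $\rho=\mathbb R_{>0}$, \eqref{gk-positive-faddeev} together with \eqref{f2-phi-bridge-specialization} gives
\[
  i\log F_2\vargs*{\tfrac{t+\check\lambda}{2}}{\check\lambda,1}
  =\tfrac{1}{\lambda}\Li_2(Q^{1/2})+\mathcal S_{\mathbb R_{>0}}\Phi.
\]
This holds initially for $\check\lambda>0$ and $\abs{\rmIm z}<\pi$, which is equivalent to $0<\rmRe t<2$. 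Both sides then extend analytically in $\lambda$ across the sector between $l_0$ and $l_{-1}$. The left side extends by \zcref{f2-bridgeland-properties}, using the integral representation \eqref{f2-integral-representation}, valid for $0<\rmRe(a)<\rmRe(\check\lambda+1)$. This is where the extra condition $\rmRe t<\rmRe(\check\lambda+2)$ enters, together with $\rmRe(t+\check\lambda)>0$. Uniqueness of analytic continuation then gives the identity.

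For (iii), compute the jump as $-2\pi i$ times the sum of the residues of $e^{-\xi/\check\lambda}G$ at the poles on the crossed ray $\mp l_k$, that is, at $\xi=m\pi i(t+2k)$ with $\pm m>0$. This gives
\[
  \mp i\sum_{m\geq1}\tfrac{(-1)^{m+1}}{m}e^{\mp m\pi i(t+2k)/\check\lambda}\cdot(-1),
\]
which, with the orientation convention of the discontinuity, sums to $\pm i\log(1+e^{\pm\pi i(t+2k)/\check\lambda})$. Equivalently, this is $i$ times \eqref{gk-wall-crossing}. Convergence of the logarithmic series needs $\abs{e^{\pm\pi i(t+2k)/\check\lambda}}<1$, which holds on the overlap of the two half-planes. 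The assumption $\rmIm t>0$ fixes the sign of $m$ on each ray. The step I expect to be the main obstacle is bookkeeping the orientation, meaning which of $\rho_k,\rho_{k+1}$ lies counterclockwise of the Stokes ray.

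For (iv), telescope the jumps of (iii) from $\rho_0$ or $\rho_1=\mathbb R_{>0}$ out to $\rho_k$. As $k\to\infty$, $\mathcal S_{\rho_k}$ tends to $\mathcal S_{\mathbb R_{>0}}$ plus the infinite sum of the jumps. That limit equals $i$ times the vertical limiting identity \eqref{gk-vertical-limit}, combined with the leading $\Li_2$ term. This in turn equals $i\log(Q^{1/2}e^{i\lambda/2};e^{i\lambda})_\infty=F^+(\lambda,t)$, and by \eqref{primitive-sectorial-series} that is the claimed sum. It remains to check that the rays $\rho_k$ approach $l_\infty$ on the side matching the vertical direction $i\mathbb R_{>0}$ in $\tau$. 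It also remains to check that $\lambda$ stays in the overlapping half-planes, which is where the conditions from (ii) are used.
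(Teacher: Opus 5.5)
Your route is the paper's: transport the quantum-dilogarithm theorem via $G=iG_{\textup{GK}}$ and multiply by $i$. However, part~(iv) fails as written. As $k\to+\infty$ the rays $l_k=\pi i(t+2k)\mathbb R_{<0}$ tend to $l_\infty=i\mathbb R_{<0}$, which is the \emph{negative} vertical direction. So the half-planes $\mathbb H_{\rho_k}$ tend to $\{\rmIm\lambda<0\}$. Your own telescoping confirms this: $\sum_{j\ge0}\log(1+e^{\pi i(t+2j)/\check\lambda})$ converges only if $\abs{e^{2\pi i/\check\lambda}}<1$, i.e.\ $\rmIm\check\lambda<0$. (Note also that $\mathbb R_{>0}$ lies between $l_0$ and $l_{-1}$, so it is a $\rho_0$, not a $\rho_1$.) The check you deferred therefore fails. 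The limit cannot be $F^+(\lambda,t)=i\log(Q^{1/2}e^{i\lambda/2};e^{i\lambda})_\infty$, since that product diverges for $\rmIm\lambda<0$.

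The missing step is the reflection $\check\lambda\mapsto-\check\lambda$. Your series in (i) shows that $G(\cdot,t)$ is even in $\xi$, and $\lambda^{-1}\Li_2(Q^{1/2})$ is odd in $\lambda$, so $\mathcal S_{-\rho}\widetilde F(-\lambda,t)=-\mathcal S_\rho\widetilde F(\lambda,t)$. Apply this to $i$ times the positive vertical identity \eqref{gk-vertical-limit}, with the $\Li_2$ term included as you note. The limit becomes $-i\log(e^{\pi it-\pi i\check\lambda};e^{-2\pi i\check\lambda})_\infty=F^-(\lambda,t)$, which for $\abs{e^{-i\lambda}}<1$ expands to the claimed series. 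This is exactly the paper's argument.

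Part~(iii) has the orientation slip you anticipated. For $\rmIm t>0$ one has $\arg l_k=\arg(t+2k)-\pi/2$, which decreases in $k$. So $\rho_k$ lies counterclockwise of $l_k$, and $\mathcal S_{\rho_k}-\mathcal S_{\rho_{k+1}}$ is $\disc_\theta$ with $\theta=\arg l_k$. The upper sign therefore crosses $l_k$ itself. There the poles are $-n\pi i(t+2k)$, $n\ge1$, with residues $(-1)^n/(2\pi n)$. Hence $-2\pi i\sum_{n\ge1}\frac{(-1)^n}{2\pi n}e^{n\pi i(t+2k)/\check\lambda}=i\log(1+e^{\pi i(t+2k)/\check\lambda})$ directly, with no extra sign. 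The lower sign crosses $-l_k$, with poles $n\pi i(t+2k)$, $n\ge1$, and gives $-i\log(1+e^{-\pi i(t+2k)/\check\lambda})$. Your displayed sum instead runs over the opposite ray and carries an extra factor $-1$. It equals $\pm i\log(1+e^{\mp\pi i(t+2k)/\check\lambda})$, which is not the claim. Once this is fixed, your residue computation is a self-contained alternative to the paper's citation of \eqref{gk-wall-crossing}.

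Finally, in (i) the identity $G=iG_{\textup{GK}}$ is only available for $0<\rmRe t<2$; your radius $\pi-\pi\abs{\rmRe t-1}$ is nonpositive when $\rmRe t\le0$. The range $-1<\rmRe t\le0$ therefore needs analytic continuation in $t\notin2\mathbb Z$, as in the paper. The radius $\pi\abs t$ then follows because $\abs{t+2k}^2-\abs t^2=4k(k+\rmRe t)>0$ for $k\neq0$, so the nearest poles are $\pm\pi i t$.
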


\begin{proof}
  Put $\tau=\check\lambda$ and $z=z(t)=\pi i(t-1)$.  On the strip $\abs{\rmIm(z)}<\pi$, \zcref{crosscap-faddeev-identification} gives
  \[
    \Phi(\tau,t)=i\,\widehat\phi_\tau(z),\qquad
    G(\xi,t)=i\,G_{\textup{GK}}(\xi,z).
  \]
  First assume $0<\rmRe(t)<1$, so that this strip condition holds. Since $-e^z=Q^{1/2}$ and $e^{-z}=-e^{-\pi it}$, the Borel formula \eqref{gk-borel-transform} becomes the displayed series for $G(\xi,t)$.  Since $\xi_r(z)=\pi i(t+2r)$, the pole set is
  \[
    \setc*{\pi i(t+2k)m}{k\in\mathbb Z,\ m\in\mathbb Z\setminus\{0\}}.
  \]
  By \eqref{gk-residue}, and since $G=iG_{\textup{GK}}$, the pole $\pi i(t+2k)m=m\xi_k(z)$ has residue
  \[
    \Res\mleft(G(\xi,t),\pi i(t+2k)m\mright)
    =i\,\frac{(-1)^{m+1}}{2\pi i m}
    =\frac{(-1)^{m+1}}{2\pi m}.
  \]
  The closest pole to the origin has distance $\pi\abs{t}$, giving the convergence radius.  The cases $\abs{\rmRe(t)}<1$, $t\neq0$, then follow by analytic continuation in $t$ away from the displayed pole locus.

  The sectorial statement for the Laplace transform \eqref{gk-laplace-transform} applies on the complement of the lines $\mathbb R\,\xi_r(z)$. Under $\xi_r(z)=\pi i(t+2r)$, these lines are exactly the lines through the rays $\{\pm l_r\}$, with limiting directions $\{\pm l_\infty\}$. Multiplying the Laplace integral by $i$ gives
  \[
    \frac{1}{\lambda}\Li_2(Q^{1/2})
    +\int_\rho e^{-\xi/\tau}G(\xi,t)\,d\xi,
  \]
  which is the asserted sectorial sum on the corresponding half-plane $\mathbb H_\rho$.

  For the positive real ray, the hypotheses give $\abs{\rmIm(z)}<\pi$, $\rmRe(z)<0$, and $-e^z=Q^{1/2}$.  With $q=e^{2\pi i\tau}=e^{i\lambda}$, the product branch \eqref{gk-product-branch} is
  \[
    \log(-q^{1/2}e^z;q)_\infty
    =
    \log\mleft(Q^{1/2}e^{i\lambda/2};e^{i\lambda}\mright)_\infty .
  \]
  Multiplication by $i$ gives the branch $F^+$ of \eqref{primitive-qpochhammer-branch}; replacing $\tau$ by $-\tau$ gives the branch $F^-$.  The stated inequalities also imply
  \[
    \rmRe(\tau)>0,
    \qquad
    0<\rmRe\mleft(\frac{t+\tau}{2}\mright)<\rmRe(\tau+1),
  \]
  so the integral representation \eqref{f2-integral-representation} applies to $F_2\vargs{(t+\tau)/2}{\tau,1}$.  For $\tau>0$, combine the bridge \eqref{f2-phi-bridge-specialization} with \eqref{gk-positive-faddeev}. Holomorphic continuation in $\tau$ gives the stated sector. This gives
  \[
    i\log F_2\vargs*{\frac{t}{2}+\frac{\tau}{2}}{\tau,1}
    =
    \frac{1}{\lambda}\Li_2(Q^{1/2})
    +\int_{\mathbb R_{>0}}e^{-\xi/\tau}G(\xi,t)\,d\xi,
  \]
  proving part~\textup{(ii)}.

  For the Stokes jumps, $\xi_k(z)=\pi i(t+2k)$.  Hence \eqref{gk-wall-crossing} gives
  \[
    \log\mleft(1+e^{\xi_k(z)/\tau}\mright)
    =
    \log\mleft(1+e^{\pi i(t+2k)/\tau}\mright).
  \]
  Multiplication by $i$ gives the displayed jump for $\rho_k$; the opposite-ray formula is the same statement with $\tau$ replaced by $-\tau$.

  Finally, as $k\to+\infty$, the sectors containing $\rho_k$ approach the negative vertical limiting ray.  The vertical limiting identity \eqref{gk-vertical-limit}, translated from the positive to the negative vertical convention by replacing $\tau$ with $-\tau$, gives
  \[
    \lim_{k\to\infty}\mathcal S_{\rho_k}\widetilde F(\lambda,t)
    =
    -i\log
    \mleft(e^{\pi it-\pi i\tau};\,e^{-2\pi i\tau}\mright)_\infty .
  \]
  In the subregion $\abs{e^{-2\pi i\tau}}<1$, this logarithm expands as
  \begin{align*}
    -i\log
    \mleft(e^{\pi it-\pi i\tau};\,e^{-2\pi i\tau}\mright)_\infty
    &=
    i\sum_{m=1}^{\infty}
    \frac{e^{\pi imt-\pi im\tau}}
    {m(1-e^{-2\pi im\tau})}  \\
    &=
    \sum_{m=1}^{\infty}
    \frac{1}{m}\frac{e^{\pi imt}}{2\sin(\lambda m/2)}.
  \end{align*}
  Both sides are holomorphic in the stated $\lambda$-sector away from the poles of the sine factors, so the identity continues to the full domain of part~\textup{(iv)}.
\end{proof}
\subsection{Resurgence of the crosscap part}
The odd projection in \eqref{primitive-odd-projection} satisfies the following resurgence statement.
\begin{theorem}[Crosscap resurgence]\label{crosscap-resurgence}
  Consider the formal series
  \begin{align*}
    \widetilde{F}_\textup{cc}(\lambda,t)\coloneqq
    i\left(\widetilde{F}(\lambda,t)-\frac{1}{2}\widetilde{F}(2\lambda,2t)\right)=\frac{i}{\lambda}\Li_2(Q^{1/2})-\frac{i}{4\lambda}\Li_2(Q)+i\Phi(\check{\lambda},t)-\frac{i}{2}\Phi(2\check{\lambda},2t).
  \end{align*}
  where $\check{\lambda}\coloneqq\lambda/(2\pi)$ and $Q\coloneqq e^{2\pi i t}$. Then the following statements hold.
  \begin{enumerate}
    \item (Borel transform) Suppose $t\in\mathbb{C}^\times\setminus\mathbb{Q}$ satisfies $\abs{\rmRe(t)}<1/2$. The condition $t\notin\mathbb Q$ is imposed only to keep the Borel pole labels $(k,m)$ distinct. Write $G_\textup{cc}(\xi,t)$ for the Borel transform
      \[
        G_{\textup{cc}}(\xi,t)\coloneqq
        i\mathcal B[\Phi(-,t)](\xi)
        -\frac{i}{2}\mathcal B[\Phi(2(-),2t)](\xi)
        =iG(\xi,t)-iG(2\xi,2t).
      \]
      The series $G_\textup{cc}(\xi,t)$ is convergent in the disk $\abs{\xi}< \pi\abs{t}$. Moreover, it admits analytic continuation in the variable $\xi$ to a meromorphic function whose singularities are simple poles located at $\xi=\pi i (t+k)m$ for $k\in\mathbb{Z}$ and $m\in\mathbb{Z}\setminus\{0\}$. Because these labels are distinct, the residue at $\xi=\pi i(t+k)m$ is
      \begin{equation}\label{crosscap-residue}
        \Res(G_\textup{cc}(\xi,t),\pi i(t+k)m)
        =
        -\frac{i(-1)^{k+m}}{4\pi m}.
      \end{equation}

    \item (Sectorial sum) For each $k\in\mathbb{Z}$, set $l_k\coloneqq \pi i (t+k)\mathbb{R}_{<0}$ and $l_\infty\coloneqq i\mathbb{R}_{<0} $. For any ray $\rho$ emanating from the origin to infinity that avoids $\{ \pm l_k\}_{k \in \mathbb{Z}} \cup\{ \pm l_{\infty}\}$, and for $\lambda$ belonging to the half-plane $\mathbb H_\rho$ centered at $\rho$, the sectorial sum of $\widetilde{F}_\textup{cc}(\lambda,t)$ along $\rho$, in the sense of \zcref{sectorial-sum-finite-principal-part}, is given by
      \[
        \mathcal S_\rho\widetilde F_\textup{cc}(\lambda,t)
        \coloneqq\frac{i}{\lambda}\Li_2(Q^{1/2})-\frac{i}{4\lambda}\Li_2(Q)+\int_\rho
        e^{-\xi / \check{\lambda}} G_\textup{cc}(\xi, t) d\xi
        =i\mathcal S_\rho\widetilde F(\lambda,t)
        -\frac{i}{2}\mathcal S_\rho\widetilde F(2\lambda,2t).
      \]
      When $\rho=\mathbb{R}_{>0}$, suppose that $0<\rmRe(t)<1$, $\rmIm(t)>0$, and that $\lambda$ belongs to the sector bounded by $l_0$ and $l_{-1}$. Under the additional condition $\rmRe(t)<\rmRe(\check{\lambda}+1)$, we obtain the identity
      \begin{equation}\label{crosscap-positive-real-sectorial-sum}
        \mathcal S_{\mathbb{R}_{>0}}\widetilde F_\textup{cc}(\lambda,
        t)=\frac{1}{2}\log
        F_2\vargs*{t+\check{\lambda}}{2\check{\lambda}, 1}- \log F_2\vargs*{t+\check{\lambda}}{2\check{\lambda}, 2}.
      \end{equation}
    \item (Stokes jumps) Let $\rho_k$ denote a ray lying in the sector bounded by the Stokes rays $l_k$ and $l_{k-1}$. When $\rmIm(t)>0$, the following relation holds on the intersection of their domains of definition in $\lambda$:
      \begin{equation}\label{crosscap-stokes-jump}
        \mathcal S_{\pm\rho_k}\widetilde F_\textup{cc}(\lambda,t)
        -\mathcal S_{\pm\rho_{k+1}}\widetilde F_\textup{cc}(\lambda,t)
        =\disc_\theta
        \widetilde{F}_\textup{cc}(\lambda,t)=\mp\frac{(-1)^k}{2}\log\mleft(1+e^{\pm\pi
        i(t+k) / \check{\lambda}}\mright)
      \end{equation}
      Here $\theta=\arg(\mp\pi i (t+k))$.
    \item (Limits to $l_\infty$) Let $\rho_k$ denote a ray contained in the sector bounded by the Stokes rays $l_k$ and $l_{k-1}$. Assuming the conditions on $t$ and $\lambda$ from \textup{(ii)} are satisfied, we have
      \begin{equation}\label{crosscap-limiting-sector}
        \lim _{k \to\infty} \mathcal S_{\rho_k}\widetilde F_\textup{cc}(\lambda,
        t)=i\sum_{\substack{r\ge1\\ r\textup{ odd}}}\frac{1}{r}
        \frac{e^{\pi irt}}{2\sin(\lambda r/2)}.
      \end{equation}
  \end{enumerate}
\end{theorem}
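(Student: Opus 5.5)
The plan is to deduce every part of the crosscap theorem from \zcref{primitive-resurgence}, applied twice, using linearity of the Borel transform and of directional Laplace sums. The first application is at $(\lambda,t)$; the second is at $(2\lambda,2t)$.

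For part (i), first note that the Borel transform of $\Phi(2\check\lambda,2t)$, viewed as a series in $\check\lambda$, is $2G(2\xi,2t)$, because $(2\check\lambda)^{n}\mapsto 2^n\xi^{n-1}/(n-1)!=2\cdot(2\xi)^{n-1}/(n-1)!$. Hence $-\tfrac i2\mathcal B[\Phi(2(-),2t)]=-iG(2\xi,2t)$. The hypothesis $\abs{\rmRe t}<1/2$ gives $\abs{\rmRe(2t)}<1$, so \zcref{primitive-resurgence}(i) applies to both terms.
\begin{itemize}
  \item The first term has poles at $\pi i(t+2k)m$ with residue $i(-1)^{m+1}/(2\pi m)$.
  \item The second term has poles where $2\xi=\pi i(2t+2k')m$, that is, at $\xi=\pi i(t+k')m$. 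The residue there is $-i\cdot\tfrac12\cdot(-1)^{m+1}/(2\pi m)$.
\end{itemize}
For odd index $k$ only the second family contributes, giving $i(-1)^m/(4\pi m)=-i(-1)^{k+m}/(4\pi m)$. For even index $k=2k''$ both families contribute, and the sum is $i(-1)^{m+1}/(4\pi m)=-i(-1)^{k+m}/(4\pi m)$. The assumption $t\notin\mathbb Q$ ensures that distinct labels $(k,m)$ give distinct Borel points, so no further coalescence occurs. The nearest pole is at distance $\pi\abs t$, which gives the disc of convergence.

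For part (ii), combine the two sectorial sums from \zcref{primitive-resurgence}(ii). The Stokes lines of the rescaled term are $\pi i(t+k')\mathbb R$, which contain the lines $\pi i(t+2k)\mathbb R$. Therefore any ray avoiding $\{\pm l_k\}$ is admissible for both terms. The Laplace integral $\int_\rho e^{-\xi/\check\lambda}2G(2\xi,2t)\,d\xi$ becomes $\int_\rho e^{-\xi'/(2\check\lambda)}G(\xi',2t)\,d\xi'$ after the substitution $\xi'=2\xi$, which preserves the ray. This yields the identity $\mathcal S_\rho\widetilde F_{\textup{cc}}=i\mathcal S_\rho\widetilde F(\lambda,t)-\tfrac i2\mathcal S_\rho\widetilde F(2\lambda,2t)$.

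On $\mathbb R_{>0}$ the hypotheses $0<\rmRe t<1$ and $\rmRe t<\rmRe(\check\lambda+1)$ are exactly what is needed to apply the $F_2$ formula at $(2\lambda,2t)$. That formula gives $i\log F_2\bigl(t+\check\lambda\mid 2\check\lambda,1\bigr)$. The first term gives $i\log F_2\bigl(\tfrac t2+\tfrac{\check\lambda}2\mid\check\lambda,1\bigr)$, which equals $i\log F_2(t+\check\lambda\mid 2\check\lambda,2)$ by the rescaling invariance in \zcref{f2-bridgeland-properties}(ii) with $\alpha=2$. Multiplying by $i$ and $-i/2$ produces $\tfrac12\log F_2(\cdot\mid 2\check\lambda,1)-\log F_2(\cdot\mid2\check\lambda,2)$. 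The branches agree because both are fixed by the integral representation \eqref{f2-integral-representation}, which is valid in the stated region.

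For parts (iii) and (iv), there are two ways to obtain the jump. The first is to compute it directly from the residues in part (i). For simple poles, the discontinuity across $\pm l_k$ is $\sum_m -2\pi i\,\Res\cdot e^{-\xi/\check\lambda}$, with $\xi=\mp\pi i(t+k)m$ on the appropriate ray. This gives $\sum_m (\mp)\tfrac{(-1)^{k}}{2}\cdot\tfrac{(-1)^{m+1}}{m}e^{\pm\pi i(t+k)m/\check\lambda}$, and the $m$-sum resums to $\mp\tfrac{(-1)^k}{2}\log(1+e^{\pm\pi i(t+k)/\check\lambda})$. The second is to use \zcref{primitive-resurgence}(iii) directly. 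At odd $k$, only the rescaled term jumps, contributing $-\tfrac i2\cdot(\pm i)\log(1+e^{\pm\pi i(2t+2k)/(2\check\lambda)})$. At even $k$, the two jumps add with coefficients $i\cdot(\pm i)$ and $-\tfrac i2(\pm i)$.

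The main obstacle is sign and orientation bookkeeping: matching the primitive theorem's labelling $\rho_k$ and its orientation convention $\theta=\arg(\mp\pi i(t+2k))$ with the refined labelling here. I will cross-check the direct residue computation against the second route.

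For part (iv), I take $k\to\infty$ in both terms. The refined sectors approach the same limiting direction $l_\infty$, and the rescaled term's sector index also goes to infinity. So the primitive limit gives $i\sum_r\frac{e^{\pi irt}}{2r\sin(\lambda r/2)}-\tfrac i2\sum_r\frac{e^{2\pi irt}}{2r\sin(\lambda r)}$. Substituting $r\mapsto 2r$ in the second sum shows it removes exactly the even terms of the first, as in \eqref{primitive-odd-projection}, which gives \eqref{crosscap-limiting-sector}.
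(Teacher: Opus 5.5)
Your proposal is correct and follows essentially the same route as the paper. Both arguments use linearity plus the rescaling identity $\mathcal B[\Phi(2(-),2t)](\xi)=2G(2\xi,2t)$, residue bookkeeping split by the parity of $k$ with $t\notin\mathbb Q$ separating the labels, the $F_2$ rescaling invariance for the positive-real sum, summation of the Stokes constants along each ray for (iii), and the primitive limiting formula applied to both terms for (iv). The only minor difference is that you compute the opposite-ray jump directly from the residues (cross-checked against the primitive Stokes formula), whereas the paper obtains it by the substitution $\tau\mapsto-\tau$.
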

If $t\in\mathbb Q$, the same meromorphic Borel transform is obtained by summing the principal parts over all labels $(k,m)$ producing the same Borel point $\pi i(t+k)m$; the residue there is the corresponding sum of \eqref{crosscap-residue}. The Stokes formula \eqref{crosscap-stokes-jump} is the distinct-label version; when labels collide, the logarithmic jumps are grouped by the common Borel point.

To establish this theorem we make use of a few elementary facts about the Borel transform and sectorial summation:
\begin{proposition}\label{borel-scaling-linearity}
  Let $\varphi,\psi\in z\mathbb C\cdbr{z}$ and let $\alpha,\beta\in\mathbb C$. The formal Borel transform satisfies
  \[
    \mathcal B[\alpha\varphi+\beta\psi]
    =
    \alpha\mathcal B[\varphi]+\beta\mathcal B[\psi].
  \]
  If $\alpha\in\mathbb{C}^\times$, then
  \[
    \mathcal B[\varphi(\alpha(-))](\zeta)
    =
    \alpha\widehat\varphi(\alpha\zeta).
  \]
  Moreover, whenever the Borel--Laplace integrals on the two sides are defined,
  \[
    \mathcal S_\rho(\varphi(\alpha(-)))(w)
    =
    \int_\rho e^{-\zeta/w}
    \alpha\widehat\varphi(\alpha\zeta)\,d\zeta
    =
    \mathcal S_{\alpha\rho}\varphi(\alpha w),
  \]
  where $\alpha\rho$ denotes the oriented image of the ray $\rho$ under multiplication by $\alpha$.
\end{proposition}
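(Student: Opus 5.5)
The three assertions are checked separately, and all of them come down to manipulating coefficients. Write $\varphi(z)=\sum_{n\ge0}a_nz^{n+1}$ and $\psi(z)=\sum_{n\ge0}b_nz^{n+1}$.

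\emph{Linearity.} The formal Borel transform is defined monomial by monomial, $z^{n}\mapsto\zeta^{n-1}/(n-1)!$, and then extended to series by summing term by term. It is therefore $\mathbb C$-linear on $z\mathbb C\cdbr{z}$. Applying the definition to $\alpha\varphi+\beta\psi=\sum(\alpha a_n+\beta b_n)z^{n+1}$ gives $\alpha\widehat\varphi+\beta\widehat\psi$ coefficient by coefficient.

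\emph{Scaling.} For $\alpha\in\mathbb C^\times$ we have $\varphi(\alpha z)=\sum_n a_n\alpha^{n+1}z^{n+1}$. Its Borel transform is
\[
\sum_n a_n\alpha^{n+1}\frac{\zeta^n}{n!}=\alpha\sum_n a_n\frac{(\alpha\zeta)^n}{n!}=\alpha\widehat\varphi(\alpha\zeta).
\]
This is an identity of formal power series in $\zeta$. If $\widehat\varphi$ converges near $0$, the identity holds as an identity of germs. It then extends to the analytic continuation along any path, because $\zeta\mapsto\alpha\zeta$ is a biholomorphism of $\mathbb C$ mapping $0$ to $0$. In particular, the continuation of $\mathcal B[\varphi(\alpha(-))]$ along $\rho$ equals $\alpha$ times the continuation of $\widehat\varphi$ along $\alpha\rho$.

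\emph{Laplace identity.} The first equality is the definition of $\mathcal S_\rho$ combined with the scaling formula just proved. For the second, substitute $\eta=\alpha\zeta$ in $\int_\rho e^{-\zeta/w}\alpha\widehat\varphi(\alpha\zeta)\,d\zeta$. Then $d\eta=\alpha\,d\zeta$. The exponent becomes $-\zeta/w=-\eta/(\alpha w)$. The oriented ray $\rho$ is carried onto the oriented ray $\alpha\rho$. The integral is thus $\int_{\alpha\rho}e^{-\eta/(\alpha w)}\widehat\varphi(\eta)\,d\eta=\mathcal S_{\alpha\rho}\varphi(\alpha w)$.

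The one point that needs care is checking that the hypotheses hold on both sides simultaneously. The ray $\rho$ avoids the singularities of $\mathcal B[\varphi(\alpha(-))]$ exactly when $\alpha\rho$ avoids those of $\widehat\varphi$. Likewise, $\rmRe(\zeta/w)>0$ for $\zeta\in\rho$ holds exactly when $\rmRe(\alpha\zeta/(\alpha w))>0$, that is, $w\in\mathbb H_\rho$ if and only if $\alpha w\in\mathbb H_{\alpha\rho}$. Convergence at infinity is also unaffected by the substitution, since the integrand is merely reparametrized. This is why the statement is phrased with the qualifier ``whenever defined.''
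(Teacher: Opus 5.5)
Your proposal is correct and follows the paper's proof essentially verbatim: linearity coefficient by coefficient, the computation $\sum_n a_n\alpha^{n+1}\zeta^n/n!=\alpha\widehat\varphi(\alpha\zeta)$, and the substitution $\eta=\alpha\zeta$ carrying the oriented ray $\rho$ to $\alpha\rho$ with $d\eta=\alpha\,d\zeta$. Your additional checks are a harmless clarification of the paper's ``whenever defined'' qualifier. These are that $w\in\mathbb H_\rho$ if and only if $\alpha w\in\mathbb H_{\alpha\rho}$, and that singularity-avoidance and analytic continuation transfer under $\zeta\mapsto\alpha\zeta$.
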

\begin{proof}
  The first identity follows from linearity of coefficients. For the second identity, write
  \[
    \varphi(z)=\sum_{n=0}^\infty a_nz^{n+1}.
  \]
  Then
  \[
    \mathcal B[\varphi(\alpha(-))](\zeta)
    =
    \sum_{n=0}^\infty a_n\alpha^{n+1}\frac{\zeta^n}{n!}
    =
    \alpha\widehat\varphi(\alpha\zeta).
  \]
  The last assertion is the same scaling after applying the directional Laplace transform. Indeed, with $\eta=\alpha\zeta$, the oriented path $\rho$ is sent to $\alpha\rho$ and $d\eta=\alpha\,d\zeta$. Hence, whenever the integrals converge,
  \[
    \int_\rho e^{-\zeta/w}\alpha\widehat\varphi(\alpha\zeta)\,d\zeta
    =
    \int_{\alpha\rho} e^{-\eta/(\alpha w)}
    \widehat\varphi(\eta)\,d\eta
    =
    \mathcal S_{\alpha\rho}\varphi(\alpha w).
  \]
\end{proof}
We now prove the crosscap resurgence statement.
\begin{proof}[Proof of \zcref{crosscap-resurgence}]
  The regular part of $\widetilde F_\textup{cc}$, viewed as a formal series in its first argument, is $i\Phi(-,t)-(i/2)\Phi(2(-),2t)$. Linearity and the scaling identity in \zcref{borel-scaling-linearity} give
  \[
    \mathcal B\mleft[i\Phi(-,t)-\frac{i}{2}\Phi(2(-),2t)\mright](\xi)
    =iG(\xi,t)-iG(2\xi,2t)=G_\textup{cc}(\xi,t).
  \]

  The poles of $G(\xi,t)$ lie at $\pi i(t+2a)m$, while the poles of $G(2\xi,2t)$ lie at $\pi i(t+a)m$, with $a\in\mathbb Z$ and $m\in\mathbb Z\setminus\{0\}$.  If $(t+k)m=(t+k')m'$, then $(m-m')t=k'm'-km$. Since $t\notin\mathbb Q$, this forces $m=m'$, and then $k=k'$. Hence the two terms can meet only on the even-labelled poles. At such a pole the residues are
  \[
    \Res(G(\xi,t),\pi i(t+2a)m)
    =-\frac{1}{2\pi}\frac{(-1)^m}{m},
    \qquad
    \Res(G(2\xi,2t),\pi i(t+2a)m)
    =-\frac{1}{4\pi}\frac{(-1)^m}{m}.
  \]
  Their contribution to $G_\textup{cc}$ is therefore non-zero.  On an odd-labelled pole only the second term contributes.  Altogether, for every $k\in\mathbb Z$ and $m\neq0$,
  \begin{equation}\label{crosscap-residue-unified}
    \Res(G_\textup{cc}(\xi,t),\pi i(t+k)m)
    =-\frac{i}{4\pi}\frac{(-1)^{m+k}}{m} .
  \end{equation}
  The pole assertion in part~\textup{(i)} follows.

  The sectorial formula in part~\textup{(ii)} applies the convention of \zcref{sectorial-sum-finite-principal-part}: the displayed finite Laurent principal part is kept explicitly, and the regular Gevrey part is Borel--Laplace summed using $G_\textup{cc}$.  Equivalently, since multiplication by $2$ preserves the oriented ray $\rho$, the change of variables $\eta=2\xi$ gives
  \[
    \mathcal S_\rho\widetilde F_\textup{cc}(\lambda,t)
    =i\mathcal S_\rho\widetilde F(\lambda,t)
    -\frac{i}{2}\mathcal S_\rho\widetilde F(2\lambda,2t).
  \]
  For $\rho=\mathbb R_{>0}$, \zcref{primitive-resurgence} gives
  \[
    \mathcal S_{\mathbb R_{>0}}\widetilde F(\lambda,t)
    =i\log F_2\vargs*{\frac{t+\tau}{2}}{\tau,1},
    \qquad
    \mathcal S_{\mathbb R_{>0}}\widetilde F(2\lambda,2t)
    =i\log F_2\vargs{t+\tau}{2\tau,1}.
  \]
  By the simultaneous scaling property in \zcref{f2-bridgeland-properties}\textup{(ii)},
  \[
    F_2\vargs*{\frac{t+\tau}{2}}{\tau,1}
    =F_2\vargs{t+\tau}{2\tau,2}.
  \]
  Substitution gives the displayed positive-real formula.

  From \eqref{crosscap-residue-unified}, the Stokes constant at $\pi i(t+k)m$ is
  \[
    S_{\pi i(t+k)m}=-2\pi i\Res(G_\textup{cc},\pi i(t+k)m)
    =-\frac{1}{2}\frac{(-1)^{m+k}}{m} .
  \]
  Summing the residues on the ray through $-\pi i(t+k)$ gives
  \[
    \disc_\theta\widetilde F_\textup{cc}(\lambda,t)
    =\frac{(-1)^k}{2}\sum_{m=1}^{\infty}\frac{(-1)^m}{m}
    e^{\pi i(t+k)m/\tau}
    =-\frac{(-1)^k}{2}\log\mleft(1+e^{\pi i(t+k)/\tau}\mright).
  \]
  The formula for the opposite ray is obtained by replacing $\tau$ with $-\tau$, giving part~\textup{(iii)}.

  Finally, apply the limiting-sector formula in \zcref{primitive-resurgence} to the two primitive terms:
  \begin{align*}
    \lim_{k\to\infty}\mathcal S_{\rho_k}\widetilde F_\textup{cc}(\lambda,t)
    &=i\sum_{r\ge1}\frac{1}{r}\frac{e^{\pi irt}}{2\sin(\lambda r/2)}
    -\frac{i}{2}\sum_{r\ge1}\frac{1}{r}
    \frac{e^{2\pi irt}}{2\sin(\lambda r)}  \\
    &=i\sum_{r\textup{ odd}}\frac{1}{r}
    \frac{e^{\pi irt}}{2\sin(\lambda r/2)} .
  \end{align*}
  The second summand cancels exactly the even $r$-terms of the first sum, proving part~\textup{(iv)}.
\end{proof}

\subsection{Resurgence of the large-N SO/Sp orientifold free energy}
The crosscap analysis combines with the resolved-conifold theorem to give the large-$N$ $\SO/\Sp$ orientifold free energy. The resurgent properties of the Gopakumar--Vafa piece $\widetilde{F}_{\textup{ASTT}}$ are established in \cite[Theorem~2.1]{alim-2023-de7fb1f0}.

We use Bridgeland's triple-sine function with repeated argument from \cite[Section~4.2]{bridgeland-2020-7a4fdc8b}:

\begin{definition}
  For $z \in \mathbb{C}$ and $\omega_1, \omega_2 \in \mathbb{C}^{\times}$, we define
  \[
    G_3\vargs{z}{\omega_1, \omega_2}\coloneqq\exp \mleft(\frac{\pi
    i}{6} B_{3,3}\vargs{z+\omega_1}{\omega_1, \omega_1, \omega_2}\mright)\sin_3\vargs{z+\omega_1}{\omega_1, \omega_1, \omega_2},
  \]
  where $\sin_3$ denotes the triple sine function, and $B_{3,3}\vargs{z}{\omega_1, \omega_2, \omega_3}$ is the multiple Bernoulli polynomial. The function $G_3\vargs{z}{\omega_1, \omega_2}$ is asymmetric in $\omega_1,\omega_2$.
\end{definition}

\begin{theorem}[Orientifold resurgence]\label{orientifold-resurgence}
  Consider the formal series
  \[
    \widetilde{F}_{\SO/\Sp}(\lambda,t)=-\frac{1}{2}\widetilde{F}_{\textup{ASTT}}(\lambda,t)\pm\widetilde{F}_\textup{cc}(\lambda,t)
  \]
  where
  \begin{align*}
    \widetilde{F}_{\textup{ASTT}}(\lambda,t)&\coloneqq
    \frac{1}{\lambda^2} \Li_3(Q)+\frac{B_2}{2}
    \Li_1(Q)+\sum_{g=2}^{\infty} \lambda^{2 g-2} \frac{(-1)^{g-1}
    B_{2 g}}{2 g(2 g-2)!} \Li_{3-2 g}(Q)\\
    &=\frac{1}{\lambda^2} \Li_3(Q)+\frac{B_2}{2}
    \Li_1(Q)+\Phi_\textup{ASTT}(\check{\lambda}, t)
  \end{align*}
  is defined in \cite[eq.~(2.11)]{alim-2023-de7fb1f0}, with $\check{\lambda}\coloneqq\lambda/(2\pi)$ and $Q\coloneqq e^{2\pi i t}$. Then the following statements hold.
  \begin{enumerate}
    \item (Borel transform) Suppose $t\in\mathbb{C}^\times\setminus\mathbb{Q}$ satisfies $\abs{\rmRe(t)}<1/2$. As in \zcref{crosscap-resurgence}, the condition $t\notin\mathbb Q$ is used only to keep the Borel pole labels distinct. Write $G_{\SO/\Sp}(\xi,t)$ for the Borel transform
      \[
        G_{\SO/\Sp}(\xi,t)\coloneqq
        -\frac{1}{2}G_\textup{ASTT}(\xi,t)\pm G_\textup{cc}(\xi,t).
      \]
      The series $G_{\SO/\Sp}(\xi,t)$ is convergent in the disk $\abs{\xi}< \pi\abs{t}$. Moreover, it admits analytic continuation in $\xi$ to a meromorphic function whose singularities are simple (respectively, double) poles located at $\xi=\pi i (t+k)m$ for $k\in\mathbb{Z}$ and $m\in\mathbb{Z}\setminus\{0\}$ when $m$ is odd (respectively, even).
    \item (Sectorial sum) For each $k\in\mathbb{Z}$, set $l_k\coloneqq \pi i (t+k)\mathbb{R}_{<0}$ and $l_\infty\coloneqq i\mathbb{R}_{<0} $. For any ray $\rho$ emanating from the origin to infinity that avoids $\{ \pm l_k\}_{k \in \mathbb{Z}} \cup\{ \pm l_{\infty}\}$, and for $\lambda$ belonging to the half-plane $\mathbb H_\rho$ centered at $\rho$, the sectorial sum of $\widetilde{F}_{\SO/\Sp}(\lambda,t)$ along $\rho$, in the sense of \zcref{sectorial-sum-finite-principal-part}, is given by
      \[
        \mathcal S_\rho\widetilde F_{\SO/\Sp}(\lambda,t)
        \coloneqq
        -\frac{1}{2}\mathcal S_\rho\widetilde F_\textup{ASTT}(\lambda,t)
        \pm\mathcal S_\rho\widetilde F_\textup{cc}(\lambda,t)
      \]
      When $\rho=\mathbb{R}_{>0}$, suppose that $0<\rmRe(t)<1$, $\rmIm(t)>0$, and that $\lambda$ belongs to the sector bounded by $l_0$ and $l_{-1}$. Under the additional condition $\rmRe(t)<\rmRe(\check{\lambda}+1)$, we obtain the identity
      \[
        \mathcal S_{\mathbb{R}_{>0}}\widetilde F_{\SO/\Sp}(\lambda, t)
        =-\frac{1}{2}\log
        G_3\vargs{t}{\check{\lambda}, 1}\pm\left(\frac{1}{2}\log
        F_2\vargs*{t+\check{\lambda}}{2\check{\lambda}, 1}- \log F_2\vargs*{t+\check{\lambda}}{2\check{\lambda}, 2}\right).
      \]
    \item (Stokes jumps) Let $\rho_k$ denote a ray lying in the sector bounded by the Stokes rays $l_k$ and $l_{k-1}$. When $\rmIm(t)>0$, the following relation holds on the intersection of their domains of definition in $\lambda$:
      \begin{align*}
        \mathcal S_{\rho_k}\widetilde F_{\SO/\Sp}(\lambda,t)
        -\mathcal S_{\rho_{k+1}}\widetilde F_{\SO/\Sp}(\lambda,t)
        &=\disc_\theta\widetilde F_{\SO/\Sp}(\lambda,t)\\
        &=\frac{1}{4
        \pi i} \partial_{\check{\lambda}}\left(\check{\lambda}
          \Li_2\mleft(e^{2 \pi i(t+k) /
        \check{\lambda}}\mright)\right)
        \mp\frac{(-1)^k}{2} \log
        \mleft(1+e^{\pi i(t+k) / \check{\lambda}}\mright).
      \end{align*}
      Here $\theta=\arg(-\pi i (t+k))$.
    \item (Limits to $l_\infty$) Let $\rho_k$ denote a ray contained in the sector bounded by the Stokes rays $l_k$ and $l_{k-1}$. Assuming the conditions on $t$ and $\lambda$ from \textup{(ii)} are satisfied, we have
      \[
        \lim _{k \to\infty} \mathcal S_{\rho_k}\widetilde F_{\SO/\Sp}(\lambda,
        t)=\mathcal{F}_{\SO/\Sp}(\lambda,t).
      \]
  \end{enumerate}
\end{theorem}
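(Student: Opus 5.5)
The proof will go by linearity, combining \cite[Theorem~2.1]{alim-2023-de7fb1f0} for $\widetilde F_{\textup{ASTT}}$ with \zcref{crosscap-resurgence} for $\widetilde F_{\textup{cc}}$. The Borel transform is linear by \zcref{borel-scaling-linearity}, so $G_{\SO/\Sp}=-\frac12G_{\textup{ASTT}}\pm G_{\textup{cc}}$. The finite Laurent principal parts add in the same way, and the sectorial sum of \zcref{sectorial-sum-finite-principal-part} is therefore the corresponding linear combination. Convergence in $\abs{\xi}<\pi\abs{t}$ holds because both pieces converge there: the ASTT poles nearest the origin are at $\pm\pi i t$ (from $2\pi i(t+k)m$ with the normalization of $\check\lambda$), and the same holds for the crosscap poles by \zcref{crosscap-resurgence}(i).

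\textbf{Part (i).} I recall from \cite{alim-2023-de7fb1f0} that $G_{\textup{ASTT}}$, in the variable conjugate to $\check\lambda$, has double poles at $2\pi i(t+k')m'$ for $k'\in\mathbb Z$ and $m'\neq0$. Writing $2\pi i(t+k')m'=\pi i(t+k')(2m')$ places these exactly at the even-$m$ labels $\pi i(t+k)m$ with $k=k'$, $m=2m'$. Since $t\notin\mathbb Q$, these labels are distinct, as in the proof of \zcref{crosscap-resurgence}. The crosscap poles sit at all labels $(k,m)$ and are simple.
\begin{itemize}
\item At odd $m$, only the crosscap term contributes, giving a simple pole with nonzero residue by \eqref{crosscap-residue}.
\item At even $m$, the ASTT double pole persists, because adding a simple pole cannot cancel a nonzero second-order coefficient.
\end{itemize}

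\textbf{Part (ii).} The positive-real identity is the sum of Bridgeland's result $\mathcal S_{\mathbb R_{>0}}\widetilde F_{\textup{ASTT}}=\log G_3\vargs{t}{\check\lambda,1}$ from \cite{alim-2023-de7fb1f0,bridgeland-2020-7a4fdc8b} and \eqref{crosscap-positive-real-sectorial-sum}. The point to check is that the hypotheses intersect. The ASTT theorem is stated for $0<\rmRe(t)<1$, $\rmIm(t)>0$, with $\lambda$ in its first sector, and the condition $\rmRe(t)<\rmRe(\check\lambda+1)$ matches ours.

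\textbf{Part (iii).} The Stokes ray $\theta=\arg(-\pi i(t+k))$ carries both families of poles, so the jump is additive. The ASTT jump is taken from \cite{alim-2023-de7fb1f0}. It is $\frac{1}{2\pi i}\partial_{\check\lambda}(\check\lambda\Li_2(e^{2\pi i(t+k)/\check\lambda}))$ up to the orientation convention of $\rho_k-\rho_{k+1}$. Multiplying it by $-\frac12$ has to produce the displayed coefficient $\frac{1}{4\pi i}$, and checking this sign is the main bookkeeping obstacle. I would first verify it directly from the double-pole principal parts of $G_{\textup{ASTT}}$ via residues of $e^{-\xi/\check\lambda}G$, since $\Res$ at a double pole brings down the factor $1/\check\lambda$ responsible for the $\partial_{\check\lambda}$ structure. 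I would then match the orientation convention used in \eqref{crosscap-stokes-jump}. The crosscap jump is \eqref{crosscap-stokes-jump} multiplied by $\pm$.

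\textbf{Part (iv).} This follows by adding the limits. From \cite{alim-2023-de7fb1f0}, $\lim_k\mathcal S_{\rho_k}\widetilde F_{\textup{ASTT}}=\mathcal F_{\mathrm{GV}}$, and \eqref{crosscap-limiting-sector} gives the crosscap limit. The sum matches \eqref{sosp-free-energy-large-n}, with the $\pm i$ odd sum coming from $\pm\widetilde F_{\textup{cc}}$. I need to check that the ASTT limit ray $i\mathbb R_{<0}$ and the sector indexing coincide with ours under $l_k$ with step $k$, as opposed to the step $2k$ used in \zcref{primitive-resurgence}. This holds because the ASTT walls are a subset of the crosscap walls, so taking the limit over the finer sequence gives the same limit.
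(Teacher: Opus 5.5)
Your proposal is correct and is essentially the paper's proof. It uses linearity of the Borel transform and of the sectorial sums, places the ASTT double poles at $2\pi i(t+k)m=\pi i(t+k)(2m)$ exactly on the even crosscap labels, and combines \cite[Theorem~2.1]{alim-2023-de7fb1f0} termwise with \zcref{crosscap-resurgence} for (ii)--(iv); the paper certifies the nonzero double-pole coefficient by its explicit principal-part computation rather than by citation. Your planned residue check in (iii) goes beyond the paper, which only transports the ASTT jump, and it confirms the displayed sign, since those principal parts give $\disc_\theta\widetilde F_{\textup{ASTT}}=-\frac{1}{2\pi i}\partial_{\check\lambda}\bigl(\check\lambda\Li_2(e^{2\pi i(t+k)/\check\lambda})\bigr)$; your only slips are harmless (the ASTT poles nearest the origin are at $\pm2\pi i t$, not $\pm\pi i t$, and the ASTT Stokes rays actually coincide with the crosscap rays $l_k$).
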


For $t\in\mathbb Q$, the even-pole contribution from $G_\textup{ASTT}$ and the simple-pole contribution from $G_\textup{cc}$ are grouped by Borel point, as in the paragraph following \zcref{crosscap-resurgence}. The displayed Stokes formula is the version in which every label gives a distinct Borel point.

\begin{proposition}\label{astt-local-principal-part}
  Suppose that $t\notin\mathbb Q$. Fix $k\in\mathbb Z$ and $m\in\mathbb Z\setminus\{0\}$, and set $\xi_0=2\pi i(t+k)m$. The local expansion of $G_\textup{ASTT}(\xi,t)$ at $\xi_0$ is
  \[
    G_\textup{ASTT}(\xi,t)=\frac{i(t+k)}{2 \pi
    m}\frac{1}{(\xi-\xi_0)^2}-\frac{1}{4 \pi^2
    m^2}\frac{1}{\xi-\xi_0}+\text{holomorphic at } \xi_0
  \]
\end{proposition}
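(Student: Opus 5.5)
The plan is to compute the principal part directly from the explicit meromorphic continuation of $G_\textup{ASTT}$ given in \cite[Theorem~2.1]{alim-2023-de7fb1f0}, rather than from the formal coefficients. That theorem writes the Borel transform of $\Phi_\textup{ASTT}(\check\lambda,t)$ as a series over $m\in\mathbb Z\setminus\{0\}$. Each summand has the shape
\[
  \frac{1}{4\pi^2m^2}\,\partial_\xi\mleft(\xi\,h_m(\xi)\mright)
  \quad\text{or}\quad
  \frac{1}{4\pi^2m^2}\,\xi\,\partial_\xi h_m(\xi),
\]
up to the precise normalisation in \cite{alim-2023-de7fb1f0}, with $h_m(\xi)=\frac{1}{1-e^{-2\pi i t+\xi/m}}$ or the equivalent symmetric combination. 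I will first transcribe that formula in the present variables $(\check\lambda,t)$. This uses the same conventions as for $G$ in \zcref{primitive-resurgence}, and I will check the normalisation against the lowest Taylor coefficient $B_2$-term of $\Phi_\textup{ASTT}$. Via \zcref{borel-scaling-linearity} it is also consistent with the primitive block.

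Next I isolate which summands are singular at $\xi_0=2\pi i(t+k)m$. The factor $h_{m'}$ has poles exactly where $\xi/m'-2\pi i t\in 2\pi i\mathbb Z$, that is, at $\xi=2\pi i(t+k')m'$. As in the proof of \zcref{crosscap-resurgence}, the relation $(t+k)m=(t+k')m'$ together with $t\notin\mathbb Q$ forces $m'=m$ and $k'=k$. Hence exactly one summand (one label) is singular at $\xi_0$, and every other summand, as well as the tail of the series, is holomorphic there. Uniform convergence of the tail near $\xi_0$ follows from the $1/m^2$ decay already used in \cite{alim-2023-de7fb1f0}.

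For that single summand I expand locally. With $u=\xi/m-2\pi i t$ and $u_0=2\pi i k$ one has
\[
  \frac{1}{1-e^u}=-\frac{1}{u-u_0}+\frac12+O(u-u_0),
  \qquad u-u_0=\frac{\xi-\xi_0}{m},
\]
so $h_m=-m/(\xi-\xi_0)+\tfrac12+O(\xi-\xi_0)$. Applying the differential operator from the ASTT formula and rewriting the prefactor $\xi$ as $\xi_0+(\xi-\xi_0)$ produces a double pole with coefficient proportional to $\xi_0$, together with a simple pole.

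The target coefficients are consistent with this mechanism. Since $\xi_0/(4\pi^2m^2)=i(t+k)/(2\pi m)$, the double-pole coefficient is exactly $\xi_0/(4\pi^2m^2)$ and the simple-pole coefficient is $-1/(4\pi^2m^2)$. So the check reduces to showing that the chosen operator yields $\xi_0\cdot c$ on $(\xi-\xi_0)^{-2}$ and $-c$ on $(\xi-\xi_0)^{-1}$, with $c=1/(4\pi^2m^2)$.

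The main obstacle is pinning down the exact form of the ASTT operator and its sign and normalisation conventions. The naive form $\partial_\xi(\xi h_m)$ produces no residue, because the derivative of a function kills the simple pole. So the correct representation must be of the type $\xi\,\partial_\xi h_m$ or $\partial_\xi(\xi h_m)-2h_m$, or something similar. I will settle this first, either by re-deriving the Borel transform termwise from the multicover sum for $\mathcal F_{\mathrm{GV}}$ (using $1/(2\sin(x/2))^2$ expanded via $\sum_{n}1/(x+2\pi n)^2$, whose Borel transform gives the pole set directly), or by reading it off from \cite{alim-2023-de7fb1f0}. Having done so, I will confirm that the resulting residue reproduces the jump $\frac{1}{4\pi i}\partial_{\check\lambda}(\check\lambda\Li_2(\cdot))$ stated in \zcref{orientifold-resurgence}(iii). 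That identity serves as an independent consistency check on both coefficients.
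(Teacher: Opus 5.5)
Your outline follows the same route as the paper: take the explicit meromorphic continuation from \cite{alim-2023-de7fb1f0}, isolate the singular terms using $t\notin\mathbb Q$, and Laurent-expand with $h_m=-m/\delta+\tfrac12+O(\delta)$, where $\delta=\xi-\xi_0$. However, the proof consists almost entirely of the step you defer, and the concrete forms you write down give the wrong coefficients. The representation needed is \cite[Proposition~3.4]{alim-2023-de7fb1f0}:
\[
  G_\textup{ASTT}(\xi,t)=\frac{1}{(2\pi)^2}\sum_{n\neq0}\frac{1}{n^3}\Bigl(1+\frac{\xi}{2}\partial_\xi\Bigr)\bigl(A_n(\xi)-B_n(\xi)\bigr),
  \qquad
  A_n=\frac{1}{1-e^{-2\pi it+\xi/n}},\quad B_n=\frac{1}{1-e^{-2\pi it-\xi/n}}.
\]
The weight is $n^{-3}$, not $m^{-2}$; your normalisation would make the double-pole coefficient $\xi_0/(4\pi^2 m)$. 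The operator is $1+\frac{\xi}{2}\partial_\xi$, and neither of your fallback operators works. The expression $\xi\partial_\xi h_m$ has principal part $m\xi_0\delta^{-2}+m\delta^{-1}$. The expression $\partial_\xi(\xi h_m)-2h_m=\xi h_m'-h_m$ has principal part $m\xi_0\delta^{-2}+2m\delta^{-1}$. Both therefore give a simple pole of the wrong sign, whereas $(2+\xi\partial_\xi)h_m$ gives $m\xi_0\delta^{-2}-m\delta^{-1}$.

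Second, your step ``exactly one summand is singular at $\xi_0$'' fails for this representation. Your label argument is right: only $(k,m)$ produces $\xi_0$. But that label is carried by two summands, namely $A_m$ in the $n=m$ term and $B_{-m}$ in the $n=-m$ term, since $B_{-m}=A_m$. Because also $A_{-m}=B_m$ and $(-m)^{-3}=-m^{-3}$, the $n=-m$ summand equals the $n=m$ summand identically. So the principal part at $\xi_0$ is that of $\frac{2}{(2\pi)^2m^3}(1+\frac{\xi}{2}\partial_\xi)A_m$. This doubling is essential; omitting it halves both coefficients. With it, $(1+\frac{\xi}{2}\partial_\xi)A_m=\frac{m\xi_0}{2\delta^2}-\frac{m}{2\delta}+\text{holomorphic}$, and multiplying by $2/(4\pi^2m^3)$ gives exactly the stated expansion. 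Your proposed check against the jump $\frac{1}{4\pi i}\partial_{\check\lambda}(\check\lambda\Li_2(\cdot))$ would have exposed both errors, but as written the proposal leaves the actual computation undone.
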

\begin{proof}
  We use the meromorphic representation of the Borel transform from \cite[Proposition~3.4]{alim-2023-de7fb1f0}:
  \[
    G_\textup{ASTT}(\xi,t)=\frac{1}{(2\pi)^2}\sum_{n\in\mathbb{Z}\setminus\{0\}}\frac{1}{n^3}\left(1+\frac{\xi}{2}\frac{\partial}{\partial
    \xi}\right)
    \left(\frac{1}{1-e^{-2\pi i t+\xi/n}}-\frac{1}{1-e^{-2\pi i
    t-\xi/n}}\right).
  \]
  Its poles are at $\xi=2\pi i(t+\ell)n$ for $\ell\in\mathbb{Z}$, $n\neq 0$. Fix $k\in\mathbb{Z}$ and $m\in\mathbb{Z}\setminus\{0\}$, and set $\xi_0=2\pi i(t+k)m$. Since $t\notin\mathbb Q$, no pole with a different label gives the same point $\xi_0$. Choose a small disc $U$ around $\xi_0$ containing no other poles. In $U$, every summand is holomorphic except those with indices $n=m$ and $n=-m$, and their singular parts coincide. Indeed, define
  \[
    A_n(\xi)\coloneqq\frac{1}{1-e^{-2\pi i t+\xi/n}},\qquad
    B_n(\xi)\coloneqq\frac{1}{1-e^{-2\pi i t-\xi/n}}.
  \]
  The $n=m$ summand involves $A_m(\xi)-B_m(\xi)$, where $A_m$ has a pole at $\xi_0$ and $B_m$ is holomorphic. For $n=-m$, we have $A_{-m}=B_m$ and $B_{-m}=A_m$, so
  \[
    \frac{1}{(-m)^3}\mleft(A_{-m}(\xi)-B_{-m}(\xi)\mright)=\frac{1}{m^3}\mleft(A_m(\xi)-B_m(\xi)\mright).
  \]
  Hence the principal part of $G_\textup{ASTT}$ at $\xi_0$ is the principal part of
  \[
    \frac{2}{(2\pi)^2}\cdot\frac{1}{m^3}\left(1+\frac{\xi}{2}\frac{\partial}{\partial
    \xi}\right)A_m(\xi).
  \]

  Write $\delta=\xi-\xi_0$. Then
  \[
    -2\pi i t+\frac{\xi}{m}=-2\pi i
    t+\frac{\xi_0}{m}+\frac{\delta}{m}=2\pi i k+\frac{\delta}{m},
  \]
  so $e^{-2\pi i t+\xi/m}=e^{\delta/m}$ and
  \[
    A_m(\xi)=\frac{1}{1-e^{\delta/m}}=-\frac{m}{\delta}+\frac{1}{2}+O(\delta).
  \]
  Differentiating gives
  \[
    \frac{\partial}{\partial \xi}A_m(\xi)=\frac{m}{\delta^2}+O(1).
  \]
  Therefore,
  \begin{align*}
    \left(1+\frac{\xi}{2}\frac{\partial}{\partial \xi}\right)A_m(\xi)
    &=
    -\frac{m}{\delta}+\frac{1}{2}+O(\delta)+\frac{\xi}{2}\left(\frac{m}{\delta^2}+O(1)\right)\\
    &=\frac{m\xi_0}{2\delta^2}-\frac{m}{2\delta}+\text{holomorphic at }\xi_0.
  \end{align*}
  Multiplying by the prefactor yields
  \[
    G_\textup{ASTT}(\xi,t)=\frac{\xi_0}{4\pi^2
    m^2}\frac{1}{(\xi-\xi_0)^2}-\frac{1}{4\pi^2
    m^2}\frac{1}{\xi-\xi_0}+\text{holomorphic at }\xi_0.
  \]
  Substituting $\xi_0=2\pi i(t+k)m$ gives the claimed expansion.
\end{proof}

\begin{proof}[Proof of \zcref{orientifold-resurgence}]
  The positive-real sectorial sum in \cite[Theorem~2.1(ii)]{alim-2023-de7fb1f0} selects the branch of $\log G_3\vargs{t}{\check\lambda,1}$ used in part~\textup{(ii)}.

  The poles of $G_\textup{ASTT}$ are at $2\pi i(t+k)m=\pi i(t+k)(2m)$.  In the crosscap labelling these are precisely the even values of the integer multiplier. Since $t\notin\mathbb Q$, no two crosscap labels give the same Borel point. At an even-labelled pole the non-zero double-pole coefficient from \zcref{astt-local-principal-part} remains, since $G_\textup{cc}$ has only the simple residue \eqref{crosscap-residue}; at an odd-labelled pole only this crosscap simple pole is present.  The pole assertion in part~\textup{(i)} follows.

  For part~\textup{(ii)}, decompose both Laurent series into their finite principal parts and regular Gevrey parts.  The definition of \zcref{sectorial-sum-finite-principal-part} and the linearity of the Laplace integral give
  \[
    \mathcal S_\rho\widetilde F_{\SO/\Sp}
    =-\frac12\mathcal S_\rho\widetilde F_\textup{ASTT}
    \pm\mathcal S_\rho\widetilde F_\textup{cc}.
  \]
  Substituting the positive-real formula for the resolved-conifold term from the cited theorem and the crosscap formula \eqref{crosscap-positive-real-sectorial-sum} gives the displayed expression in terms of $G_3$ and $F_2$.

  Applying the same linear combination to the ordinary jump formula in \cite[Theorem~2.1(iii)]{alim-2023-de7fb1f0}, in the same variables, and to the crosscap jump \eqref{crosscap-stokes-jump} gives the Stokes formula in part~\textup{(iii)}.  The limiting sector in part~\textup{(iv)} is obtained in the same way from the limiting-sector formula for the ordinary resolved-conifold block and from \eqref{crosscap-limiting-sector}.  The resulting series is exactly the large-$N$ $\SO/\Sp$ free energy $\mathcal F_{\SO/\Sp}(\lambda,t)$ with the sign convention of \eqref{sosp-free-energy-large-n}.
\end{proof}

\begin{remark}
  The factor
  \[
    G_3\vargs{t}{\check{\lambda},1}^{-1/2}
  \]
  comes from the large-$N$ formula \eqref{sosp-free-energy-large-n}, where the resolved-conifold term enters as $-\mathcal F_{\mathrm{GV}}/2$. The exact finite-$N$ Chern--Simons multiple-sine normalizations of \cite[eq.~(4.18)]{mkrtchyan-2014-e97baf46} and \cite{krefl-2015-c2220c11} have a different normalization structure.

  Under the formal identification
  \[
    \check{\lambda}=\frac1d,
    \qquad
    t=\frac{N-1}{d},
  \]
  the double-sine arguments in the crosscap factor
  \[
    \frac{
      F_2\vargs{t+\check{\lambda}}{2\check{\lambda},1}^{1/2}
    }{
      F_2\vargs{t+\check{\lambda}}{2\check{\lambda},2}
    }
  \]
  have the same arguments as the double-sine quotient
  \[
    \frac{\sin_2\vargs{2N}{4,2d}^{1/2}}
    {\sin_2\vargs{N}{2,2d}}
  \]
  in that formula, up to the Bernoulli exponential factors included in the definition of $F_2$.

  The triple-sine factor has the inverse $N$-dependent square-root power. Indeed, the sine factor in $G_3\vargs{t}{\check{\lambda},1}$ is
  \[
    \sin_3\vargs{t+\check{\lambda}}{\check{\lambda},\check{\lambda},1},
  \]
  which becomes
  \[
    \sin_3\vargs{2N}{2,2,2d}
  \]
  after the above rescaling. Thus \zcref{orientifold-resurgence} gives the factor
  \[
    \sin_3\vargs{2N}{2,2,2d}^{-1/2},
  \]
  whereas the finite-$N$ formula contains
  \[
    \left(
      \frac{\sin_3\vargs{2N}{2,2,2d}}
      {\sin_3\vargs{2}{2,2,2d}}
    \right)^{1/2}.
  \]
  The reversed $N$-dependent power cannot be removed by multiplying by a $t$-independent factor.

  Consequently, \zcref{orientifold-resurgence} sums the large-$N$ orientifold free energy \eqref{sosp-free-energy-large-n}. The exact finite-$N$ Chern--Simons partition function belongs to the normalization described above.
\end{remark}

\section{Axiomatic self-dual Riemann--Hilbert problem}\label{self-dual-rh}

We construct a classical self-dual wall-crossing factor model from the crosscap Borel Stokes jumps. Its ingredients are the scalar Stokes weights computed in \zcref{crosscap-resurgence}, the doubled charge lattice below, and the O-plane incidence formula from \cite{denef-2010-ba429955}. Since no orientifold Donaldson--Thomas category or stability space is constructed here, the resulting Riemann--Hilbert problem is axiomatic.

The ordinary Kontsevich--Soibelman/Bridgeland Riemann--Hilbert formalism \cite{kontsevich-2008-72eac7f9,bridgeland-2019-6bdcc39c,bridgeland-2020-7a4fdc8b} provides the form of the ray operators. The signed self-dual action is compared with Young's Hall-module wall-crossing picture \cite{young-2015-9be91531} and Bu's orthosymplectic/self-dual DT theory \cite{bu-2025-1a4b688b}.

The active lifts first define a local classical self-dual BPS structure on the doubled charge lattice. The incidence function is then evaluated on those active classes and on the two fixed self-dual directions $\eta_{\mathrm{all}}$ and $\eta_{\mathrm{odd}}$. The scalar crosscap Stokes jumps are matched with the resulting wall-crossing factors, and the normalized sectorial functions are shown to satisfy the coordinate crosscap RH conditions.

The oriented GV sector is the resolved-conifold sector. The analysis below concerns the classical self-dual crosscap sector. The rational BPS weights come from the crosscap Borel Stokes jumps; the refined quantum torus fixes the signs in the classical wall-crossing factor.

\subsection{Classical self-dual BPS structures}

Following the axiomatic BPS structures underlying the Kontsevich--Soibelman and Bridgeland Riemann--Hilbert problems, and using the signed classical limit of the refined quantum-torus formula as motivation, we use the following self-dual classical variant for the crosscap sector.

\begin{definition}[Classical self-dual BPS structure]
  \label{classical-self-dual-bps-structure}
  A \emph{classical self-dual BPS structure} is a quadruple $(\Gamma,Z,\Omega,\sigma)$ with the following data.
  \begin{enumerate}
    \item A finite-rank free abelian group $\Gamma$ equipped with a skew-symmetric form
      \[
        \langle-,-\rangle\colon\Gamma\times\Gamma\to\mathbb Z.
      \]

    \item A group homomorphism
      \[
        Z\colon\Gamma\to\mathbb C,
      \]
      called the \emph{central charge}.

    \item A rational BPS weight function
      \[
        \Omega\colon\Gamma\to\mathbb Q.
      \]

    \item An anti-symplectic involution
      \[
        \sigma\colon\Gamma\to\Gamma,
        \qquad
        \sigma^2=\id,
      \]
      meaning that
      \[
        \langle\sigma\gamma_1,\sigma\gamma_2\rangle
        =
        -\langle\gamma_1,\gamma_2\rangle
      \]
      for all $\gamma_1,\gamma_2\in\Gamma$.
  \end{enumerate}
  The first three items form a rational BPS structure: the weight function satisfies the symmetry
  \[
    \Omega(-\gamma)=\Omega(\gamma).
  \]
  It also satisfies the \emph{support property}: after fixing a norm on $\Gamma\otimes_{\mathbb Z}\mathbb R$, there is a constant $C>0$ such that
  \[
    \Omega(\gamma)\neq0
    \quad\Longrightarrow\quad
    \abs{Z(\gamma)}>C\norm{\gamma}.
  \]
  We write
  \[
    \Gamma^\sigma\coloneqq\ker(\id-\sigma)\subset\Gamma
  \]
  for the self-dual index lattice, and
  \[
    H\colon\Gamma\to\Gamma^\sigma,
    \qquad
    H(\gamma)\coloneqq\gamma+\sigma\gamma,
  \]
  for the self-dual shadow map. We also write
  \[
    \Gamma_\sigma^{\mathrm{iso}}
    \coloneqq
    \mleft\langle
    \setc{\gamma\in\Gamma}{\langle\gamma,\sigma\gamma\rangle=0}
    \mright\rangle_{\mathbb Z}
    \subset\Gamma
  \]
  for the sublattice generated by the $\sigma$-isotropic classes. A class $\gamma\in\Gamma$ is called \emph{active} if $\Omega(\gamma)\neq0$.
\end{definition}

The preceding definition adds an anti-symplectic involution to a Bridgeland BPS structure. The self-dual Riemann--Hilbert action below uses separate coordinate-action data, introduced in the coordinate-algebra construction. If a refined weight polynomial $\Omega^{\mathrm{ref}}(\gamma)\in \mathbb Q[\mathbb L^{\pm1/2}]$ is given, then its signed classical specialization is $\Omega(\gamma)= \left.\Omega^{\mathrm{ref}}(\gamma)\right|_{\mathbb L^{1/2}=-1}$. The Riemann--Hilbert problem below uses the numerical weights $\Omega$, extracted from the crosscap Borel Stokes jumps. The refined quantum-torus expression is used as the reference model for the classical wall-crossing factor.

\begin{definition}\label{self-dual-incidence-function}
  Fix a classical self-dual BPS structure. A \emph{self-dual incidence function} is an integral-valued map
  \[
    \mathcal I\colon
    \Gamma_\sigma^{\mathrm{iso}}\times\Gamma^\sigma\to\mathbb Z.
  \]
  It satisfies the one-sided action identities
  \begin{equation}\label{star-action-compatibility}
    \begin{aligned}
      \mathcal I(\gamma_1,\eta+H(\gamma_2))
      +
      \mathcal I(\gamma_2,\eta)
      &=
      \mathcal I(\gamma_1+\gamma_2,\eta)
      +
      \langle\gamma_1,\gamma_2\rangle,\\
      \mathcal I(\gamma_1,\eta)
      +
      \mathcal I(\gamma_2,\eta+H(\gamma_1))
      &=
      \mathcal I(\gamma_1+\gamma_2,\eta)
      -
      \langle\gamma_1,\gamma_2\rangle
    \end{aligned}
  \end{equation}
  for all $\gamma_1,\gamma_2\in\Gamma_\sigma^{\mathrm{iso}}$ and $\eta\in\Gamma^\sigma$. For a $\sigma$-isotropic class $u$, we say that $\mathcal I$ is \emph{two-sided compatible along $u$} if
  \begin{equation}\label{one-generator-two-sided-compatibility}
    \mathcal I(\gamma_1,\eta)
    -
    \mathcal I(\gamma_2,\eta+H(\gamma_1))
    =
    \mathcal I(\gamma_1,\eta+H(\gamma_2))
    -
    \mathcal I(\gamma_2,\eta)
  \end{equation}
  for all $\eta\in\Gamma^\sigma$ and all $\gamma_1=a u,\gamma_2=b u$, with $a,b\in\mathbb Z_{\ge0}$.
\end{definition}

\begin{definition}\label{self-dual-quantum-coordinate-algebra}
  Fix a classical self-dual BPS structure and a self-dual incidence function $\mathcal I$. The \emph{algebraic torus} with character lattice $\Gamma$ is
  \[
    \mathbb T_+
    =
    \Hom_{\mathbb Z}(\Gamma,\mathbb C^*).
  \]
  The \emph{signed torus} is the $\mathbb T_+$-torsor
  \[
    \mathbb T_-
    =
    \setc*{g\colon\Gamma\to\mathbb C^*}{
      g(\gamma_1+\gamma_2)
      =
      (-1)^{\langle\gamma_1,\gamma_2\rangle}
      g(\gamma_1)g(\gamma_2)
    }.
  \]
  For $\gamma\in\Gamma$, the \emph{twisted character} $x_\gamma\colon\mathbb T_-\to\mathbb C^*$ is tautologically defined by $x_\gamma(g)=g(\gamma)$. Thus
  \[
    \mathbb C[\mathbb T_-]
    =
    \bigoplus_{\gamma\in\Gamma}\mathbb C\cdot x_\gamma,
    \qquad
    x_{\gamma_1}x_{\gamma_2}
    =
    (-1)^{\langle\gamma_1,\gamma_2\rangle}
    x_{\gamma_1+\gamma_2}.
  \]

  The \emph{quantum coordinate algebra} is
  \[
    \mathbb C_q[\mathbb T]
    =
    \bigoplus_{\gamma\in\Gamma}
    \mathbb C[\mathbb L^{\pm1/2}]\cdot x_\gamma,
    \qquad
    x_{\gamma_1}*x_{\gamma_2}
    =
    \mathbb L^{\langle\gamma_1,\gamma_2\rangle/2}
    x_{\gamma_1+\gamma_2}.
  \]
  Thus $x_{\gamma_1}*x_{\gamma_2} =\mathbb L^{\langle\gamma_1,\gamma_2\rangle} x_{\gamma_2}*x_{\gamma_1}$, and $\mathbb C_q[\mathbb T]$ is a quantization of $\mathbb C[\mathbb T_-]$. The \emph{self-dual coordinate torus} is the torus with character lattice $\Gamma^\sigma$:
  \[
    \mathbb T^\sigma
    \coloneqq
    \Hom_{\mathbb Z}(\Gamma^\sigma,\mathbb C^*).
  \]
  For $\eta\in\Gamma^\sigma$, the character $\xi_\eta\colon\mathbb T^\sigma\to\mathbb C^*$ is tautologically defined by $\xi_\eta(h)=h(\eta)$. Thus
  \[
    \mathbb C[\mathbb T^\sigma]
    =
    \bigoplus_{\eta\in\Gamma^\sigma}\mathbb C\cdot\xi_\eta,
    \qquad
    \xi_{\eta_1}\xi_{\eta_2}=\xi_{\eta_1+\eta_2}.
  \]
  Since $\sigma$ is anti-symplectic, the restriction of the skew form to $\Gamma^\sigma$ is zero. Thus its quantum coordinate algebra is
  \[
    \mathbb C_q[\mathbb T^\sigma]
    \coloneqq
    \bigoplus_{\eta\in\Gamma^\sigma}
    \mathbb C[\mathbb L^{\pm1/2}]\cdot \xi_\eta,
    \qquad
    \xi_{\eta_1}\xi_{\eta_2}=\xi_{\eta_1+\eta_2}.
  \]
  For $\gamma\in\Gamma_\sigma^{\mathrm{iso}}$ and $\eta\in\Gamma^\sigma$, we use the \emph{left $\star$-action}
  \[
    x_\gamma\star\xi_\eta
    =
    \mathbb{L}^{\mathcal I(\gamma,\eta)/2}\,
    \xi_{H(\gamma)+\eta}.
  \]
  For the adjoint Stokes action we also use the \emph{right $\star$-action} with the opposite phase,
  \[
    \xi_\eta\star x_\gamma
    =
    \mathbb{L}^{-\mathcal I(\gamma,\eta)/2}\,
    \xi_{H(\gamma)+\eta}.
  \]
  These formulas have the same self-dual charge shift, and their ratio is the \emph{effective semidirect commutation exponent}
  \begin{equation}\label{self-dual-semidirect-commutation}
    x_\gamma\star\xi_\eta
    =
    \mathbb{L}^{\mathcal I(\gamma,\eta)}\,
    (\xi_\eta\star x_\gamma),
  \end{equation}
  and this exponent is the algebraic quantity used by the inverse-adjoint Stokes computation. We use the adjoint notation with the self-dual product:
  \[
    \Ad_{x}(\xi)
    \coloneqq
    x\star \xi\star x^{\star-1},
  \]
  where $x^{\star-1}$ denotes the inverse with respect to $\star$.

  Since $\mathcal I$ is integral-valued, the signed classical specialization $\mathbb L^{1/2}\mapsto -1$ is well-defined on $\Gamma_\sigma^{\mathrm{iso}}\times\Gamma^\sigma$. Under this specialization the ordinary quantum torus specializes to $\mathbb C[\mathbb T_-]$, the self-dual quantum coordinate algebra specializes to $\mathbb C[\mathbb T^\sigma]$, and the right $\star$-action specializes to the \emph{signed classical right action}. We keep the symbol $\star$ for this induced action; ordinary juxtaposition denotes multiplication in $\mathbb C[\mathbb T_-]$:
  \begin{equation}\label{signed-classical-right-action}
    \xi_\eta\star x_\gamma
    \coloneqq
    (-1)^{\mathcal I(\gamma,\eta)}\xi_{H(\gamma)+\eta}.
  \end{equation}
  We extend this right action termwise to formal series in the chosen completion.
\end{definition}

The global identities in \zcref{self-dual-incidence-function} make the left and right formulas in \zcref{self-dual-quantum-coordinate-algebra} compatible as one-sided actions over $\Gamma_\sigma^{\mathrm{iso}}$. When the incidence function is two-sided compatible along the generator of a finite active ray, the corresponding completed ray factor is used for the adjoint action.

For comparison with the refined quantum torus, we use the following ray operator.
\begin{definition}\label{refined-model-ray-operator}
  Let
  \[
    \mathbb E_q(x)=\prod_{r\geq0}(1-q^r x),
    \qquad
    \mathbb E_q(x)^{-1}
    =
    \sum_{s\geq0}
    \frac{x^s}{(1-q)\cdots(1-q^s)}.
  \]
  Suppose that the numerical weights of a classical self-dual BPS structure arise as the signed specialization of refined weights
  \[
    \Omega^{\mathrm{ref}}(\gamma)
    =
    \sum_{n\in\mathbb Z}
    \Omega^{\mathrm{ref}}_n(\gamma)\mathbb L^{n/2},
    \qquad
    \Omega(\gamma)=
    \left.\Omega^{\mathrm{ref}}(\gamma)\right|_{\mathbb L^{1/2}=-1},
  \]
  and call the following rays the \emph{active rays of the refined data}:
  \[
    \ell=\mathbb R_{>0}\cdot Z(\gamma)\subset\mathbb C^*,
    \qquad
    \gamma\in\Gamma,\quad
    \Omega^{\mathrm{ref}}_n(\gamma)\neq0
    \text{ for some }n\in\mathbb Z.
  \]
  For an active ray $\ell$, following \cite[eq.~(1.3)]{barbieri-2022-03ac78d7}, choose an order on the active classes on $\ell$ and complete the quantum torus with respect to the monoid of finite non-negative integral combinations of active classes on $\ell$. The \emph{refined ray element} is
  \begin{equation}\label{bbs-dt-ray-element}
    \DT_q(\ell)
    =
    \prod_{Z(\gamma)\in\ell}
    \prod_{n\in\mathbb Z}
    \mathbb E_{\mathbb L}
    \mleft(\left(-\mathbb L^{1/2}\right)^{n+1}x_\gamma\mright)^{
    (-1)^{n-1}\Omega^{\mathrm{ref}}_n(\gamma)}
    \in\widehat{\mathbb C_q[\mathbb T]}.
  \end{equation}
  Infinite products and binomial series are interpreted in this ray-adic completion of the self-dual $\star$-algebra. The \emph{self-dual quantum Stokes operator} associated with this ray element is
  \begin{equation}\label{self-dual-quantum-stokes-operator}
    \mathbb S_q(\ell)^*\coloneqq\Ad_{\DT_q(\ell)}.
  \end{equation}
\end{definition}
In the crosscap Riemann--Hilbert problem, the numerical weights $\Omega$ provide the data; the refined operator in \zcref{refined-model-ray-operator} supplies the reference model for the classical wall-crossing factor.

\begin{proposition}\label{classical-self-dual-stokes-factor}
  Let $\ell\subset\mathbb C^*$ be an active ray and fix $\eta\in\Gamma^\sigma$. Suppose that every active class $\gamma$ with $Z(\gamma)\in\ell$ lies in $\Gamma_\sigma^{\mathrm{iso}}$. Interpret the product over $\gamma$ and the fractional powers $(1-x_\gamma)^\alpha$, $\alpha\in\mathbb Q$, by their binomial expansions in the corresponding ray-adic completion of $\mathbb C[\mathbb T_-]$, acting on $\xi_\eta$ by the signed classical right action \eqref{signed-classical-right-action}. The \emph{classical self-dual Stokes factor} is
  \begin{equation}\label{abstract-classical-self-dual-stokes-factor}
    \mathbb S(\ell)^*(\xi_\eta)
    =
    \xi_\eta\star
    \prod_{Z(\gamma)\in\ell}
    \left(1-x_\gamma\right)^{
    \Omega(\gamma)\mathcal I(\gamma,\eta)}.
  \end{equation}
  If $\Omega$ is the signed classical specialization of refined weights as in \zcref{refined-model-ray-operator}, then this factor is the signed classical limit of the refined quantum Stokes operator \eqref{self-dual-quantum-stokes-operator}.
\end{proposition}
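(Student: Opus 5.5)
The plan is to prove the statement in two stages. First I read \eqref{abstract-classical-self-dual-stokes-factor} as a definition and check that it is well defined. Then I verify the comparison with the refined operator by computing $\Ad_{\DT_q(\ell)}(\xi_\eta)$ explicitly in the ray-adic completion and specializing $\mathbb L^{1/2}\mapsto-1$.

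For well-definedness, I fix a ray $\ell$ and use the support property. Together with $Z(\gamma)\in\ell$, it shows that only finitely many active classes have $\abs{Z(\gamma)}$ below any given bound. Hence each binomial coefficient of the product in the ray-adic completion of $\mathbb C[\mathbb T_-]$ is a finite sum. Since every active $\gamma$ on $\ell$ is $\sigma$-isotropic, every monomial $x_{\gamma'}$ that occurs has $\gamma'\in\Gamma_\sigma^{\mathrm{iso}}$. The signed right action \eqref{signed-classical-right-action} therefore applies termwise, and the exponents $\Omega(\gamma)\mathcal I(\gamma,\eta)$ are rational, so the binomial series make sense.

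For the refined comparison, the key step is a commutation lemma for a single isotropic class $u$ along which $\mathcal I$ is two-sided compatible. I will show that for any power series $f$,
\[
  f(x_u)\star\xi_\eta=\xi_\eta\star f\bigl(\mathbb L^{\mathcal I(u,\eta)}x_u\bigr)
\]
up to the appropriate normalization. To get this, I first iterate the one-sided identities \eqref{star-action-compatibility} to compare $x_{au}$ with $x_u^{*a}$. The product $x_{au}*x_{bu}$ picks up no $\mathbb L$-phase, because $\langle au,bu\rangle=0$. I then use \eqref{one-generator-two-sided-compatibility} to show that the effective exponent in \eqref{self-dual-semidirect-commutation} is additive, namely $\mathcal I(au,\eta)=a\,\mathcal I(u,\eta)$ modulo the shift terms. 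This additivity is what turns the commutation of $x_u^{a}$ past $\xi_\eta$ into multiplication by $\mathbb L^{a\mathcal I(u,\eta)}$.

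With the lemma in hand, I apply it to a single factor with $y=(-\mathbb L^{1/2})^{n+1}x_\gamma$ and $I=\mathcal I(\gamma,\eta)$. This gives
\[
  \mathbb E_{\mathbb L}(y)\star\xi_\eta\star\mathbb E_{\mathbb L}(y)^{\star-1}
  =\xi_\eta\star\frac{\mathbb E_{\mathbb L}(\mathbb L^{I}y)}{\mathbb E_{\mathbb L}(y)}.
\]
For $I\ge0$ the ratio equals $\prod_{r=0}^{I-1}(1-\mathbb L^ry)^{-1}$, and for $I<0$ it is the analogous finite product. Raising to the exponent $(-1)^{n-1}\Omega^{\mathrm{ref}}_n(\gamma)$ and specializing $\mathbb L^{1/2}\mapsto-1$ sends $\mathbb L^r\mapsto1$ and $y\mapsto x_\gamma$. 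Each factor therefore becomes $(1-x_\gamma)^{\mp I\cdot(-1)^{n-1}\Omega^{\mathrm{ref}}_n(\gamma)}$. Summing over $n$ produces $\Omega(\gamma)=\sum_n(-1)^n\Omega^{\mathrm{ref}}_n(\gamma)$ in the exponent. Factors for different classes on $\ell$ commute after specialization, since all their mutual pairings are even or zero in the signed torus. Taking the product over $Z(\gamma)\in\ell$ then yields \eqref{abstract-classical-self-dual-stokes-factor}.

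I expect the main obstacle to be bookkeeping of signs and orientation. The first sign to track is whether the effective exponent is $\mathcal I$ or $-\mathcal I$, which depends on whether the adjoint moves $x$ past $\xi$ from the left or the right. The second is the parity $(-1)^{n-1}$ versus $(-1)^n$, which must combine with the sign from the ratio $\mathbb E(\mathbb L^Iy)/\mathbb E(y)$ so that the exponent is exactly $+\Omega(\gamma)\mathcal I(\gamma,\eta)$. I will fix these by checking the rank-one case first: one class $\gamma$ with $\Omega^{\mathrm{ref}}=1$ and $I=1$. I will then rely on the multiplicativity established above to extend to all $n$, $\gamma$ and $I$. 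A secondary point is that the fractional-exponent binomial series commute with the signed specialization. This holds coefficientwise because the specialization is a ring map on each graded piece of the ray-adic completion.
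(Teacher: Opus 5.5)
Your proposal follows the paper's proof: reduce to a single refined factor $\mathbb E_{\mathbb L}(X)^{(-1)^{n-1}\Omega^{\mathrm{ref}}_n(\gamma)}$ with $X=(-\mathbb L^{1/2})^{n+1}x_\gamma$, use $f(X)\star\xi_\eta=\xi_\eta\star f(\mathbb L^{\mathcal I(\gamma,\eta)}X)$ to turn the adjoint action into a finite product of factors $1-\mathbb L^jX$, specialize $\mathbb L^{1/2}\mapsto-1$, and sum $(-1)^n\Omega^{\mathrm{ref}}_n(\gamma)$ over $n$ to obtain $\Omega(\gamma)$. Beyond the paper, you add the support-property well-definedness check, and you justify the paper's ``induction on powers of $X$'' explicitly by using two-sided compatibility to get $\mathcal I(au,\eta)=a\,\mathcal I(u,\eta)$; one small correction is that factors for different classes commute after specialization simply because $\mathbb C[\mathbb T_-]$ is commutative, not because of any parity of the pairings.
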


\begin{proof}
  The formula \eqref{abstract-classical-self-dual-stokes-factor} is the classical wall-crossing factor used below. We verify the final assertion when a refined lift is supplied. Work in this ray-adic completion of the self-dual $\star$-algebra. Fix $\gamma\in\Gamma_\sigma^{\mathrm{iso}}$ with $Z(\gamma)\in\ell$, fix $n\in\mathbb Z$, and fix $\eta\in\Gamma^\sigma$. Set
  \[
    a=\mathcal I(\gamma,\eta),
    \qquad
    m=(-1)^n\Omega^{\mathrm{ref}}_n(\gamma),
    \qquad
    X=(-\mathbb L^{1/2})^{n+1}x_\gamma .
  \]
  By \zcref{self-dual-incidence-function}, $a\in\mathbb Z$. The corresponding factor of the ray element \eqref{bbs-dt-ray-element} is $\mathbb E_{\mathbb L}(X)^{-m}$.

  Since the scalar $(-\mathbb L^{1/2})^{n+1}$ commutes with everything, \eqref{self-dual-semidirect-commutation} gives
  \[
    X\star\xi_\eta=\mathbb L^a\xi_\eta\star X.
  \]
  Hence, by induction on powers of $X$,
  \[
    f(X)\star\xi_\eta=\xi_\eta\star f(\mathbb L^aX)
  \]
  for every formal series $f$. Applying this to $f=\mathbb E_{\mathbb L}$ gives
  \[
    \mathbb E_{\mathbb L}(X)^{-m}\star\xi_\eta\star
    \mathbb E_{\mathbb L}(X)^m
    =
    \xi_\eta\star
    \mleft(
      \frac{\mathbb E_{\mathbb L}(X)}
      {\mathbb E_{\mathbb L}(\mathbb L^aX)}
    \mright)^m .
  \]
  For every $a\in\mathbb Z$, this ratio is the finite product
  \[
    \frac{\mathbb E_{\mathbb L}(X)}
    {\mathbb E_{\mathbb L}(\mathbb L^aX)}
    =
    \begin{cases}
      \displaystyle\prod_{j=0}^{a-1}(1-\mathbb L^jX),& a>0,\\[1.1em]
      1,& a=0,\\[0.6em]
      \displaystyle
      \prod_{j=a}^{-1}(1-\mathbb L^jX)^{-1},& a<0.
    \end{cases}
  \]
  Apply the signed classical specialization $\mathbb L^{1/2}\mapsto -1$. Then $\mathbb L^j\mapsto1$ and $(-\mathbb L^{1/2})^{n+1}\mapsto1$; hence each factor $1-\mathbb L^jX$ specializes to $1-x_\gamma$ in $\mathbb C[\mathbb T_-]$, and the remaining right action on $\xi_\eta$ is read through \eqref{signed-classical-right-action}. Powers are interpreted by formal binomial expansion in the chosen completion. Therefore the one-factor signed classical limit is
  \[
    \xi_\eta
    \longmapsto
    \xi_\eta\star \left(1-x_\gamma\right)^{
    ma}.
  \]
  Substituting $a=\mathcal I(\gamma,\eta)$ and $m=(-1)^n\Omega^{\mathrm{ref}}_n(\gamma)$ gives the $n$th refined factor.

  Multiplying over all refined active classes on the ray gives the signed classical limit of the refined ray element. Since the factor $1-x_\gamma$ is independent of $n$, the product over $n$ combines to the exponent
  \[
    \sum_{n\in\mathbb Z}(-1)^n
    \Omega^{\mathrm{ref}}_n(\gamma)\mathcal I(\gamma,\eta)
    =\Omega(\gamma)\mathcal I(\gamma,\eta),
  \]
  which gives \eqref{abstract-classical-self-dual-stokes-factor}.
\end{proof}

\begin{remark}
  Fix $\gamma\in\Gamma_\sigma^{\mathrm{iso}}$ with $Z(\gamma)\in\ell$ and $\eta\in\Gamma^\sigma$. The first-order expansion of the corresponding factor in \eqref{abstract-classical-self-dual-stokes-factor} is
  \[
    \left(1-x_\gamma\right)^{
    \Omega(\gamma)\mathcal I(\gamma,\eta)}
    =
    1
    -
    \Omega(\gamma)\mathcal I(\gamma,\eta)x_\gamma
    +\cdots.
  \]
  Applying the signed classical right action gives
  \[
    \xi_\eta\star x_\gamma
    =
    (-1)^{\mathcal I(\gamma,\eta)}
    \xi_{\eta+H(\gamma)}.
  \]
  Hence the coefficient of $\xi_{\eta+H(\gamma)}$ in the first-order part of $\mathbb S(\ell)^*(\xi_\eta)$ is
  \[
    (-1)^{\mathcal I(\gamma,\eta)-1}
    \mathcal I(\gamma,\eta)\Omega(\gamma).
  \]
  Consequently, if a self-dual series contains a term $c_\eta\xi_\eta$, this term contributes
  \[
    (-1)^{\mathcal I(\gamma,\eta)-1}
    \mathcal I(\gamma,\eta)
    \Omega(\gamma)c_\eta
  \]
  to the coefficient of $\xi_{\eta+H(\gamma)}$ after applying the first-order part of the adjoint action of the ray element. This is the shifted-coefficient form of the orientifold wall-crossing factor in \cite[eq.~(13)]{young-2015-9be91531} and the orientihole halo formula in \cite[eq.~(4.6)]{denef-2010-ba429955}.
\end{remark}

From this point on, $\Omega$ denotes the numerical weight function in the classical self-dual BPS structure, and the Stokes factors are the signed classical factors $\mathbb S(\ell)^*$ associated to the self-dual incidence function.

Let $\Delta\subset\mathbb C^*$ be an acute sector whose boundary rays are non-active. If the active rays in $\Delta$ are $\ell_1,\ldots,\ell_k$ in clockwise order, define the sector Stokes operator, for the corresponding self-dual coordinates, by the pullback-ordered product
\[
  \mathbb S(\Delta)^*
  \coloneqq
  \mathbb S(\ell_k)^*\circ\cdots\circ\mathbb S(\ell_1)^*.
\]
When infinitely many active rays lie in $\Delta$, the same expression is interpreted as the limit of finite clockwise truncations in the chosen completion. Thus $\mathbb S(\Delta)^*$ is the pullback of the clockwise sector transformation, and the jump condition below is written after composing with self-dual characters.

Given a ray $\ell\subset\mathbb C^*$, let $\mathbb H_\ell$ be the Borel--Laplace half-plane defined in \zcref{resurgence-analysis}. We shall consider functions
\[
  \Psi_\ell\colon\mathbb H_\ell\to\mathbb T^\sigma.
\]
Composing with the tautological characters of $\mathbb T^\sigma$, this is equivalently the collection of functions
\[
  \Psi_{\ell,\eta}(\hbar)
  \coloneqq
  \xi_\eta(\Psi_\ell(\hbar)),
  \qquad
  \eta\in\Gamma^\sigma.
\]
For finite-limit coordinates we use the normalized characters
\[
  \widehat\Psi_{\ell,\eta}(\hbar)
  \coloneqq
  \exp\mleft(\frac{Z(\eta)}{\hbar}\mright)\Psi_{\ell,\eta}(\hbar),
  \qquad
  \eta\in\Gamma^\sigma.
\]
The self-dual Riemann--Hilbert problem depends on a choice of element $\xi^0\in\mathbb T^\sigma$, called the constant term.

\begin{problem}\label{self-dual-rh-problem}
  Fix an element $\xi^0\in\mathbb T^\sigma$. For each non-active ray $\ell\subset\mathbb C^*$, determine a holomorphic function
  \[
    \Psi_\ell\colon\mathbb H_\ell\to\mathbb T^\sigma
  \]
  satisfying the following three conditions.
  \begin{enumerate}[label=\textup{(RH\arabic*)}]
    \item \emph{Jumping.} Suppose that two non-active rays $\ell_1,\ell_2\subset\mathbb C^*$ form the boundary rays of an acute sector $\Delta\subset\mathbb C^*$, taken in clockwise order. Then
      \[
        \Psi_{\ell_2,\eta}(\hbar)
        =
        \mleft(\mathbb S(\Delta)^*(\xi_\eta)\mright)(\Psi_{\ell_1}(\hbar))
      \]
      for all $\eta\in\Gamma^\sigma$ and all $\hbar\in\mathbb H_{\ell_1}\cap\mathbb H_{\ell_2}$ with $0<\abs{\hbar}\ll1$.

    \item \emph{Finite limit at $0$.} For each non-active ray $\ell\subset\mathbb C^*$ and each class $\eta\in\Gamma^\sigma$,
      \[
        \widehat\Psi_{\ell,\eta}(\hbar)
        \longrightarrow
        \xi^0(\eta)
      \]
      as $\hbar\to0$ in the half-plane $\mathbb H_\ell$.

    \item \emph{Polynomial growth at $\infty$.} For any class $\eta\in\Gamma^\sigma$ and any non-active ray $\ell\subset\mathbb C^*$, there exists $k>0$ such that
      \[
        \abs{\hbar}^{-k}<\abs{\Psi_{\ell,\eta}(\hbar)}<\abs{\hbar}^k
      \]
      for $\hbar\in\mathbb H_\ell$ satisfying $\abs{\hbar}\gg0$.
  \end{enumerate}
\end{problem}

\subsection{The orientifolded-conifold self-dual data}

The orientifold classical specializations considered here are parametrized by
\[
  M
  =
  \setc*{(v,w)\in\mathbb C^2}{w\neq0,\ v+nw\neq0\text{ for all }n\in\mathbb Z}.
\]
This is the same parameter space used for Bridgeland's resolved-conifold BPS structures \cite{bridgeland-2020-7a4fdc8b}; in the present self-dual BPS structure this is the non-vanishing condition for the active periods introduced below. We decompose
\[
  M=M_+\sqcup M_0\sqcup M_-,
\]
where the three pieces are defined by
\[
  \rmIm(v/w)>0,\qquad
  \rmIm(v/w)=0,\qquad
  \rmIm(v/w)<0,
\]
respectively. For the rest of this section we fix $(v,w)\in M_+$. This fixes the clockwise order of the active period rays introduced below.

Let
\[
  \Gamma_{1/2}=\mathbb Z b\oplus \mathbb Z d,
  \qquad b=\frac{\beta}{2},
  \qquad d=\frac{\delta}{2}.
\]

Define
\[
  \Gamma=\Gamma_{1/2}\oplus \Gamma_{1/2}^{\vee},
  \qquad
  (a,m;p,q)=ab+md+pb^{\vee}+qd^{\vee}.
\]
Here $b^{\vee},d^{\vee}$ form the dual basis of $\Gamma_{1/2}^{\vee}$. For $(\lambda,\phi),(\lambda',\phi')\in \Gamma_{1/2}\oplus\Gamma_{1/2}^{\vee}$, the skew form is
\[
  \mleft\langle(\lambda,\phi),(\lambda',\phi')\mright\rangle
  =
  \phi(\lambda')-\phi'(\lambda).
\]
Thus $\langle b^{\vee},b\rangle=\langle d^{\vee},d\rangle=1$. Let $B\colon \Gamma_{1/2}\to \Gamma_{1/2}^{\vee}$ be given by $B(b)=b^{\vee}$ and $B(d)=d^{\vee}$, and define
\[
  \sigma(\lambda,\phi)=(B^{-1}\phi,B\lambda),
  \qquad
  \sigma(a,m;p,q)=(p,q;a,m).
\]
Then $\sigma^2=1$ and $\langle\sigma \gamma_1,\sigma \gamma_2\rangle =-\langle\gamma_1,\gamma_2\rangle$. The fixed lattice is
\[
  \Gamma^\sigma=\setc{(r,s;r,s)}{r,s\in\mathbb Z}.
\]
We use the intrinsic basis
\[
  \Gamma^\sigma=\mathbb Z \eta_{\mathrm{all}}
  \oplus\mathbb Z \eta_{\mathrm{odd}},
\]
with coordinate realization
\[
  \eta_{\mathrm{all}}=b+b^{\vee}=(1,0;1,0),
  \qquad
  \eta_{\mathrm{odd}}=d+d^{\vee}=(0,1;0,1).
\]
The first generator is the self-dual direction whose wall-crossing factor appears for every $k$; the second is the direction whose wall-crossing factor appears only for odd $k$.

The scalar crosscap Stokes jumps are indexed by the projected period labels
\[
  b+kd,
  \qquad k\in\mathbb Z.
\]
The doubled lattice supplies an O-plane-framed self-dual lift of these labels. Define the two initial lifts
\[
  u_{\mathrm{ev}}=(1,0;0,1),
  \qquad
  u_{\mathrm{odd}}=(1,0;0,-1),
\]
and the two tower shifts
\[
  S_{\mathrm{ev}}=2(\eta_{\mathrm{all}}+\eta_{\mathrm{odd}}),
  \qquad
  S_{\mathrm{odd}}=\eta_{\mathrm{odd}}-\eta_{\mathrm{all}}.
\]
The four elements
\[
  u_{\mathrm{ev}},\quad u_{\mathrm{odd}},\quad
  S_{\mathrm{ev}},\quad S_{\mathrm{odd}}
\]
form a basis after tensoring with $\mathbb Q$. Define the period-shadow projection on $\Gamma$ by restricting the $\mathbb Q$-linear map
\[
  \pi_{\mathrm{per}}\colon
  \Gamma\otimes_{\mathbb Z}\mathbb Q
  \to
  \Gamma_{1/2}\otimes_{\mathbb Z}\mathbb Q
\]
specified by
\[
  \pi_{\mathrm{per}}(u_{\mathrm{ev}})=b,
  \qquad
  \pi_{\mathrm{per}}(u_{\mathrm{odd}})=b+d,
\]
and
\[
  \pi_{\mathrm{per}}(S_{\mathrm{ev}})=2d,
  \qquad
  \pi_{\mathrm{per}}(S_{\mathrm{odd}})=2d.
\]
In the coordinate basis this is
\[
  \pi_{\mathrm{per}}(a,m;p,q)
  =(a-p)b+\left(\frac a2+2m-p-\frac q2\right)d.
\]
Define
\[
  Z_{\mathrm{per}}\colon
  \Gamma_{1/2}\otimes_{\mathbb Z}\mathbb Q
  \to\mathbb C,
  \qquad
  Z_{\mathrm{per}}(rb+sd)=\pi i(rv+sw),
  \quad r,s\in\mathbb Q,
\]
and use the central charge
\[
  Z=Z_{\mathrm{per}}\circ\pi_{\mathrm{per}}.
\]

For $u=(\lambda,\phi)\in\Gamma$, define the orientifold defect
\[
  \Delta(u)\coloneqq\phi-B\lambda.
\]
It measures the component of the lift transverse to the diagonal self-dual lift $(\lambda,B\lambda)$. In coordinates,
\[
  \Delta(u)=(p-a)b^{\vee}+(q-m)d^{\vee}.
\]

The O-plane-framed active lifts and their finite-ray support are packaged in the following lemma.
\begin{lemma}
  \label{active-lift-data}
  For each $k\in\mathbb Z$, there is a unique class $u_k\in\Gamma$ satisfying
  \begin{equation}\label{active-lift-intrinsic}
    \pi_{\mathrm{per}}(u_k)=b+kd,
    \qquad
    \Delta(u_k)=-b^{\vee}+(-1)^kd^{\vee},
    \qquad
    \langle u_k,\sigma u_k\rangle=0.
  \end{equation}
  Explicitly,
  \begin{align*}
    u_{2n}&=u_{\mathrm{ev}}+nS_{\mathrm{ev}}
    =(2n+1,2n;2n,2n+1),\\
    u_{2n+1}&=u_{\mathrm{odd}}+nS_{\mathrm{odd}}
    =(1-n,n;-n,n-1).
  \end{align*}
  Hence $Z(u_k)=\pi i(v+kw)$. If
  \[
    \ell_k\coloneqq\mathbb R_{>0}\cdot Z(u_k),
    \qquad
    \ell_\infty\coloneqq\mathbb R_{>0}\cdot\pi i w,
  \]
  then the classes $\pm u_k$ and $\pm2u_k$ satisfy the support estimate. The positive rays $\ell_k$ are pairwise distinct; $u_k$ and $2u_k$ lie on $\ell_k$, while $-u_k$ and $-2u_k$ lie on $-\ell_k$. The rays $\ell_k$ occur in clockwise order as $k$ increases; they approach $\ell_\infty$ as $k\to+\infty$, and $-\ell_\infty$ as $k\to-\infty$.
\end{lemma}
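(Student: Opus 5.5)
The plan is to solve the three conditions in \eqref{active-lift-intrinsic} directly in the coordinates $u=(a,m;p,q)$. They form a small system, linear except for one quadratic equation that will turn out to be effectively linear. Afterwards I read off the central charge and the ray geometry from the explicit solution.

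\emph{Step 1: solving the system.} The defect condition $\Delta(u)=(p-a)b^\vee+(q-m)d^\vee=-b^\vee+\varepsilon d^\vee$, with $\varepsilon=(-1)^k$, gives
\[
  p=a-1,\qquad q=m+\varepsilon .
\]
The $b$-component of $\pi_{\mathrm{per}}(u)$ is $a-p=1$, so it holds automatically. For the isotropy condition, write $u=(\lambda,\phi)$ with $\lambda=(a,m)$ and $\phi=(p,q)$, so that $\sigma u=(p,q;a,m)$. The skew form then gives
\[
  \langle u,\sigma u\rangle=(p^2+q^2)-(a^2+m^2).
\]
Substituting $p$ and $q$ turns this quadratic condition into the linear one $a=1+\varepsilon m$. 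The $d$-component of $\pi_{\mathrm{per}}$ reads $\tfrac a2+2m-p-\tfrac q2=k$, which becomes
\[
  -\tfrac a2+\tfrac{3m}2+1-\tfrac\varepsilon2=k .
\]
\begin{itemize}
  \item For $\varepsilon=1$ this reduces to $m=k=2n$, giving $u_{2n}=(2n+1,2n;2n,2n+1)$.
  \item For $\varepsilon=-1$ it reduces to $2m+1=k$, so $m=n$, giving $u_{2n+1}=(1-n,n;-n,n-1)$.
\end{itemize}
Each step determines the unknowns uniquely, which proves uniqueness. One then checks directly that $u_{\mathrm{ev}}+nS_{\mathrm{ev}}$ and $u_{\mathrm{odd}}+nS_{\mathrm{odd}}$ reproduce these coordinates. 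Since $\pi_{\mathrm{per}}(u_k)=b+kd$, we get $Z(u_k)=Z_{\mathrm{per}}(b+kd)=\pi i(v+kw)$.

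\emph{Step 2: the support estimate.} Fix a norm. From the explicit coordinates, $\norm{u_k}\le C_1(1+\abs{k})$. On the other side, $\abs{Z(ju_k)}=j\pi\abs{v+kw}$ for $j=\pm1,\pm2$. Because $(v,w)\in M$, we have $v+kw\neq0$ for every $k$, and $\abs{v+kw}\ge \abs{k}\abs{w}-\abs{v}$ for large $\abs{k}$. Together these give $\abs{v+kw}\ge C_2(1+\abs{k})$ uniformly in $k$. Hence $\abs{Z(\gamma)}>C\norm{\gamma}$ for all $\gamma\in\{\pm u_k,\pm2u_k\}$, with a single constant $C$. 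This uniformity is the one point needing care: the finitely many small $\abs{k}$ are handled by non-vanishing, the large ones by linear growth.

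\emph{Step 3: the ray geometry.} Write
\[
  \arg Z(u_k)=\arg(\pi i w)+\arg(v/w+k).
\]
Since $\rmIm(v/w)>0$, the points $v/w+k$ lie on a fixed horizontal line in the upper half-plane. As $k$ increases, their arguments decrease strictly, tending to $0$ as $k\to+\infty$ and to $\pi$ as $k\to-\infty$. Therefore the rays $\ell_k$ are pairwise distinct and occur in clockwise order. They tend to $\ell_\infty=\mathbb R_{>0}\cdot\pi iw$ as $k\to+\infty$ and to $-\ell_\infty$ as $k\to-\infty$. Finally, $Z(\pm ju_k)=\pm j\,Z(u_k)$ with $j>0$, which places $u_k,2u_k$ on $\ell_k$ and $-u_k,-2u_k$ on $-\ell_k$.
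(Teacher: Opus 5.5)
Your proposal is correct and follows the paper's proof essentially step for step: the defect condition gives $p=a-1$ and $q=m+\varepsilon$, isotropy becomes linear as $a=1+\varepsilon m$, the $d$-component of $\pi_{\mathrm{per}}$ then determines $m$, and the support estimate and ray ordering follow from $\norm{u_k}\le C_1(1+\abs{k})$, $\abs{v+kw}\ge C_2(1+\abs{k})$, and the strictly decreasing argument of $v/w+k$ in the upper half-plane. (One cosmetic fix: $\abs{Z(ju_k)}$ should read $\abs{j}\pi\abs{v+kw}$ for $j=\pm1,\pm2$.)
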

\begin{proof}
  Write $u=(a,m;p,q)$, and put $\varepsilon=(-1)^k$. The defect condition in \eqref{active-lift-intrinsic} gives $p=a-1$ and $q=m+\varepsilon$. The period condition gives
  \[
    -a+3m+2-\varepsilon=2k.
  \]
  Since
  \[
    \langle u,\sigma u\rangle
    =
    p^2+q^2-a^2-m^2,
  \]
  the isotropy condition gives $a=\varepsilon m+1$. If $k=2n$, then $\varepsilon=1$, so $m=2n$ and $u=(2n+1,2n;2n,2n+1)$. If $k=2n+1$, then $\varepsilon=-1$, so $m=n$ and $u=(1-n,n;-n,n-1)$. This proves both existence and uniqueness of the displayed lift. Applying $Z=Z_{\mathrm{per}}\circ\pi_{\mathrm{per}}$ gives $Z(u_k)=\pi i(v+kw)$.

  Fix any norm on $\Gamma\otimes\mathbb R$. The coordinate formulae show that $\norm{u_k}\le C_1(1+\abs{k})$ for some $C_1>0$, and the same estimate holds for $2u_k$. Since $w\neq0$,
  \[
    \abs{Z(u_k)}=\pi\abs{v+kw}
  \]
  is bounded below by $C_2(1+\abs{k})$ after decreasing $C_2>0$ to handle the finitely many small values of $k$. This proves the support estimate for these classes.

  If $\ell_j=\ell_k$, then $v/w+j$ and $v/w+k$ lie on the same positive real ray. Both have imaginary part $\rmIm(v/w)>0$, hence they have the same argument only when their real parts agree, so $j=k$. The placement of $\pm u_k$ and $\pm2u_k$ on the corresponding positive and negative rays follows from linearity of $Z$. Since $(v+kw)/w=v/w+k$ lies in the upper half-plane, its argument strictly decreases with $k$. The limiting directions follow from $(v+kw)/w=v/w+k$ as $k\to\pm\infty$.
\end{proof}

Thus the Borel ray is controlled by $\pi_{\mathrm{per}}(u_k)=b+kd$, while the orientifold parity is carried by the $d^{\vee}$-coefficient of $\Delta(u_k)$. The fixed $b^{\vee}$-coefficient $-1$ is the universal O-plane defect, and the isotropy condition removes the self-pairing term from the incidence factor.

The ray configuration is shown in the following figure.

\begin{NoHyper}
  \begin{figure}[htbp]
    \centering
    \begin{tikzpicture}[
        scale=1,
        every node/.style={font=\small},
        ray/.style={->,thin},
        origin/.style={circle,draw,fill=white,inner sep=1.8pt}
      ]
      \def\xmax{4.35}
      \def\xstep{1.55}
      \def\vheight{2.75}
      \coordinate (O) at (0,0);

      \draw[ray] (O) -- (\xmax,0);
      \draw[ray] (O) -- (-\xmax,0);
      \node[anchor=west] at (\xmax+0.15,0) {$\ell_\infty$};
      \node[anchor=east] at (-\xmax-0.15,0) {$-\ell_\infty$};

      \foreach \n in {-2,-1,0,1,2}{
        \coordinate (P\n) at ({\n*\xstep},\vheight);
        \coordinate (M\n) at ({-\n*\xstep},-\vheight);
        \coordinate (U\n) at ($(O)!0.55!(P\n)$);
        \coordinate (N\n) at ($(O)!0.55!(M\n)$);

        \draw[ray] (O) -- (P\n);
        \fill (U\n) circle (1.1pt);
        \fill (P\n) circle (1.1pt);

        \draw[ray] (O) -- (M\n);
        \fill (N\n) circle (1.1pt);
        \fill (M\n) circle (1.1pt);
      }

      \node[above] at (P-1) {$\ell_{-1}$};
      \node[above] at (P0) {$\ell_0$};
      \node[above] at (P1) {$\ell_1$};
      \node[anchor=east,inner sep=1pt,font=\scriptsize] at
      ($(U0)+(-0.05,-0.04)$)
      {$\pi iv$};
      \node[anchor=west,inner sep=1pt,font=\scriptsize] at ($(U2)+(0.12,-0.12)$)
      {$\pi i(v+2w)$};

      \node[below] at (M1) {$-\ell_1$};
      \node[below] at (M0) {$-\ell_0$};
      \node[below] at (M-1) {$-\ell_{-1}$};

      \node at (3.55,1.15) {$\cdots$};
      \node at (-3.55,1.15) {$\cdots$};
      \node at (3.55,-1.15) {$\cdots$};
      \node at (-3.55,-1.15) {$\cdots$};

      \node[origin] at (O) {};
    \end{tikzpicture}
    \caption{The ray diagram associated to a point $(v,w)\in M_+$.}
  \end{figure}
\end{NoHyper}

The self-dual RH lift uses the classical self-dual BPS structure $(\Gamma,Z,\Omega,\sigma)$, where $\Omega$ is the rational classical BPS weight function
\begin{equation}\label{classical-stokes-weights}
  \Omega(\gamma)
  =
  \begin{cases}
    -\dfrac{(-1)^k}{2},& \gamma=\pm u_k,\\[0.6em]
    \dfrac{(-1)^k}{8},& \gamma=\pm 2u_k,\\[0.6em]
    0,& \text{otherwise,}
  \end{cases}
\end{equation}
The active classes are exactly $\pm u_k$ and $\pm2u_k$ for $k\in\mathbb Z$. For $(v,w)\in M_+$, these data give a classical self-dual BPS structure in the sense of \zcref{classical-self-dual-bps-structure}: the lattice, skew form, central charge, and anti-symplectic involution have already been constructed, the weight function in \eqref{classical-stokes-weights} is symmetric in $\gamma\mapsto -\gamma$, and the support property is \zcref{active-lift-data}. By \zcref{active-lift-data}, each $u_k$ belongs to $\Gamma_\sigma^{\mathrm{iso}}$, hence all active classes lie in this sublattice.

For $(v,w)\in M_+$, the finite active rays are $\ell_k=\mathbb R_{>0}Z(u_k)$, $k\in\mathbb Z$. Each finite ray carries, among active classes, only $u_k$ and $2u_k$, while the opposite ray carries, among active classes, only $-u_k$ and $-2u_k$. The rays accumulate only at the limiting directions $\pm\ell_\infty$.

The fixed O-plane charge entering the incidence pairing is $o=-\eta_{\mathrm{all}}$. The orientifold incidence function is the integral map
\begin{equation}\label{orientifold-incidence}
  \begin{aligned}
    \mathcal I&\colon
    \Gamma_\sigma^{\mathrm{iso}}\times\Gamma^\sigma\to\mathbb Z,\\
    \mathcal I(\gamma,\eta)
    &=
    \frac12\langle \gamma,\sigma\gamma\rangle
    +\langle \gamma,\eta-o\rangle .
  \end{aligned}
\end{equation}
This is the convention in \cite[eq.~(3.51)]{denef-2010-ba429955} with $\Gamma_1=\gamma$, $\Gamma_1'=\sigma\gamma$, and $\Gamma_0=\eta-o$.

\begin{lemma}\label{orientifold-incidence-star-compatibility}
  The orientifold incidence function \eqref{orientifold-incidence} takes values in $\mathbb Z$ and satisfies the one-sided action identities \eqref{star-action-compatibility} on $\Gamma_\sigma^{\mathrm{iso}}$. For every $k\in\mathbb Z$, it is two-sided compatible along $u_k$ and along $-u_k$.
\end{lemma}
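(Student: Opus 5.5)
The three assertions are direct consequences of the bilinear structure of \eqref{orientifold-incidence}; the plan is to verify them in order. The basic tool is the quadratic function $Q(\gamma)\coloneqq\langle\gamma,\sigma\gamma\rangle$ on $\Gamma$.

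\emph{Integrality.} Anti-symplecticity together with skew-symmetry gives
\[
  \langle\gamma_2,\sigma\gamma_1\rangle=-\langle\sigma\gamma_2,\gamma_1\rangle=\langle\gamma_1,\sigma\gamma_2\rangle .
\]
Expanding $Q$ on a sum and using $\langle\gamma_i,\sigma\gamma_i\rangle=Q(\gamma_i)$ then yields
\[
  \tfrac12 Q(\gamma_1+\gamma_2)=\tfrac12Q(\gamma_1)+\tfrac12Q(\gamma_2)+\langle\gamma_1,\sigma\gamma_2\rangle .
\]
Hence the set of $\gamma$ with $\tfrac12Q(\gamma)\in\mathbb Z$ is closed under addition. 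It is also closed under negation, since $Q(-\gamma)=Q(\gamma)$. Every isotropic generator has $Q=0$, so $\tfrac12 Q$ is integral on all of $\Gamma_\sigma^{\mathrm{iso}}$. The term $\langle\gamma,\eta-o\rangle$ is integral because the form is $\mathbb Z$-valued and $o=-\eta_{\mathrm{all}}\in\Gamma$. Therefore $\mathcal I$ is $\mathbb Z$-valued.

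\emph{One-sided identities.} I will expand both sides of each identity in \eqref{star-action-compatibility} using the additivity formula above and linearity of $\langle\gamma,-\rangle$. For the first identity, the left side is
\[
  \tfrac12Q(\gamma_1)+\tfrac12Q(\gamma_2)+\langle\gamma_1+\gamma_2,\eta-o\rangle+\langle\gamma_1,\gamma_2\rangle+\langle\gamma_1,\sigma\gamma_2\rangle ,
\]
where the last two terms come from $\langle\gamma_1,H(\gamma_2)\rangle$. The right side $\mathcal I(\gamma_1+\gamma_2,\eta)+\langle\gamma_1,\gamma_2\rangle$ is the same expression, with the cross term $\langle\gamma_1,\sigma\gamma_2\rangle$ supplied by $\tfrac12Q(\gamma_1+\gamma_2)$. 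For the second identity, the extra term on the left is $\langle\gamma_2,H(\gamma_1)\rangle=\langle\gamma_2,\gamma_1\rangle+\langle\gamma_2,\sigma\gamma_1\rangle$. Rewriting $\langle\gamma_2,\gamma_1\rangle=-\langle\gamma_1,\gamma_2\rangle$ and $\langle\gamma_2,\sigma\gamma_1\rangle=\langle\gamma_1,\sigma\gamma_2\rangle$ gives the right side with the sign $-\langle\gamma_1,\gamma_2\rangle$. Neither identity uses the specific lattice beyond anti-symplecticity, and $\Gamma_\sigma^{\mathrm{iso}}$ is a group, so $\gamma_1+\gamma_2$ lies in the domain.

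\emph{Two-sided compatibility along $\pm u_k$.} Move all four terms of \eqref{one-generator-two-sided-compatibility} to one side. The defect is
\[
  \bigl[\mathcal I(\gamma_1,\eta)-\mathcal I(\gamma_1,\eta+H(\gamma_2))\bigr]+\bigl[\mathcal I(\gamma_2,\eta)-\mathcal I(\gamma_2,\eta+H(\gamma_1))\bigr]
  =-\langle\gamma_1,H(\gamma_2)\rangle-\langle\gamma_2,H(\gamma_1)\rangle .
\]
Now set $\gamma_1=au$ and $\gamma_2=bu$. Since $\langle u,u\rangle=0$, this becomes $-2ab\,\langle u,\sigma u\rangle$. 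For $u=u_k$ it vanishes by the isotropy condition $\langle u_k,\sigma u_k\rangle=0$ in \zcref{active-lift-data}. For $u=-u_k$ the same holds because $Q(-u_k)=Q(u_k)$.

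No real obstacle is expected. The only point needing care is the integrality of $\tfrac12\langle\gamma,\sigma\gamma\rangle$ on the generated sublattice rather than just on its isotropic generators, and the additivity formula handles this. As a cross-check I would use the coordinate formula $\langle\gamma,\sigma\gamma\rangle=p^2+q^2-a^2-m^2$. This is congruent to $a+m+p+q$ modulo $2$, and that parity is even on isotropic classes and additive.
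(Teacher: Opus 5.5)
Your proposal is correct and follows essentially the same route as the paper. You prove integrality of $\tfrac12\langle\gamma,\sigma\gamma\rangle$ through the polarization identity for the symmetric form $\langle\alpha,\sigma\beta\rangle$ (phrased as closure under addition, where the paper expands over isotropic generators), obtain the one-sided identities by direct bilinear expansion, and reduce two-sided compatibility to $\langle u_k,\sigma u_k\rangle=0$; your defect formula $-\langle\gamma_1,H(\gamma_2)\rangle-\langle\gamma_2,H(\gamma_1)\rangle=-2\langle\gamma_1,\sigma\gamma_2\rangle$ is just a compressed form of the paper's side-by-side computation.
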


\begin{proof}
  First, $\langle\gamma,\sigma\gamma\rangle/2\in\mathbb Z$ for every $\gamma\in\Gamma_\sigma^{\mathrm{iso}}$. Indeed, the form $B_\sigma(\alpha,\beta)\coloneqq \langle\alpha,\sigma\beta\rangle$ is symmetric by anti-symplecticity of $\sigma$. If $\gamma=\sum_{j=1}^{r}n_j\alpha_j$, where $n_j\in\mathbb Z$ and each $\alpha_j$ is $\sigma$-isotropic, then bilinearity gives
  \[
    \langle\gamma,\sigma\gamma\rangle
    =
    2\sum_{1\le i<j\le r}
    n_i n_j\langle\alpha_i,\sigma\alpha_j\rangle,
  \]
  which is even because the skew form is integer-valued. The skew form is integer-valued on $\Gamma$, and $o=-\eta_{\mathrm{all}}$ lies in $\Gamma^\sigma$. Hence \eqref{orientifold-incidence} takes values in $\mathbb Z$.

  Let $\gamma_1,\gamma_2\in\Gamma_\sigma^{\mathrm{iso}}$ and $\eta\in\Gamma^\sigma$. Anti-symplecticity gives
  \[
    \langle\gamma_2,\sigma\gamma_1\rangle
    =
    \langle\gamma_1,\sigma\gamma_2\rangle .
  \]
  Therefore \eqref{orientifold-incidence} gives
  \begin{align*}
    \mathcal I(\gamma_1+\gamma_2,\eta)
    &=
    \mathcal I(\gamma_1,\eta)
    +
    \mathcal I(\gamma_2,\eta)
    +
    \langle\gamma_1,\sigma\gamma_2\rangle,\\
    \mathcal I(\gamma_1,\eta+H(\gamma_2))
    &=
    \mathcal I(\gamma_1,\eta)
    +
    \langle\gamma_1,\gamma_2\rangle
    +
    \langle\gamma_1,\sigma\gamma_2\rangle,\\
    \mathcal I(\gamma_2,\eta+H(\gamma_1))
    &=
    \mathcal I(\gamma_2,\eta)
    -
    \langle\gamma_1,\gamma_2\rangle
    +
    \langle\gamma_1,\sigma\gamma_2\rangle .
  \end{align*}
  These three equalities imply \eqref{star-action-compatibility}.

  Fix $k\in\mathbb Z$. The defining condition $\langle u_k,\sigma u_k\rangle=0$ gives
  \[
    \langle au_k,\sigma(bu_k)\rangle=0,
    \qquad a,b\in\mathbb Z .
  \]
  For $\gamma_1=a u_k$ and $\gamma_2=b u_k$, the same calculation gives
  \begin{align*}
    \mathcal I(\gamma_1,\eta)
    -
    \mathcal I(\gamma_2,\eta+H(\gamma_1))
    &=
    \mathcal I(\gamma_1,\eta)
    -
    \mathcal I(\gamma_2,\eta)
    +
    \langle\gamma_1,\gamma_2\rangle,\\
    \mathcal I(\gamma_1,\eta+H(\gamma_2))
    -
    \mathcal I(\gamma_2,\eta)
    &=
    \mathcal I(\gamma_1,\eta)
    -
    \mathcal I(\gamma_2,\eta)
    +
    \langle\gamma_1,\gamma_2\rangle .
  \end{align*}
  Hence \eqref{one-generator-two-sided-compatibility} holds along $u_k$. Replacing $u_k$ by $-u_k$ gives the opposite ray.
\end{proof}

The RH factors below use the resulting completed local two-sided action for the one-generator families carried by the finite active rays.

The Stokes factors below evaluate $\mathcal I$ on the active classes $\pm u_k,\pm2u_k$ and on the self-dual coordinate directions $\eta_{\mathrm{all}}$ and $\eta_{\mathrm{odd}}$. The completed ray factors use the two active directions on a fixed finite ray from \zcref{active-lift-data}; these active classes lie in $\Gamma_\sigma^{\mathrm{iso}}$, so the preceding lemma applies.

\begin{proposition}\label{interaction-indices}
  The incidence values used in the local orientifold Stokes factors are integral. For every $k\in\mathbb Z$,
  \[
    \mathcal I(u_k,\eta_{\mathrm{all}})=-2,
    \qquad
    \mathcal I(2u_k,\eta_{\mathrm{all}})=-4,
  \]
  and
  \[
    \mathcal I(u_k,\eta_{\mathrm{odd}})=
    \begin{cases}
      0,& k\text{ even},\\
      -2,& k\text{ odd},
    \end{cases}
    \qquad
    \mathcal I(2u_k,\eta_{\mathrm{odd}})=
    \begin{cases}
      0,& k\text{ even},\\
      -4,& k\text{ odd}.
    \end{cases}
  \]
\end{proposition}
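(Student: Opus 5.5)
The plan is to evaluate the incidence function \eqref{orientifold-incidence} directly. On isotropic classes it is affine-linear in $\gamma$, and its dependence on $\gamma$ will turn out to factor entirely through the orientifold defect $\Delta(\gamma)$. Integrality is already part of \zcref{orientifold-incidence-star-compatibility}, since $u_k,2u_k\in\Gamma_\sigma^{\mathrm{iso}}$ by \zcref{active-lift-data}. It will also be visible from the explicit integer values obtained below.

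First I would dispose of the self-pairing term. By \zcref{active-lift-data}, $\langle u_k,\sigma u_k\rangle=0$, and bilinearity gives $\langle 2u_k,\sigma(2u_k)\rangle=4\langle u_k,\sigma u_k\rangle=0$. So for $\gamma\in\{u_k,2u_k\}$ the first term of \eqref{orientifold-incidence} vanishes. Since $o=-\eta_{\mathrm{all}}$, we are left with
\[
  \mathcal I(\gamma,\eta)=\langle\gamma,\eta+\eta_{\mathrm{all}}\rangle .
\]
This is linear in $\gamma$, hence $\mathcal I(2u_k,\eta)=2\,\mathcal I(u_k,\eta)$. It therefore suffices to treat $u_k$.

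Next I would compute the pairing of a general class $u=(a,m;p,q)$ with the two self-dual generators, using $\langle(\lambda,\phi),(\lambda',\phi')\rangle=\phi(\lambda')-\phi'(\lambda)$. Since $\eta_{\mathrm{all}}=(1,0;1,0)$ and $\eta_{\mathrm{odd}}=(0,1;0,1)$, this gives
\[
  \langle u,\eta_{\mathrm{all}}\rangle=p-a,
  \qquad
  \langle u,\eta_{\mathrm{odd}}\rangle=q-m .
\]
These are exactly the $b^\vee$- and $d^\vee$-coefficients of the defect $\Delta(u)$. By \eqref{active-lift-intrinsic}, $\Delta(u_k)=-b^\vee+(-1)^kd^\vee$, so $p-a=-1$ and $q-m=(-1)^k$. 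Alternatively, one can check this directly on the explicit coordinates $(2n+1,2n;2n,2n+1)$ and $(1-n,n;-n,n-1)$.

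Substituting into the reduced formula gives the two values:
\[
  \mathcal I(u_k,\eta_{\mathrm{all}})=2\langle u_k,\eta_{\mathrm{all}}\rangle=-2,
  \qquad
  \mathcal I(u_k,\eta_{\mathrm{odd}})=\langle u_k,\eta_{\mathrm{odd}}\rangle+\langle u_k,\eta_{\mathrm{all}}\rangle=(-1)^k-1 .
\]
The second is $0$ for $k$ even and $-2$ for $k$ odd. Doubling yields the values for $2u_k$. There is no real obstacle; the only point needing care is the sign convention of the skew form, namely getting $p-a$ rather than $a-p$. As a consistency check I would use $\langle b^\vee,b\rangle=1$, which the paper states explicitly.
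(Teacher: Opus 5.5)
Your proposal is correct and follows the paper's proof step by step. You use isotropy of $u_k$ to kill the self-pairing term, reduce to $\langle u_k,\eta-o\rangle$ with $o=-\eta_{\mathrm{all}}$, read off $\langle u,\eta_{\mathrm{all}}\rangle=p-a=-1$ and $\langle u,\eta_{\mathrm{odd}}\rangle=q-m=(-1)^k$ from $\Delta(u_k)=-b^{\vee}+(-1)^kd^{\vee}$, and obtain the values for $2u_k$ by linearity, all with the signs handled correctly.
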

\begin{proof}
  By \eqref{active-lift-intrinsic}, $\langle u_k,\sigma u_k\rangle=0$. Hence, for every $\eta\in\Gamma^\sigma$,
  \[
    \mathcal I(u_k,\eta)=\langle u_k,\eta-o\rangle,
    \qquad
    \mathcal I(2u_k,\eta)=\langle2u_k,\eta-o\rangle.
  \]
  The possible half-integral term in \eqref{orientifold-incidence} therefore vanishes on the active classes used here. Since $o=-\eta_{\mathrm{all}}$, for $u=(a,m;p,q)$ one has
  \[
    \langle u,\eta_{\mathrm{all}}\rangle=p-a,
    \qquad
    \langle u,\eta_{\mathrm{odd}}\rangle=q-m,
    \qquad
    \langle u,o\rangle=a-p.
  \]
  Applying \eqref{active-lift-intrinsic} to $u_k$ gives
  \[
    \langle u_k,\eta_{\mathrm{all}}\rangle=-1,
    \qquad
    \langle u_k,\eta_{\mathrm{odd}}\rangle=(-1)^k,
    \qquad
    \langle u_k,o\rangle=1.
  \]
  Substitution gives the formulas for $u_k$, and linearity gives the formulas for $2u_k$.
\end{proof}

By \zcref{active-lift-data,orientifold-incidence-star-compatibility}, the active classes used in the finite-ray Stokes factors lie in the incidence domain, and the required local two-sided action is available. \zcref{interaction-indices} gives the explicit incidence values. With the rational classical Stokes weights \eqref{classical-stokes-weights} and incidence function \eqref{orientifold-incidence}, applying \zcref{classical-self-dual-stokes-factor} gives the positive-ray factor for $\eta\in\{\eta_{\mathrm{all}},\eta_{\mathrm{odd}}\}$:
\[
  \mathbb S(\ell_n)^*(\xi_\eta)
  =
  \xi_\eta\star
  \left(1-x_{u_n}\right)^{
  \Omega(u_n)\mathcal I(u_n,\eta)}
  \left(1-x_{2u_n}\right)^{
  \Omega(2u_n)\mathcal I(2u_n,\eta)}.
\]
This is the ray factor used by \zcref{self-dual-rh-problem} in the O-plane-framed specialization. Infinite products are interpreted by ray-finiteness, convergence in a completion, or a specified regularization.

\begin{proposition}\label{positive-negative-rays}
  For every $n\in\mathbb Z$ and every $\eta\in\Gamma^\sigma$, the active classes on $-\ell_n$ are $-u_n$ and $-2u_n$, and
  \[
    \Omega(-m u_n)=\Omega(m u_n),
    \qquad
    \mathcal I(-m u_n,\eta)=-\mathcal I(m u_n,\eta),
    \qquad
    m=1,2.
  \]
  Hence the self-dual Stokes factor on the negative ray is
  \[
    \mathbb S(-\ell_n)^*(\xi_\eta)
    =
    \xi_\eta\star
    \left(1-x_{-u_n}\right)^{
    -\Omega(u_n)\mathcal I(u_n,\eta)}
    \left(1-x_{-2u_n}\right)^{
    -\Omega(2u_n)\mathcal I(2u_n,\eta)}.
  \]
  In particular, the factors on $-\ell_n$ are obtained from the positive-ray formula by the substitution $u_n\mapsto -u_n$.
\end{proposition}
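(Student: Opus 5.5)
The plan is to read everything off the explicit data already in place: \zcref{active-lift-data} for the ray placement, the definition \eqref{classical-stokes-weights} for the weights, the formula \eqref{orientifold-incidence} for the incidence function, and \zcref{classical-self-dual-stokes-factor} applied to the ray $-\ell_n$.

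\textbf{Active classes on $-\ell_n$.} The active classes are exactly $\pm u_k$ and $\pm 2u_k$. By \zcref{active-lift-data}, $Z(\pm m u_k)=\pm m\,\pi i(v+kw)$, and the positive rays $\ell_k$ are pairwise distinct. I also need that no $\ell_j$ equals $-\ell_n$. This holds because $(v+kw)/w=v/w+k$ lies in the upper half-plane for every $k$, since $(v,w)\in M_+$. Two such numbers can never be negative real multiples of each other. Hence the classes with $Z$ on $-\ell_n$ are precisely $-u_n$ and $-2u_n$.

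\textbf{Symmetry of the weights.} $\Omega(-mu_n)=\Omega(mu_n)$ is immediate from \eqref{classical-stokes-weights}, which assigns the same value to $\pm u_k$ and to $\pm 2u_k$.

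\textbf{Oddness of the incidence function.} I expand \eqref{orientifold-incidence} and note that $-mu_n$ is still $\sigma$-isotropic, so it lies in $\Gamma_\sigma^{\mathrm{iso}}$. The term $\tfrac12\langle\gamma,\sigma\gamma\rangle$ is quadratic in $\gamma$, so it is even under $\gamma\mapsto-\gamma$. It also vanishes here, because $\langle u_n,\sigma u_n\rangle=0$ by \eqref{active-lift-intrinsic}. The term $\langle\gamma,\eta-o\rangle$ is linear in $\gamma$. Therefore
\[
  \mathcal I(-mu_n,\eta)=\langle -mu_n,\eta-o\rangle=-\mathcal I(mu_n,\eta).
\]

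\textbf{The negative-ray factor.} \zcref{classical-self-dual-stokes-factor} applies, since all active classes on $-\ell_n$ lie in $\Gamma_\sigma^{\mathrm{iso}}$, and \zcref{orientifold-incidence-star-compatibility} gives two-sided compatibility along $-u_n$. Substituting the classes on $-\ell_n$ gives
\[
  \mathbb S(-\ell_n)^*(\xi_\eta)
  =\xi_\eta\star\prod_{m=1,2}(1-x_{-mu_n})^{\Omega(-mu_n)\mathcal I(-mu_n,\eta)}.
\]
Inserting the two symmetry relations turns each exponent into $-\Omega(mu_n)\mathcal I(mu_n,\eta)$. This is the displayed formula, and it is exactly the positive-ray formula with $u_n\mapsto-u_n$ applied to every occurrence of the class.

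\textbf{Main obstacle.} The only point needing care is that the two factors on a single ray commute, so their order is irrelevant. Both are series in powers of $x_{-u_n}$, and $\langle u_n,u_n\rangle=0$, so the twisted product $x_{\gamma_1}x_{\gamma_2}=(-1)^{\langle\gamma_1,\gamma_2\rangle}x_{\gamma_1+\gamma_2}$ carries no sign among these powers. Moreover $(1-x_{-u_n})^\alpha$ is a well-defined binomial series in the ray-adic completion generated by $-u_n$. The ray-finiteness established above then makes the product over $\gamma$ finite.
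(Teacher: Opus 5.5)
Your proposal is correct and follows essentially the same route as the paper: place the classes via $Z(-mu_n)=-Z(mu_n)$, read off $\Omega(-mu_n)=\Omega(mu_n)$ from \eqref{classical-stokes-weights}, use $\langle u_n,\sigma u_n\rangle=0$ to remove the quadratic term of \eqref{orientifold-incidence} so that $\mathcal I$ is odd, and substitute into \eqref{abstract-classical-self-dual-stokes-factor}. Your upper-half-plane check that no $\ell_j$ coincides with $-\ell_n$ makes explicit a point the paper leaves implicit, and your commutation concern is unnecessary because $\mathbb C[\mathbb T_-]$ is commutative.
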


\begin{proof}
  Since $Z(-m u_n)=-Z(m u_n)$, the active classes on $-\ell_n$ are $-u_n$ and $-2u_n$. The equality $\Omega(-m u_n)=\Omega(m u_n)$ is immediate from \eqref{classical-stokes-weights}. By \eqref{active-lift-intrinsic}, $\langle u_n,\sigma u_n\rangle=0$, hence $\langle m u_n,\sigma(m u_n)\rangle=0$ for $m=1,2$. Therefore \eqref{orientifold-incidence} gives
  \[
    \mathcal I(m u_n,\eta)=\langle m u_n,\eta-o\rangle,
    \qquad
    \mathcal I(-m u_n,\eta)=-\langle m u_n,\eta-o\rangle,
  \]
  proving the incidence identity. Substituting the two active classes $-u_n$ and $-2u_n$ in \eqref{abstract-classical-self-dual-stokes-factor} gives the displayed negative-ray factor.
\end{proof}

\begin{corollary}\label{self-dual-ray-factor}
  Across the positive active ray $\ell_k$ carrying $u_k$ and $2u_k$,
  \begin{equation}\label{local-eta-all}
    \mathbb S(\ell_k)^*(\xi_{\eta_{\mathrm{all}}})
    =
    \xi_{\eta_{\mathrm{all}}}\star
    \left(\frac{1-x_{u_k}}{1+x_{u_k}}\right)^{(-1)^k/2},
  \end{equation}
  and
  \begin{equation}\label{local-eta-odd}
    \mathbb S(\ell_k)^*(\xi_{\eta_{\mathrm{odd}}})
    =
    \begin{cases}
      \xi_{\eta_{\mathrm{odd}}}\star
      \left(\dfrac{1-x_{u_k}}{1+x_{u_k}}\right)^{-1/2},&
      k\text{ odd},\\[0.8em]
      \xi_{\eta_{\mathrm{odd}}}\star 1,& k\text{ even}.
    \end{cases}
  \end{equation}
\end{corollary}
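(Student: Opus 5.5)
The plan is to start from the positive-ray factor displayed just before \zcref{positive-negative-rays}:
\[
  \mathbb S(\ell_k)^*(\xi_\eta)
  =
  \xi_\eta\star
  (1-x_{u_k})^{\Omega(u_k)\mathcal I(u_k,\eta)}
  (1-x_{2u_k})^{\Omega(2u_k)\mathcal I(2u_k,\eta)},
\]
and substitute the weights \eqref{classical-stokes-weights} together with the incidence values of \zcref{interaction-indices}. For $\eta=\eta_{\mathrm{all}}$ the two exponents are
\[
  \Omega(u_k)\mathcal I(u_k,\eta_{\mathrm{all}})=-\tfrac{(-1)^k}{2}\cdot(-2)=(-1)^k,
  \qquad
  \Omega(2u_k)\mathcal I(2u_k,\eta_{\mathrm{all}})=\tfrac{(-1)^k}{8}\cdot(-4)=-\tfrac{(-1)^k}{2}.
\]
For $\eta=\eta_{\mathrm{odd}}$, both exponents vanish when $k$ is even, which gives $\xi_{\eta_{\mathrm{odd}}}\star 1$. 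When $k$ is odd they equal $-1$ and $+1/2$.

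Next I rewrite $x_{2u_k}$ in terms of $x_{u_k}$ inside $\mathbb C[\mathbb T_-]$. The twisted product rule gives
\[
  x_{u_k}x_{u_k}=(-1)^{\langle u_k,u_k\rangle}x_{2u_k}=x_{2u_k},
\]
because the skew form vanishes on the diagonal. Hence $1-x_{2u_k}=(1-x_{u_k})(1+x_{u_k})$ as formal series in the ray-adic completion. The fractional powers are binomial series with constant term $1$, so
\[
  (1-x_{2u_k})^{\alpha}=(1-x_{u_k})^\alpha(1+x_{u_k})^\alpha
\]
holds in the completion for rational $\alpha$. This follows, for example, by comparing formal logarithms.

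Combining the factors:
\begin{itemize}
  \item For $\eta_{\mathrm{all}}$, the product is $(1-x_{u_k})^{(-1)^k-(-1)^k/2}(1+x_{u_k})^{-(-1)^k/2}$, which equals $\bigl((1-x_{u_k})/(1+x_{u_k})\bigr)^{(-1)^k/2}$. This is \eqref{local-eta-all}.
  \item For $\eta_{\mathrm{odd}}$ with $k$ odd, the product is $(1-x_{u_k})^{-1+1/2}(1+x_{u_k})^{1/2}=\bigl((1-x_{u_k})/(1+x_{u_k})\bigr)^{-1/2}$. This is \eqref{local-eta-odd}.
\end{itemize}

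The main point requiring care is that these algebraic manipulations happen in the commutative algebra $\mathbb C[\mathbb T_-]$, in the completion generated by the single class $u_k$, before the signed right action on $\xi_\eta$ is applied termwise. I would note that this is legitimate because the right action \eqref{signed-classical-right-action} is extended termwise to the completed series: the identity is first established as an identity of formal series in $x_{u_k}$, and the action is then applied. The two-sided compatibility of \zcref{orientifold-incidence-star-compatibility} along $u_k$ guarantees that the one-generator family $\{au_k\}$ acts consistently, so no reordering issue arises.
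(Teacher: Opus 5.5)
Your proposal is correct and matches the paper's proof: substitute the weights \eqref{classical-stokes-weights} and the incidence values $-2,-4$ (or $0$ for $\eta_{\mathrm{odd}}$ with $k$ even) into the positive-ray factor, use $x_{2u_k}=x_{u_k}^2$ (since $\langle u_k,u_k\rangle=0$), and factor $1-x_{u_k}^2=(1-x_{u_k})(1+x_{u_k})$. Your added remark, that the identity is first established as formal series in the completed commutative algebra and only then acted on termwise, is a harmless clarification that the paper leaves implicit.
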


\begin{proof}
  For $\eta_{\mathrm{all}}$, the local contribution is
  \[
    (1-x_{u_k})^{\Omega(u_k)(-2)}
    (1-x_{2u_k})^{\Omega(2u_k)(-4)}
    =(1-x_{u_k})^{(-1)^k}
    (1-x_{u_k}^2)^{-(-1)^k/2}
    =
    \left(\frac{1-x_{u_k}}{1+x_{u_k}}\right)^{(-1)^k/2}.
  \]
  This gives \eqref{local-eta-all}. For $\eta_{\mathrm{odd}}$, the index vanishes for even $k$, while the odd case gives the same expression with exponent $-1/2$.
\end{proof}

The doubled lattice and anti-symplectic involution give the orientifold classical self-dual BPS structure. The O-plane charge $o=-\eta_{\mathrm{all}}$, the O-plane-framed lifts, and the local factors \eqref{local-eta-all}--\eqref{local-eta-odd} give the incidence values and wall-crossing factors used below.

For an acute sector $\Delta$, let $N(\Delta)$ be the set of active indices whose rays lie in $\Delta$. Then
\begin{align*}
  \mathbb S(\Delta)^*(\xi_{\eta_{\mathrm{all}}})
  &=\xi_{\eta_{\mathrm{all}}}\star
  \prod_{k\in N(\Delta)}
  \left(\frac{1-x_{u_k}}
  {1+x_{u_k}}\right)^{(-1)^k/2},\\
  \mathbb S(\Delta)^*(\xi_{\eta_{\mathrm{odd}}})
  &=\xi_{\eta_{\mathrm{odd}}}\star
  \prod_{\substack{k\in N(\Delta)\\ k\ \mathrm{odd}}}
  \left(\frac{1-x_{u_k}}
  {1+x_{u_k}}\right)^{-1/2}.
\end{align*}

\subsection{The coordinate crosscap Riemann--Hilbert problem}

The oriented part is half of the resolved-conifold problem solved by Bridgeland \cite{bridgeland-2020-7a4fdc8b}, so we focus on the crosscap self-dual sector. We write the coordinate problem in the normalized characters of \zcref{self-dual-rh-problem}, with constant term
\[
  \xi^0(\eta_{\mathrm{all}})
  =
  \xi^0(\eta_{\mathrm{odd}})
  =
  1.
\]
Let $r_n$ be a non-active ray between the active rays indexed by $n-1$ and $n$, and set
\[
  B_n(\hbar)\coloneqq\widehat\Psi_{r_n,\eta_{\mathrm{all}}}(\hbar),
  \qquad
  D_n(\hbar)\coloneqq\widehat\Psi_{r_n,\eta_{\mathrm{odd}}}(\hbar).
\]
Write
\[
  -\ell_k=\pi i(v+kw)\,\mathbb{R}_{<0},
  \qquad
  \mathcal H(k)=
  \setc*{\hbar\in\mathbb C^*}{\hbar=ab,\ a\in \ell_k,\ \rmRe(b)>0}.
\]
Throughout this subsection, set
\[
  x=\exp(-2\pi iv/\hbar),
  \qquad
  q=\exp(-2\pi iw/\hbar),
\]
and write $x^{1/2}q^{j/2}$ for $\exp(-\pi i(v+jw)/\hbar)$, $j\in\mathbb Z$. As in Bridgeland's null-class solution, we set the coefficient variables on the ray $\ell_j$ to
\[
  x_{m u_j}
  =
  \exp\mleft(-\frac{mZ(u_j)}{\hbar}\mright),
  \qquad
  m\in\mathbb Z.
\]
This is compatible with the ray algebra because $\langle u_j,u_j\rangle=0$, so $x_{m u_j}=x_{u_j}^m$ on the subalgebra generated by $u_j$. In particular,
\[
  x_{u_j}=\exp\mleft(-\frac{Z(u_j)}{\hbar}\mright)=x^{1/2}q^{j/2}.
\]
Under the relation $\lambda_{\mathrm B}=-\hbar$, the coordinate half-plane $\mathcal H(k)$ centered at $\ell_k$ is carried to the Borel half-plane centered at $-\ell_k$. The product identities on $\mathcal V(0)\cap-\mathcal V(0)$ are taken on the component determined by \zcref{positive-negative-rays}. With the pullback convention in \zcref{self-dual-rh-problem}, crossing clockwise from $r_n$ to $r_{n+1}$ across $\ell_n$ applies the local factors of \zcref{self-dual-ray-factor} to the normalized coordinates.

This gives the coordinate crosscap Riemann--Hilbert problem.
\begin{problem}\label{coordinate-crosscap-rh-problem}
  Fix $(v,w)\in M_+$. For each $n\in\mathbb Z$ find holomorphic functions $B_n(\hbar)$ and $D_n(\hbar)$ on the region
  \[
    \mathcal V(n)=\mathcal H(n-1)\cup\mathcal H(n),
  \]
  satisfying the following properties.
  \begin{enumerate}
    \item As $\hbar\to0$ in any closed subsector of $\mathcal V(n)$ one has
      \[
        B_n(\hbar)\to1,
        \qquad
        D_n(\hbar)\to1.
      \]
    \item For each $n\in\mathbb Z$ there exists $K>0$ such that for any closed subsector of $\mathcal V(n)$,
      \[
        \abs{\hbar}^{-K}<\abs{B_n(\hbar)}<\abs{\hbar}^K,
        \qquad
        \abs{\hbar}^{-K}<\abs{D_n(\hbar)}<\abs{\hbar}^K,
        \qquad
        \abs{\hbar}\gg0.
      \]
    \item On the common half-plane $\mathcal H(n)\subset\mathcal V(n)\cap\mathcal V(n+1)$ there are relations
      \begin{align*}
        B_{n+1}(\hbar)
        &=B_n(\hbar)
        \left(\frac{1-x^{1/2}q^{n/2}}
        {1+x^{1/2}q^{n/2}}\right)^{(-1)^n/2},\\
        D_{n+1}(\hbar)
        &=D_n(\hbar)
        \left(\frac{1-x^{1/2}q^{n/2}}
        {1+x^{1/2}q^{n/2}}\right)^{-(1-(-1)^n)/4}.
      \end{align*}
    \item On this component there are relations
      \[
        B_0(\hbar)B_0(-\hbar)=
        \prod_{m\geq0}
        \left(\frac{1-x^{1/2}q^{m/2}}{1+x^{1/2}q^{m/2}}
        \right)^{-(-1)^m/2}
        \prod_{m\geq1}
        \left(\frac{1-x^{-1/2}q^{m/2}}{1+x^{-1/2}q^{m/2}}
        \right)^{-(-1)^m/2},
      \]
      and
      \[
        D_0(\hbar)D_0(-\hbar)=
        \prod_{\substack{m\geq0\\ m\ \mathrm{odd}}}
        \left(\frac{1-x^{1/2}q^{m/2}}{1+x^{1/2}q^{m/2}}
        \right)^{1/2}
        \prod_{\substack{m\geq1\\ m\ \mathrm{odd}}}
        \left(\frac{1-x^{-1/2}q^{m/2}}{1+x^{-1/2}q^{m/2}}
        \right)^{1/2}.
      \]
  \end{enumerate}
\end{problem}
The products in \textup{(iv)} are interpreted on the component where $\abs{q}<1$ and the displayed factors avoid the negative real axis; the fractional powers use the logarithm branches obtained from the $E_1$-representation in \zcref{xi-e1}.  On other components the same formulae are understood by analytic continuation from this branch. The absence of a $\prod_{k\geq1}(1-q^k)^{2k}$ factor reflects the absence of active pure point crosscap charges proportional to $\delta$ alone.

The same jump data can be encoded as a two-step difference problem. The physical translation $b\mapsto b-md$, $d\mapsto d$ induces $(v,w)\mapsto(v+mw,w)$; the parity branch is preserved for even $m$.

\begin{problem}\label{crosscap-difference-rh-problem}
  Find holomorphic functions $B(v,w,\hbar)$ and $D(v,w,\hbar)$ defined for $(v,w)\in M_+$ and $\hbar\in\mathbb C^*$ lying in the region
  \[
    \mathcal V(0)=\mathcal H(-1)\cup\mathcal H(0),
  \]
  satisfying the following properties.
  \begin{enumerate}
    \item For fixed $(v,w)\in M_+$, one has
      \[
        B(v,w,\hbar)\to1,
        \qquad
        D(v,w,\hbar)\to1,
      \]
      as $\hbar\to0$ in any closed subsector of $\mathcal V(0)$.
    \item For fixed $(v,w)\in M_+$ there exists $K>0$ such that for any closed subsector of $\mathcal V(0)$,
      \[
        \abs{\hbar}^{-K}<\abs{B(v,w,\hbar)}<\abs{\hbar}^K,
        \qquad
        \abs{\hbar}^{-K}<\abs{D(v,w,\hbar)}<\abs{\hbar}^K,
        \qquad
        \abs{\hbar}\gg0.
      \]
    \item For $(v,w)\in M_+$ and $\hbar\in\mathcal H(0)\cap\mathcal H(1)$ there are relations
      \begin{align*}
        \frac{B(v+2w,w,\hbar)}{B(v,w,\hbar)}
        &=
        \left(\frac{1-x^{1/2}}{1+x^{1/2}}\right)^{1/2}
        \left(\frac{1-x^{1/2}q^{1/2}}
        {1+x^{1/2}q^{1/2}}\right)^{-1/2},\\
        \frac{D(v+2w,w,\hbar)}{D(v,w,\hbar)}
        &=
        \left(\frac{1-x^{1/2}q^{1/2}}
        {1+x^{1/2}q^{1/2}}\right)^{-1/2}.
      \end{align*}
    \item On this component there are relations
      \begin{align*}
        B(v,w,\hbar)B(v,w,-\hbar)
        &=
        \prod_{m\geq0}
        \left(\frac{1-x^{1/2}q^{m/2}}{1+x^{1/2}q^{m/2}}
        \right)^{-(-1)^m/2}
        \prod_{m\geq1}
        \left(\frac{1-x^{-1/2}q^{m/2}}{1+x^{-1/2}q^{m/2}}
        \right)^{-(-1)^m/2},\\
        D(v,w,\hbar)D(v,w,-\hbar)
        &=
        \prod_{\substack{m\geq0\\ m\ \mathrm{odd}}}
        \left(\frac{1-x^{1/2}q^{m/2}}{1+x^{1/2}q^{m/2}}
        \right)^{1/2}
        \prod_{\substack{m\geq1\\ m\ \mathrm{odd}}}
        \left(\frac{1-x^{-1/2}q^{m/2}}{1+x^{-1/2}q^{m/2}}
        \right)^{1/2}.
      \end{align*}
  \end{enumerate}
\end{problem}

\subsection{Scalar Stokes factors from the resummed crosscap potential}

The coordinate RH functions are obtained from the resummed crosscap potential. In the notation of \zcref{resurgence-analysis}, the scalar Kähler parameter is obtained by substituting $t=v/w$. Since the scalar convention is $\check\lambda=\lambda/(2\pi)$, substituting $\lambda=2\pi\lambda_{\mathrm B}/w$ gives $\check\lambda=\lambda_{\mathrm B}/w$. The variable $\hbar$ in the coordinate problems below is the Riemann--Hilbert argument. With these variables, the RH Stokes operator along $\ell_k=\mathbb R_{>0}\cdot\pi i(v+kw)$ is compared with the scalar transition factor $\Xi_{-\ell_k}$ across the negative Borel ray. The fixed-lattice directions are recovered from the elementary shifts: $\eta_{\mathrm{all}}$ from $v\mapsto v-\hbar$, and $\eta_{\mathrm{odd}}$ from $w\mapsto w-\hbar$; the latter has nontrivial finite-ray jumps only for odd $k$.

On the potential side we use the negative rays $-\ell_k$, together with $-\ell_\infty$, where $\ell_\infty\coloneqq\mathbb R_{>0}\cdot\pi iw$. The ray $-\ell_\infty$ is the reference ray for the normalization.

The value at $v=0$ is defined by a sectorial limit. At $z=0$, the labelled poles with $k=0$ collapse to the Borel origin, while the nonzero finite rays coalesce to the limiting directions $\pm\ell_\infty$. We use this sectorial limiting value to normalize the functions below.

\begin{definition}\label{normalized-crosscap-sectorial-functions}
  Let $\rho$ be a ray avoiding $\{\pm\ell_k\}_{k\in\mathbb Z}\cup\{\pm\ell_\infty\}$. For $v\ne0$, define
  \[
    \mathcal F_\rho(\lambda_{\mathrm B},v,w)
    \coloneqq
    \mathcal S_{w^{-1}\rho}\widetilde F_\textup{cc}
    \mleft(\frac{2\pi\lambda_{\mathrm B}}{w},\frac vw\mright).
  \]
  Let $\mathcal U_\rho$ be the connected component, determined by $\rho$, of a punctured neighbourhood of $0$ in the $z$-plane such that, for $z'\in\mathcal U_\rho$, the ray $w^{-1}\rho$ stays in the corresponding non-Stokes sector for the rays
  \[
    \pi i(z'+k)\mathbb{R}_{<0},
    \qquad k\in\mathbb Z,
  \]
  and approaches the prescribed side of the limiting directions as $z'\to0$. Define the sectorial value at $v=0$ by
  \[
    \mathcal F_\rho^0(\lambda_{\mathrm B},w)
    \coloneqq
    \lim_{\substack{z'\to0\\ z'\in\mathcal U_\rho}}
    \mathcal S_{w^{-1}\rho}\widetilde F_\textup{cc}
    \mleft(\frac{2\pi\lambda_{\mathrm B}}{w},z'\mright),
  \]
  whenever this limit exists. For $v\ne0$, put
  \[
    \widehat{\mathcal F}_\rho(\lambda_{\mathrm B},v,w)
    \coloneqq
    \mathcal F_\rho(\lambda_{\mathrm B},v,w)
    -
    \mathcal F_\rho^0(\lambda_{\mathrm B},w),
    \qquad
    \widehat Z_\rho(\lambda_{\mathrm B},v,w)
    \coloneqq
    \exp\widehat{\mathcal F}_\rho(\lambda_{\mathrm B},v,w).
  \]
  We extend the normalized functions to $v=0$ by
  \[
    \widehat{\mathcal F}_\rho(\lambda_{\mathrm B},0,w)\coloneqq0,
    \qquad
    \widehat Z_\rho(\lambda_{\mathrm B},0,w)\coloneqq1.
  \]
  For the non-active sector $\rho_n$ between the negative active rays $-\ell_n$ and $-\ell_{n-1}$, set
  \[
    \tau_n(v,w,\hbar)
    \coloneqq
    \widehat Z_{\rho_n}(-\hbar,v,w).
  \]
\end{definition}

\begin{lemma}\label{crosscap-vzero-limit}
  Let $\rho$ be a ray as in \zcref{normalized-crosscap-sectorial-functions}. On the sectorial domain where $\lambda_{\mathrm B}/w$ belongs to the corresponding Borel--Laplace half-plane and avoids the zero and pole divisors of the double-sine factors below, the constrained limit defining $\mathcal F_\rho^0(\lambda_{\mathrm B},w)$ exists and is holomorphic in $\lambda_{\mathrm B}$. In the case $w^{-1}\rho=\mathbb R_{>0}$, writing $\tau=\lambda_{\mathrm B}/w$, one has
  \[
    \mathcal F_\rho^0(\lambda_{\mathrm B},w)
    =
    \frac12\log F_2\vargs{\tau}{2\tau,1}
    -
    \log F_2\vargs{\tau}{2\tau,2},
  \]
  with the logarithm branch obtained from the chosen sector.
\end{lemma}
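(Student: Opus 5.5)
The plan is to compute the sectorial sum $\mathcal S_{w^{-1}\rho}\widetilde F_\textup{cc}(2\pi\lambda_{\mathrm B}/w,z')$ for $z'$ near $0$ in closed form, and then pass to the limit $z'\to0$ inside that closed form. I treat the case $w^{-1}\rho=\mathbb R_{>0}$ first and obtain the other rays by analytic continuation.

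\textbf{Case $w^{-1}\rho=\mathbb R_{>0}$.} For $z'\in\mathcal U_\rho$ with $0<\rmRe(z')<1$, $\rmIm(z')>0$, and $\tau=\lambda_{\mathrm B}/w$ in the sector bounded by $l_0,l_{-1}$, the identity \eqref{crosscap-positive-real-sectorial-sum} of \zcref{crosscap-resurgence} gives
\[
  \mathcal S_{\mathbb R_{>0}}\widetilde F_\textup{cc}(2\pi\tau,z')
  =\frac12\log F_2\vargs{z'+\tau}{2\tau,1}-\log F_2\vargs{z'+\tau}{2\tau,2}.
\]
The additional hypothesis $\rmRe(z')<\rmRe(\tau+1)$ holds automatically for small $z'$, since $\rmRe\tau>0$ on this sector. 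By \zcref{f2-bridgeland-properties}(i), $F_2\vargs{z}{\omega_1,\omega_2}$ is meromorphic in $z$, with zeros and poles only on the lattice $\mathbb Z\omega_1+\mathbb Z\omega_2$. The argument value $z=\tau$ lies on neither lattice $\mathbb Z(2\tau)+\mathbb Z$ nor $\mathbb Z(2\tau)+2\mathbb Z$, provided $\tau\notin\tfrac12\mathbb Q$-type degenerate values; this is exactly the excluded divisor in the statement. On the complement, both factors are therefore holomorphic and non-vanishing near $z=\tau$. The logarithms are fixed by the integral representation \eqref{f2-integral-representation}, which remains valid at $z'=0$ because $0<\rmRe\tau<\rmRe(2\tau+1)$, so they depend continuously on $z'$. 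The limit $z'\to0$ is then simply evaluation, which yields the stated formula. Holomorphy in $\lambda_{\mathrm B}$ follows from the joint holomorphy of the integral representation in $(z,\omega_1,\omega_2)$.

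\textbf{General $\rho$.} Here I use the sectorial statement, \zcref{crosscap-resurgence}(ii), together with part (iii). For a general non-Stokes ray, $\mathcal S_{w^{-1}\rho}$ differs from the positive-real sum by a finite (or convergent) product of Stokes jumps of the form $\log(1+e^{\pm\pi i(z'+k)/\tau})$ over the rays crossed. Equivalently, $\mathcal S_\rho$ is the analytic continuation in $\tau$ of the $F_2$ expression within the half-plane $\mathbb H_\rho$, corrected by those jumps. Existence of the limit then reduces to two checks: the $F_2$ expression extends holomorphically to $z'=0$ off its divisor, and each crossed jump term has a limit as $z'\to0$ from the side prescribed by $\mathcal U_\rho$. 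The second check holds because, for $k\ne0$, the quantities $e^{\pi i(z'+k)/\tau}$ converge. The $k=0$ rays collapse to the origin, and the constraint $z'\in\mathcal U_\rho$ keeps $w^{-1}\rho$ on a fixed side of all of them, so no $k=0$ jump is ever crossed.

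\textbf{Main obstacle.} The delicate point is the collapse of the $k=0$ poles $\pi i z'm$ to the Borel origin. Here one must confirm that the Laplace integral converges uniformly as $z'\to0$ within $\mathcal U_\rho$. I would handle this not through the Borel integral but through the closed-form double-sine expression and its continuation, which avoids any uniformity estimate on $G_\textup{cc}$ near $\xi=0$. The remaining care goes into tracking the logarithm branch: this is done by continuity from the integral representation along the path $z'\to0$.
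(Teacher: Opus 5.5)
Your route is the paper's: evaluate the positive-real sum with the double-sine formula of \zcref{crosscap-resurgence}\textup{(ii)}, let $z'\to0$, and reach other rays by adding the jumps of part~\textup{(iii)}. The positive-real case is fine. Your observation that \eqref{f2-integral-representation} still applies at $z'=0$ (because $\rmRe\tau>0$) is a clean way to fix the branch and obtain holomorphy in $\lambda_{\mathrm B}$.

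The flaw is the claim that no $k=0$ jump is ever crossed. What collapses to the Borel origin are the $k=0$ \emph{poles} $\pi i z'm$. The ray $l_0=\pi i z'\mathbb R_{<0}$ does not collapse: it keeps the direction $\arg z'-\pi/2$. Staying in $\mathcal U_\rho$ keeps $w^{-1}\rho$ on one fixed side of $l_0$, but nothing forces that to be the side containing $\mathbb R_{>0}$. For instance, let $w^{-1}\rho$ have direction $\theta\in(-\pi/2,0)$, and let $\mathcal U_\rho$ be the component on which it lies between $l_1$ and $l_0$. For $z'\in\mathcal U_\rho$ in the first quadrant, where your $F_2$ formula holds, the ray $l_0$ separates $\mathbb R_{>0}$ from $w^{-1}\rho$. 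Part~\textup{(iii)} with $k=0$ then gives
\[
  \mathcal S_{w^{-1}\rho}\widetilde F_{\textup{cc}}(2\pi\tau,z')
  =\mathcal S_{\mathbb R_{>0}}\widetilde F_{\textup{cc}}(2\pi\tau,z')
  +\tfrac12\log\bigl(1+e^{\pi i z'/\tau}\bigr).
\]
Since $e^{\pi i z'/\tau}\to1$, this term tends to $\tfrac12\log2$. So existence and holomorphy survive once the term is included, and this is exactly what the paper does: it records the $k=0$ contribution as a branch constant $\pm\tfrac12\log2$. As written, though, your bookkeeping gives a value of $\mathcal F^0_\rho$ that is off by $\pm\tfrac12\log 2$ on every sector separated from $\mathbb R_{>0}$ by $\pm l_0$. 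These constants are precisely what \zcref{limiting-crosscap-jump-trivial} needs to cancel the collapsed $k=0$ factors, so the error propagates.

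A smaller point: you allow the comparison path from $\mathbb R_{>0}$ to $w^{-1}\rho$ to cross infinitely many Stokes rays (``finite (or convergent)''). The rays $\pm l_k$, $k\ne0$, accumulate at $\pm l_\infty$ and converge to them as $z'\to0$. Exchanging $z'\to0$ with an infinite sum of jump terms therefore needs a tail estimate uniform in $z'$, which you do not give. The paper avoids this by taking the comparison path inside a closed angular region away from $\pm l_\infty$, so that only finitely many jumps, each with an explicit limit, are added. You should either do the same or supply the uniform estimate.
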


\begin{proof}
  Put $\tau=\lambda_{\mathrm B}/w$. In the case $w^{-1}\rho=\mathbb R_{>0}$, \zcref{crosscap-resurgence}\textup{(ii)} gives, for $z'$ in the corresponding non-Stokes sector,
  \[
    \mathcal S_{\mathbb R_{>0}}\widetilde F_\textup{cc}
    (2\pi\tau,z')
    =
    \frac12\log F_2\vargs{z'+\tau}{2\tau,1}
    -
    \log F_2\vargs{z'+\tau}{2\tau,2}.
  \]
  By \zcref{f2-bridgeland-properties}, the function $F_2$ is meromorphic in its argument and periods, and part~\textup{(i)} controls the zero and pole divisors. Hence the right-hand side has a finite holomorphic limit as $z'\to0$, away from these divisors. This gives the displayed formula.

  For another non-active sector $\rho$, choose the component $\mathcal U_\rho$ so that the path from $\mathbb R_{>0}$ to $w^{-1}\rho$ stays in a closed angular subregion avoiding the limiting directions. Since the finite Stokes rays accumulate only at $\pm l_\infty$, such a path crosses only finitely many finite Stokes rays. Across each crossed ray, \zcref{crosscap-resurgence}\textup{(iii)} adds one logarithmic jump. Along $\mathcal U_\rho$, each crossed-ray logarithm has a finite limit as $z'\to0$; for the degenerate label $k=0$ this limit is one of the branch constants $\pm(\log2)/2$. Adding these finitely many limiting jump terms to the base limiting value proves existence and holomorphic dependence on $\lambda_{\mathrm B}$ on the stated sectorial domain.
\end{proof}

\begin{proposition}\label{crosscap-finite-ray-jump}
  Let $\rho_k$ lie between $-\ell_k$ and $-\ell_{k-1}$. Then
  \begin{equation}\label{crosscap-finite-ray-jump-formula}
    \begin{aligned}
      \Phi_{\pm\ell_k}(\lambda_{\mathrm B},v,w)
      &\coloneqq\mathcal F_{\mp\rho_k}(\lambda_{\mathrm B},v,w)
      -\mathcal F_{\mp\rho_{k+1}}(\lambda_{\mathrm B},v,w)\\
      &=\pm\frac{(-1)^k}{2}
      \log\mleft(1+e^{\mp\pi i(v+kw)/\lambda_{\mathrm B}}\mright).
    \end{aligned}
  \end{equation}
\end{proposition}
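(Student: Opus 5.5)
This proposition is a change of variables applied to the crosscap Stokes jump \eqref{crosscap-stokes-jump} of \zcref{crosscap-resurgence}\textup{(iii)}. By \zcref{normalized-crosscap-sectorial-functions},
\[
  \mathcal F_{\mp\rho_k}(\lambda_{\mathrm B},v,w)
  =\mathcal S_{\mp w^{-1}\rho_k}\widetilde F_\textup{cc}(\lambda,t),
  \qquad
  \lambda=\frac{2\pi\lambda_{\mathrm B}}{w},
  \quad t=\frac vw,
\]
so that $\check\lambda=\lambda_{\mathrm B}/w$. The plan has three steps.

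\emph{Step 1: match the rays.} The potential-side rays satisfy $-\ell_k=\pi i(v+kw)\mathbb R_{<0}=w\cdot\pi i(t+k)\mathbb R_{<0}$. Hence $w^{-1}(-\ell_k)$ is exactly the scalar Stokes ray $l_k$ of \zcref{crosscap-resurgence}\textup{(ii)}, and $w^{-1}(-\ell_\infty)$ is the corresponding limiting direction. It follows that $w^{-1}\rho_k$ lies in the sector bounded by $l_k$ and $l_{k-1}$. This is the ray called $\rho_k$ in part~\textup{(iii)}.

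I will use that $(v,w)\in M_+$ gives $\rmIm(t)>0$. Then the hypothesis of part~\textup{(iii)} holds, and the clockwise ordering from \zcref{active-lift-data} matches the scalar ordering of the $l_k$. Multiplying by $w^{-1}$ rotates the whole configuration rigidly, so adjacency and the sign conventions $\pm$ are preserved.

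\emph{Step 2: transport the half-planes.} The Laplace half-plane condition for the ray $w^{-1}\rho$ in the variable $\check\lambda$ is the condition for the ray $\rho$ in $\lambda_{\mathrm B}$. This is the scaling identity of \zcref{borel-scaling-linearity} with $\alpha=w$. So the domains in $\lambda_{\mathrm B}$ are just the images of the scalar domains, and the identity is asserted on their intersection.

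\emph{Step 3: substitute into the jump.} Part~\textup{(iii)} then gives
\[
  \mathcal F_{\mp\rho_k}-\mathcal F_{\mp\rho_{k+1}}
  =\mp\frac{(-1)^k}{2}\log\mleft(1+e^{\mp\pi i(t+k)/\check\lambda}\mright)
\]
after matching the signs, and $\pi i(t+k)/\check\lambda=\pi i(v+kw)/\lambda_{\mathrm B}$.

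\emph{Main obstacle: the overall sign.} Part~\textup{(iii)} is phrased with $\pm\rho_k$ and prefactor $\mp$, while the proposition uses $\mp\rho_k$ with prefactor $\pm$. Relabelling $\pm\leftrightarrow\mp$ shows the two formulas agree exactly. I still need to verify directly that the negative-ray case yields $+\tfrac{(-1)^k}{2}\log(1+e^{-\pi i(v+kw)/\lambda_{\mathrm B}})$. To do this, I will recompute it from the residues \eqref{crosscap-residue-unified}, summing over $m<0$ on the ray through $+\pi i(t+k)$. The orientation of $\rho_k-\rho_{k+1}$ relative to the clockwise contour convention of $\disc_\theta$ must be tracked here.

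If any sign disagreement appears, it will come from that orientation. I would resolve it by applying the $\tau\mapsto-\tau$ symmetry used in the proof of \zcref{crosscap-resurgence}.
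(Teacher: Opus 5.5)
Your proposal is the paper's proof: with $t=v/w$ and $\check\lambda=\lambda_{\mathrm B}/w$ one has $w^{-1}(-\ell_k)=l_k$, so $\Phi_{\ell_k}$ is the lower-sign case and $\Phi_{-\ell_k}$ is the upper-sign case of \eqref{crosscap-stokes-jump}; that citation is all that is needed, and no independent residue recomputation is required. There are, however, three slips to fix. First, the display in your Step~3 must carry the prefactor $\pm\frac{(-1)^k}{2}$, not $\mp\frac{(-1)^k}{2}$, which is what your own relabelling remark gives. Second, in the optional residue check the poles on the ray $\pi i(t+k)\mathbb R_{>0}$ are those with $m>0$, not $m<0$. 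Third, a fallback via $\tau\mapsto-\tau$ only works if you include the overall sign coming from the oddness of $\widetilde F_{\textup{cc}}$ in $\lambda$, namely $\mathcal S_{-\rho}\widetilde F_{\textup{cc}}(\lambda,t)=-\mathcal S_{\rho}\widetilde F_{\textup{cc}}(-\lambda,t)$.
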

\begin{proof}
  The definition of $\mathcal F_\rho$ is obtained from \zcref{crosscap-resurgence} by substituting $t=v/w$ and $\check\lambda=\lambda_{\mathrm B}/w$. Multiplication by $w^{-1}$ sends the ray $\ell_k=\mathbb R_{>0}\pi i(v+kw)$ to $\mathbb R_{>0}\pi i(v/w+k)$; hence it sends a ray between $-\ell_k$ and $-\ell_{k-1}$ to the corresponding ray between the crosscap Stokes rays $l_k$ and $l_{k-1}$.

  For the positive ray,
  \[
    \Phi_{\ell_k}
    =
    \mathcal S_{-w^{-1}\rho_k}\widetilde F_\textup{cc}
    \mleft(\frac{2\pi\lambda_{\mathrm B}}{w},z\mright)
    -
    \mathcal S_{-w^{-1}\rho_{k+1}}\widetilde F_\textup{cc}
    \mleft(\frac{2\pi\lambda_{\mathrm B}}{w},z\mright).
  \]
  The minus-sign case of the Stokes formula in \zcref{crosscap-resurgence} gives
  \[
    \Phi_{\ell_k}
    =
    \frac{(-1)^k}{2}
    \log\mleft(1+e^{-\pi i(z+k)/\tau}\mright)
    =
    \frac{(-1)^k}{2}
    \log\mleft(1+e^{-\pi i(v+kw)/\lambda_{\mathrm B}}\mright).
  \]
  The negative ray uses the plus-sign case of the same formula and gives
  \[
    \Phi_{-\ell_k}
    =
    -\frac{(-1)^k}{2}
    \log\mleft(1+e^{\pi i(v+kw)/\lambda_{\mathrm B}}\mright).
  \]
  These two identities are exactly \eqref{crosscap-finite-ray-jump-formula}.
\end{proof}

\begin{proposition}\label{limiting-crosscap-jump-trivial}
  For the normalized crosscap sectorial functions of \zcref{normalized-crosscap-sectorial-functions}, the transition factors at the limiting rays are trivial:
  \[
    \Xi_{\pm\ell_\infty}=1.
  \]
\end{proposition}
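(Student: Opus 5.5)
I would first make precise what $\Xi_{\pm\ell_\infty}$ means in terms of the normalized functions of \zcref{normalized-crosscap-sectorial-functions}. Across a limiting ray there is no single Stokes ray to cross. The transition factor should therefore be defined as the ratio of the limits of $\widehat Z_{\rho}$ taken from the two sides of $\pm\ell_\infty$, after the finite-ray jumps of \zcref{crosscap-finite-ray-jump} have been accounted for. Concretely, I would set
\[
  \Xi_{\pm\ell_\infty}
  \coloneqq
  \lim_{k\to\pm\infty}\widehat Z_{\rho_k}
  \Big/
  \lim_{k\to\mp\infty}\widehat Z_{\rho_{k}},
\]
with the sectors taken on either side of the relevant limiting direction. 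The claim then reduces to showing that $\widehat{\mathcal F}_\rho$ has the same limit as $\rho$ approaches $\pm\ell_\infty$ from either side.

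The key input is \zcref{crosscap-resurgence}(iv). Its analogue for $k\to-\infty$ follows by the same argument, using the opposite-ray version of the Stokes formula (replace $\tau$ by $-\tau$). After the substitution $t=v/w$ and $\check\lambda=\lambda_{\mathrm B}/w$, it identifies the limiting sectorial sum on each side with the same convergent Gopakumar--Vafa-type series
\[
  i\sum_{r\ \mathrm{odd}}\frac{1}{r}\frac{e^{\pi i r t}}{2\sin(\lambda r/2)}.
\]
This expression is analytic in $\lambda$ across the limiting direction, away from the zeros of the sine factors. The two one-sided limits of $\mathcal F_\rho(\lambda_{\mathrm B},v,w)$ therefore agree: the unnormalized transition across $\pm\ell_\infty$ is trivial. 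The steps are as follows.
\begin{enumerate}
  \item Check that the limiting series is single-valued in $\lambda$ on a neighbourhood of the limiting ray. It has no logarithmic branch, since it is a convergent sum of meromorphic terms when $\rmIm(t)>0$.
  \item Apply the same statement to the normalizing constant $\mathcal F^0_\rho(\lambda_{\mathrm B},w)$. Using \zcref{crosscap-vzero-limit}, take the limit $z'\to0$ inside the limiting series. The odd-$r$ series tends to $i\sum_{r\ \mathrm{odd}}\frac{1}{r}\frac{1}{2\sin(\lambda r/2)}$, which is again the same on both sides.
  \item Subtract. This gives equal one-sided limits of $\widehat{\mathcal F}_\rho$, hence $\Xi_{\pm\ell_\infty}=1$.
\end{enumerate}

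I expect the main obstacle to be the interchange of the two limits $z'\to0$ and $k\to\pm\infty$ in the normalization. As $z'\to0$ the finite Stokes rays coalesce onto $\pm\ell_\infty$, and the $k=0$ label collapses to the origin, producing the branch constants $\pm(\log 2)/2$ mentioned in the proof of \zcref{crosscap-vzero-limit}. I would handle this in two stages. First, restrict $\lambda$ to a compact subsector where $\abs{e^{\mp i\lambda}}<1$, so that the $q$-Pochhammer expansion used in the proof of \zcref{primitive-resurgence}(iv) converges uniformly in $z'$ near $0$; this justifies the interchange. Second, check that the $k=0$ branch constant enters $\mathcal F_\rho$ and $\mathcal F_\rho^0$ identically on both sides, so that it cancels in $\widehat{\mathcal F}_\rho$. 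After this, holomorphic continuation in $\lambda$ extends the identity to the whole sectorial domain.
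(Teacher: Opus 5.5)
Your Step~1 is sound. Both one-sided limits of $\mathcal F_\rho$ at $\pm\ell_\infty$ are the same $q$-Pochhammer branch, which is a more explicit form of the paper's first observation: for $(v,w)\in M_+$ no Borel pole $\pi i(v+kw)m$ lies on $\pm\ell_\infty$.

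The gap is in Step~2. The constant $\mathcal F^0_\rho$ is defined by letting $z'=v/w\to0$ with the ray $\rho$ held \emph{fixed}. This is the reading under which $\mathcal F^0_\rho$ has no jump across the finite rays $\pm\ell_k$, as the paper's finite-ray transitions require.
\begin{itemize}
\item Take a fixed $\rho$ near $\ell_\infty$. As $z'\to0$, the infinitely many finite Stokes rays accumulating at $\ell_\infty$ on that side are squeezed between $\rho$ and $\ell_\infty$: the $\ell_k$, $k\ge1$, on one side, and the $-\ell_{-k}$, $k\ge1$, on the other. Meanwhile $\rho$ stays in a sector adjacent to the degenerate $k=0$ ray.
\item Consequently $\lim_{\rho\to\ell_\infty^{\pm}}\mathcal F^0_\rho$ is \emph{not} the $z'\to0$ value of the limiting odd series. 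It differs from it by the full collapsed sum of finite-ray jumps on that side. Up to an orientation sign this sum is $\sum_{k\ge1}\frac{(-1)^k}{2}\log\bigl(1+e^{-\pi ikw/\lambda_{\mathrm B}}\bigr)$, a nonzero function of $\lambda_{\mathrm B}$ whose leading term is a nonzero multiple of $e^{-\pi iw/\lambda_{\mathrm B}}$.
\item Your proposed fix, uniform convergence in $z'$ of the $q$-Pochhammer expansion, does not address this. It only shows that the limiting series $\lim_{k\to\infty}\mathcal S_{\rho_k(z')}$, whose sectors move with $z'$, is continuous at $z'=0$. It says nothing about $\lim_{z'\to0}\mathcal S_\rho$ at fixed $\rho$, which is what $\widehat{\mathcal F}_\rho$ subtracts.
\item The $k=0$ constants $\pm\frac12\log2$ never enter $\mathcal F_\rho$ near $\ell_\infty$, since at finite $v$ the ray $\ell_0$ is away from $\ell_\infty$. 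They only concern the choice of the components $\mathcal U_\rho$.
\end{itemize}

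So Step~2 asserts a false identity. Your conclusion survives only because the error is the same on both sides of $\pm\ell_\infty$, and that is exactly what has to be proved. The missing ingredient is the paper's pairwise cancellation of the collapsing jumps. At $v=0$,
\[
  \Phi_{+\ell_k}(\lambda_{\mathrm B},0,w)=\frac{(-1)^k}{2}\log\bigl(1+e^{-\pi ikw/\lambda_{\mathrm B}}\bigr),
  \qquad
  \Phi_{-\ell_{-k}}(\lambda_{\mathrm B},0,w)=-\frac{(-1)^k}{2}\log\bigl(1+e^{-\pi ikw/\lambda_{\mathrm B}}\bigr),
\]
so each pair collapsing to $\ell_\infty$ sums to zero. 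Likewise $\Phi_{-\ell_k}+\Phi_{+\ell_{-k}}=0$ at $-\ell_\infty$. The sums converge because $\abs{e^{-\pi ikw/\lambda_{\mathrm B}}}$ decays geometrically on the relevant half-plane. Hence the collapsed contributions to $\mathcal F^0_\rho$ agree on the two sides of $\pm\ell_\infty$, and the $k=0$ branch constants are absorbed by the normalization. Substituting this computation for your Step~2 and subtracting from Step~1 gives $\Xi_{\pm\ell_\infty}=1$.
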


\begin{proof}
  The limiting statement is computed from the constrained sectorial values of \zcref{crosscap-vzero-limit}. For fixed $(v,w)\in M_+$, the crosscap Borel singularities in the $\lambda_{\mathrm B}$-Borel plane have the form
  \[
    \pi i(v+kw)m,
    \qquad
    k\in\mathbb Z,
    \quad m\in\mathbb Z\setminus\{0\}.
  \]
  Since $\rmIm(v/w)>0$, none of these points lies on $\pm\ell_\infty=\pm\mathbb R_{>0}\cdot\pi iw$. Thus there is no finite-$v$ local Stokes factor on $\pm\ell_\infty$.

  We now take the sectorial boundary value $v/w\to0$. The logarithms in the paired terms are taken by continuation from the same component $\mathcal U_\rho$. Thus the two logarithms in each collapsing pair have the same branch before the signs in \eqref{crosscap-finite-ray-jump-formula} are applied. By \eqref{crosscap-finite-ray-jump-formula}, for $k\geq1$ the two finite-ray factors collapsing to $+\ell_\infty$ satisfy
  \[
    \Phi_{+\ell_k}(\lambda_{\mathrm B},0,w)
    =
    \frac{(-1)^k}{2}
    \log\mleft(1+e^{-\pi ikw/\lambda_{\mathrm B}}\mright),
  \]
  and
  \[
    \Phi_{-\ell_{-k}}(\lambda_{\mathrm B},0,w)
    =
    -\frac{(-1)^k}{2}
    \log\mleft(1+e^{-\pi ikw/\lambda_{\mathrm B}}\mright).
  \]
  Their sum is zero. The two finite-ray factors collapsing to $-\ell_\infty$ satisfy the same cancellation,
  \[
    \Phi_{-\ell_k}(\lambda_{\mathrm B},0,w)
    +
    \Phi_{+\ell_{-k}}(\lambda_{\mathrm B},0,w)
    =0,
    \qquad k\geq1.
  \]

  The degenerate label $k=0$ contributes only the branch constants
  \[
    \Phi_{+\ell_0}(\lambda_{\mathrm B},0,w)=\frac12\log2,
    \qquad
    \Phi_{-\ell_0}(\lambda_{\mathrm B},0,w)=-\frac12\log2.
  \]
  These constants are exactly the limiting branch constants included in $\mathcal F_\rho^0(\lambda_{\mathrm B},w)$. Since $\widehat{\mathcal F}_\rho$ subtracts the matching constrained limiting value sector by sector, the normalized logarithmic transition at either limiting ray is zero. Hence
  \[
    \Xi_{\pm\ell_\infty}=e^0=1.
  \]
\end{proof}

For $k\in\mathbb Z$, set
\[
  \Xi_{\pm\ell_k}(v,w,\lambda_{\mathrm B})
  \coloneqq e^{-\Phi_{\pm\ell_k}(\lambda_{\mathrm B},v,w)}.
\]

For $z\in\mathbb C$ and $\omega\in\mathbb{C}^\times$, define
\[
  E_1\vargs{z}{\omega}
  \coloneqq\exp\mleft(\pi i B_{1,1}\vargs{z}{\omega}\mright)
  \sin_1\vargs{z}{\omega},
\]
where
\[
  \sin_1\vargs{z}{\omega}=2\sin\frac{\pi z}{\omega},
  \qquad
  B_{1,1}\vargs{z}{\omega}=\frac z\omega-\frac12.
\]

\begin{proposition}\label{e1-properties}
  For $\omega\notin\mathbb{R}_{<0}$, the function $E_1\vargs{z}{\omega}$ is meromorphic and single-valued, with simple zeroes at $z=a\omega$, $a\in\mathbb Z$, and no other zeroes or poles. It is invariant under simultaneous rescaling and satisfies
  \[
    E_1\vargs{z}{\omega}=1-e^{2\pi iz/\omega}.
  \]
  Hence
  \[
    \frac{E_1\vargs{z+\omega}{\omega}}{E_1\vargs{z}{\omega}}=1,
    \qquad
    \frac{E_1\vargs{z+\omega/2}{\omega}}{E_1\vargs{z}{\omega}}
    =\frac{1+e^{2\pi iz/\omega}}{1-e^{2\pi iz/\omega}}.
  \]
  Under $0<\rmRe(z)<\rmRe(\omega)$,
  \[
    E_1\vargs{z}{\omega}=
    \exp\mleft(
      \int_{\mathbb R+i0}\frac{-e^{zs}}{e^{\omega s}-1}\frac{ds}{s}
    \mright).
  \]
\end{proposition}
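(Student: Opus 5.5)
The plan is to reduce every assertion to the elementary closed form $E_1\vargs{z}{\omega}=1-e^{2\pi iz/\omega}$. Once that identity is in hand, the zero set, single-valuedness, the rescaling invariance and the shift ratios are immediate. The only analytic input is then the integral representation, which I would prove by a residue computation.

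\textbf{Step 1: the closed form.} Write
\[
  2\sin\frac{\pi z}{\omega}=-i\mleft(e^{\pi iz/\omega}-e^{-\pi iz/\omega}\mright)
\]
and
\[
  \exp\mleft(\pi iB_{1,1}\vargs{z}{\omega}\mright)=e^{-\pi i/2}e^{\pi iz/\omega}=-i\,e^{\pi iz/\omega}.
\]
Their product is
\[
  (-i)(-i)\mleft(e^{2\pi iz/\omega}-1\mright)=1-e^{2\pi iz/\omega}.
\]
This expression is entire and single-valued in $z$ for every $\omega\in\mathbb C^\times$, so the hypothesis $\omega\notin\mathbb R_{<0}$ is only a convention matching \zcref{f2-bridgeland-properties}.

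\textbf{Step 2: consequences of the closed form.} The zeros of $1-e^{2\pi iz/\omega}$ are exactly $z=a\omega$ with $a\in\mathbb Z$. They are simple, since the $z$-derivative there equals $-2\pi i/\omega\neq0$, and there are no poles. Rescaling invariance holds because the closed form depends only on $z/\omega$. The first shift ratio is $1$ by $2\pi i$-periodicity of the exponential. For the half-period shift, $e^{2\pi i(z+\omega/2)/\omega}=-e^{2\pi iz/\omega}$, which gives the quotient $(1+e^{2\pi iz/\omega})/(1-e^{2\pi iz/\omega})$.

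\textbf{Step 3: the integral representation.} This is the step needing care. First use rescaling invariance of both sides, together with analytic continuation in $\omega$ within $\rmRe(\omega)>0$, to reduce to $\omega>0$. Next, the integrand $-e^{zs}/((e^{\omega s}-1)s)$ decays exponentially at $s\to\pm\infty$ on the real axis precisely when $0<\rmRe(z)<\omega$. So the integral is holomorphic in $z$ on that strip, and it suffices to prove the identity on the subregion where additionally $\rmIm(z)>0$.

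On that subregion I would close the contour $\mathbb R+i0$ in the upper half-plane. The detour passes above $s=0$, so the enclosed poles are exactly $s_n=2\pi in/\omega$ for $n\ge1$. The residue at $s_n$ is
\[
  -\frac{e^{2\pi inz/\omega}}{\omega s_n}=-\frac{e^{2\pi inz/\omega}}{2\pi in}.
\]
Multiplying by $2\pi i$ and summing gives
\[
  -\sum_{n\ge1}\frac{e^{2\pi inz/\omega}}{n}=\log\mleft(1-e^{2\pi iz/\omega}\mright),
\]
which converges because $\abs{e^{2\pi iz/\omega}}<1$ when $\rmIm(z)>0$. Exponentiating and continuing analytically in $z$ yields the displayed formula. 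It also fixes the logarithm branch that is used later for the fractional powers in \zcref{coordinate-crosscap-rh-problem}.

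\textbf{Main obstacle.} The delicate point is justifying that the contributions of the large arcs vanish when closing the contour. I would handle this with a sequence of arcs of radius $2\pi(n+\tfrac12)/\omega$, which pass midway between consecutive poles. On these arcs $\abs{e^{\omega s}-1}$ is bounded below, and the factor $e^{zs}$ is controlled by the strip conditions $0<\rmRe(z)<\omega$ and $\rmIm(z)>0$. Together these give decay of the arc integrals as $n\to\infty$.
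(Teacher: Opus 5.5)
Your proof is correct. Steps 1 and 2 coincide with the paper's argument: set $a=z/\omega$ and expand $e^{\pi i(a-1/2)}\,2\sin(\pi a)=1-e^{2\pi ia}$.

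Step 3 takes a genuinely different route. The paper never evaluates the integral. It specializes Narukawa's multiple-sine contour formula to one period and appeals to Bridgeland's $\mathbb R+i0$ convention to fix the branch. You instead compute the integral directly: close the contour upward for $\omega>0$ and $\rmIm(z)>0$, sum the residues at $2\pi in/\omega$ for $n\ge1$, and continue analytically. Your version is self-contained and makes explicit two things the citation leaves implicit. First, $\log E_1$ is the principal logarithm on $\rmIm(z/\omega)>0$. Second, the upper detour is exactly what produces the normalization $\exp(\pi iB_{1,1})$. A detour below $0$ would add $2\pi i$ times the residue $-B_{1,1}\vargs{z}{\omega}$ at $s=0$, and would give $1-e^{-2\pi iz/\omega}$ instead. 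The paper's version is shorter, but it rests on matching normalizations across two references.

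There are two points to tighten.

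\emph{The arc estimate.} On the arcs $\abs{s}=R_n=2\pi(n+\tfrac12)/\omega$, a constant lower bound on $\abs{e^{\omega s}-1}$ is not enough. Near the positive real axis, $\abs{e^{zs}}$ is of size $e^{R_n\rmRe(z)}$. Use instead the bound $\abs{e^{\omega s}-1}\ge c\max(1,\abs{e^{\omega s}})$, which does hold uniformly on these arcs. Then $\rmRe(zs)-\max(0,\omega\rmRe(s))\le-\kappa R_n$ with $\kappa=\min(\rmRe(z),\,\omega-\rmRe(z),\,\rmIm(z))>0$, so the arc contributions are $O(e^{-\kappa R_n})$.

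\emph{The passage to complex $\omega$.} Going from $\omega>0$ to complex $\omega$ is really done by analytic continuation, not by rescaling, since a complex rescaling rotates the contour of the integral. The domain $0<\rmRe(z)<\rmRe(\omega)$ is convex. For $\rmRe(\omega)>0$, the poles $2\pi in/\omega$ lie in the open upper half-plane for $n>0$ and in the lower half-plane for $n<0$. Hence the integral is jointly holomorphic on this domain, and agreement on the slice $\omega>0$ propagates to all of it.
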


\begin{proof}
  Put $a=z/\omega$. From the definition,
  \[
    E_1\vargs{z}{\omega}
    =
    e^{\pi i(a-1/2)}2\sin(\pi a)
    =
    1-e^{2\pi ia}.
  \]
  This proves single-valuedness, the zero and pole statement, simultaneous rescaling invariance, and the two shift identities.

  The function $E_1$ is the elementary one-period member of the normalized multiple-sine family. Specializing Narukawa's contour formula \cite[Proposition~2]{narukawa-2004-00c94aa0} to one period and to the normalization above gives, on the domain $0<\rmRe(z)<\rmRe(\omega)$,
  \[
    \int_{\mathbb R+i0}\frac{-e^{zs}}{e^{\omega s}-1}\frac{ds}{s}
    =
    \log\mleft(1-e^{2\pi iz/\omega}\mright).
  \]
  The contour convention matches the one used in Bridgeland's integral formulas for the double- and triple-sine factors \cite[Propositions~4.1(v), 4.2(v)]{bridgeland-2020-7a4fdc8b}: $\mathbb R+i0$ follows the real axis and detours above the pole at the origin, thereby fixing the logarithm branch.
\end{proof}

\begin{proposition}\label{xi-e1}
  For every $k\in\mathbb Z$,
  \[
    \Xi_{\pm\ell_k}(v,w,\lambda_{\mathrm B})
    =E_1\vargs*{\mp\frac{v+kw}{\lambda_{\mathrm B}}+1}{2}^{\mp(-1)^k/2}.
  \]
  The right-hand side is holomorphic and nonvanishing for $\lambda_{\mathrm B}\in\mathbb H_{\pm\ell_k}$.
\end{proposition}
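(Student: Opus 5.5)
The identity is a direct substitution. I would combine the definition $\Xi_{\pm\ell_k}=e^{-\Phi_{\pm\ell_k}}$, the explicit jump formula \eqref{crosscap-finite-ray-jump-formula} of \zcref{crosscap-finite-ray-jump}, and the closed form $E_1\vargs{z}{\omega}=1-e^{2\pi iz/\omega}$ from \zcref{e1-properties}.

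\textbf{Step 1: exponentiate the jump.} By \zcref{crosscap-finite-ray-jump},
\[
  \Xi_{\pm\ell_k}=\mleft(1+e^{\mp\pi i(v+kw)/\lambda_{\mathrm B}}\mright)^{\mp(-1)^k/2}.
\]

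\textbf{Step 2: match the $E_1$ argument.} Set $z=\mp(v+kw)/\lambda_{\mathrm B}+1$ and $\omega=2$. Then
\[
  e^{2\pi iz/\omega}=e^{\pi iz}=e^{\pi i}\,e^{\mp\pi i(v+kw)/\lambda_{\mathrm B}}=-e^{\mp\pi i(v+kw)/\lambda_{\mathrm B}}.
\]
Hence
\[
  E_1\vargs{z}{2}=1-e^{\pi iz}=1+e^{\mp\pi i(v+kw)/\lambda_{\mathrm B}},
\]
which is exactly the base in Step 1. The shift by $+1$ is what produces the sign flip $1-(\cdot)\to1+(\cdot)$. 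Raising to the power $\mp(-1)^k/2$ gives the identity.

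\textbf{Step 3: fix the branch.} This is the main point needing care, since the exponent $\mp(-1)^k/2$ is fractional. The logarithm in \zcref{crosscap-finite-ray-jump} comes from summing $\sum_m(-1)^m e^{\mp m\pi i(v+kw)/\lambda_{\mathrm B}}/m$. That series is the principal branch of $\log(1+y)$ on the region $\abs{y}<1$, where $y=e^{\mp\pi i(v+kw)/\lambda_{\mathrm B}}$. This region is exactly where the Laplace integrals near the Stokes ray converge, namely $\rmRe\mleft(\mp\pi i(v+kw)/\lambda_{\mathrm B}\mright)<0$.

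On this region I would check that $0<\rmRe(z)<2$ can be arranged within the strip of \zcref{e1-properties}, or failing that, use continuation. There the contour representation of $E_1$ gives $\log E_1=\log(1-e^{\pi iz})$ with the branch fixed by $\mathbb R+i0$. Both logarithms agree at $y\to0$ and are holomorphic on the connected half-plane, so they coincide. The fractional powers are then defined as $\exp$ of these compatible logarithms, and the identity extends by analytic continuation to all of $\mathbb H_{\pm\ell_k}$.

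\textbf{Step 4: holomorphy and nonvanishing.} By \zcref{e1-properties}, $E_1\vargs{z}{2}$ vanishes only at $z\in2\mathbb Z$. For $z$ of the above form this means
\[
  e^{\mp\pi i(v+kw)/\lambda_{\mathrm B}}=-1,
\]
which forces $\abs{y}=1$. That is impossible when $\abs{y}<1$, so on the half-plane of Step 3 the base avoids $0$. Its logarithm is therefore holomorphic, and the fractional power is holomorphic and nonvanishing.

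For the part of $\mathbb H_{\pm\ell_k}$ outside $\abs{y}<1$, I would use the convention stated after \zcref{coordinate-crosscap-rh-problem}: the function is understood by analytic continuation of the chosen branch along paths avoiding the zero divisor. I expect reconciling the domain $\mathbb H_{\pm\ell_k}$ with the convergence region $\abs{y}<1$ to be the only genuinely delicate point.
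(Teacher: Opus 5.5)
Your proposal is correct and follows the paper's proof. Like the paper, you exponentiate the jump formula for $\Phi_{\pm\ell_k}$, identify $1+e^{\mp\pi i(v+kw)/\lambda_{\mathrm B}}$ with $E_1\vargs{z}{2}=1-e^{\pi i z}$ at $z=\mp(v+kw)/\lambda_{\mathrm B}+1$, and check that the zeros $\lambda_{\mathrm B}=\mp(v+kw)/(2a-1)$ avoid the half-plane.

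The ``delicate point'' you flag at the end does not exist. The condition $\abs{e^{\mp\pi i(v+kw)/\lambda_{\mathrm B}}}<1$ is exactly the defining condition $\rmRe(\pm\pi i(v+kw)/\lambda_{\mathrm B})>0$ of $\mathbb H_{\pm\ell_k}$, so your Step 4 already covers the whole half-plane, and the zeros lie on its boundary line, where $\abs{y}=1$.
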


\begin{proof}
  From \eqref{crosscap-finite-ray-jump-formula},
  \begin{align*}
    \Phi_{\pm\ell_k}(\lambda_{\mathrm B},v,w)
    &=\pm\frac{(-1)^k}{2}
    \log\mleft(1+e^{\mp\pi i(v+kw)/\lambda_{\mathrm B}}\mright)\\
    &=\pm\frac{(-1)^k}{2}
    \log E_1\vargs*{\mp\frac{v+kw}{\lambda_{\mathrm B}}+1}{2}.
  \end{align*}
  The zeroes occur only when $\lambda_{\mathrm B}=\mp(v+kw)/(2a-1)$, which is outside the half-plane centered at $\pm\ell_k$.
\end{proof}

\begin{corollary}\label{xi-shifts}
  For every $k\in\mathbb Z$,
  \begin{align*}
    \frac{\Xi_{\pm\ell_k}(v+\lambda_{\mathrm B},w,\lambda_{\mathrm B})}
    {\Xi_{\pm\ell_k}(v,w,\lambda_{\mathrm B})}
    &=
    \left(\frac{1-e^{\mp\pi i(v+kw)/\lambda_{\mathrm B}}}
    {1+e^{\mp\pi i(v+kw)/\lambda_{\mathrm B}}}\right)^{\mp(-1)^k/2},\\
    \frac{\Xi_{\pm\ell_k}(v,w+\lambda_{\mathrm B},\lambda_{\mathrm B})}
    {\Xi_{\pm\ell_k}(v,w,\lambda_{\mathrm B})}
    &=
    \begin{cases}
      \left(\dfrac{1-e^{\mp\pi i(v+kw)/\lambda_{\mathrm B}}}
      {1+e^{\mp\pi i(v+kw)/\lambda_{\mathrm B}}}\right)^{\pm1/2},
      & k\text{ odd},\\[0.8em]
      1,& k\text{ even}.
    \end{cases}
  \end{align*}
\end{corollary}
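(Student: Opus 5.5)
The plan is to work directly from \zcref{xi-e1}, which writes each transition factor as a power of a one-period $E_1$ function. Put $a_k\coloneqq\mp(v+kw)/\lambda_{\mathrm B}+1$, so that
\[
  \Xi_{\pm\ell_k}(v,w,\lambda_{\mathrm B})
  =E_1\vargs{a_k}{2}^{\mp(-1)^k/2}.
\]
A shift of $v$ or $w$ by $\lambda_{\mathrm B}$ moves $a_k$ by a rational multiple of the period $2$. The shift ratios will then follow from the two shift identities in \zcref{e1-properties}, namely
\[
  E_1\vargs{z+\omega}{\omega}/E_1\vargs{z}{\omega}=1,
  \qquad
  E_1\vargs{z+\omega/2}{\omega}/E_1\vargs{z}{\omega}=\frac{1+e^{2\pi iz/\omega}}{1-e^{2\pi iz/\omega}},
\]
raised to the exponent $\mp(-1)^k/2$.

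For the $v$-shift, $v\mapsto v+\lambda_{\mathrm B}$ sends $a_k\mapsto a_k\mp1$, which is a half-period shift with $\omega=2$. In the upper-sign case the half-period identity, applied at $z=a_k-1$, gives
\[
  \frac{E_1\vargs{a_k}{2}}{E_1\vargs{a_k-1}{2}}
  =\frac{1+e^{\pi i(a_k-1)}}{1-e^{\pi i(a_k-1)}}.
\]
Since $e^{\pi i(a_k-1)}=e^{-\pi i(v+kw)/\lambda_{\mathrm B}}$, the ratio $\Xi(v+\lambda_{\mathrm B})/\Xi(v)$ equals
\[
  \Bigl(\frac{1-e^{-\pi i(v+kw)/\lambda_{\mathrm B}}}{1+e^{-\pi i(v+kw)/\lambda_{\mathrm B}}}\Bigr)^{-(-1)^k/2}.
\]
This matches the claim. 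In the lower-sign case the shift is $+1$, and the identity is applied at $z=a_k$, with $e^{\pi i a_k}=-e^{\pi i(v+kw)/\lambda_{\mathrm B}}$.

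For the $w$-shift, $w\mapsto w+\lambda_{\mathrm B}$ sends $a_k\mapsto a_k\mp k$. For even $k$ this is a shift by an integer multiple of the full period $2$, so the periodicity identity gives ratio $1$. For odd $k$, the shift equals an integer multiple of $2$ plus $\mp1$. After discarding the full periods, the computation is the $v$-shift computation, but the point $e^{\pi i(a_k\mp k)}$ now carries an extra factor $(-1)^k=-1$. This swaps numerator and denominator of the fraction, so the exponent $\mp(-1)^k/2$ becomes $\pm1/2$.

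I expect the main obstacle to be branch bookkeeping rather than algebra. The fractional power $E_1^{\mp(-1)^k/2}$ is fixed through the integral representation, so a ratio of such powers need not equal the power of the ratio on the nose. I would first verify the identities for $\lambda_{\mathrm B}$ in a small subregion of $\mathbb H_{\pm\ell_k}$ where every factor stays away from $\mathbb R_{\le0}$ and the principal branches agree. I would then extend them by analytic continuation, using the nonvanishing statement in \zcref{xi-e1}, and read the right-hand sides with the branch inherited from that continuation, as the paper already does for the products in \zcref{coordinate-crosscap-rh-problem}.
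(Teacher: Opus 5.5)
Your strategy is the paper's: write $\Xi_{\pm\ell_k}=E_1\vargs{a_k}{2}^{\mp(-1)^k/2}$ via \zcref{xi-e1}, then use that $E_1\vargs{z+m}{2}/E_1\vargs{z}{2}$ depends only on the parity of $m$. Your $v$-shift computation is correct in both sign cases, and so is the even-$k$ $w$-shift.

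The odd-$k$ $w$-shift step is wrong as written. For odd $k$ one has $a_k\mp k\equiv a_k\mp1\pmod 2$, so periodicity gives $E_1\vargs{a_k\mp k}{2}=E_1\vargs{a_k\mp1}{2}$. The $w$-ratio is therefore literally the same function as the $v$-ratio. There is no extra sign relative to the $v$-shift computation. The factor in $e^{\pi i(a_k\mp k)}=(-1)^k e^{\pi i a_k}$ is the same $-1$ already present in $e^{\pi i(a_k\mp1)}=-e^{\pi i a_k}$, so nothing is swapped.

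Had the fraction really been inverted, you would get the stated fraction to the power $\pm(-1)^k/2=\mp1/2$, which is the reciprocal of the claim. A quick check: take the upper sign and $k=1$, and put $y=e^{-\pi i(v+w)/\lambda_{\mathrm B}}$. Then $\Xi_{\ell_1}(v,w,\lambda_{\mathrm B})=(1+y)^{1/2}$, and $w\mapsto w+\lambda_{\mathrm B}$ sends $y\mapsto -y$. The ratio is $\bigl((1-y)/(1+y)\bigr)^{1/2}$, unswapped.

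The correct final step is simply that the exponent $\mp(-1)^k/2$ equals $\pm1/2$ for odd $k$. This is exactly why the odd-$k$ $w$-shift formula in the statement coincides with the odd-$k$ $v$-shift formula.

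Your branch bookkeeping is more careful than the paper's, which just raises the $E_1$ ratio to the power, and it can be made painless. On $\mathbb H_{\pm\ell_k}$ one has $\abs{e^{\mp\pi i(v+kw)/\lambda_{\mathrm B}}}<1$, and both shifts only multiply this exponential by $\pm1$. Hence every factor $1\pm e^{\mp\pi i(v+kw)/\lambda_{\mathrm B}}$ lies in the right half-plane. With the principal logarithm, which is the one produced by the Stokes-jump series, the ratio of powers equals the power of the ratio on the whole half-plane, and no continuation argument is needed.
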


\begin{proof}
  We use a uniform sign notation. Let $\varepsilon\in\{+1,-1\}$, where $\varepsilon=+1$ denotes the ray $+\ell_k$ and $\varepsilon=-1$ denotes the ray $-\ell_k$. Put
  \[
    z_\varepsilon(v,w)
    \coloneqq 1-\varepsilon\frac{v+kw}{\lambda_{\mathrm B}},
    \qquad
    \alpha_{\varepsilon,k}\coloneqq-\frac{\varepsilon(-1)^k}{2}.
  \]
  By \zcref{xi-e1},
  \[
    \Xi_{\varepsilon\ell_k}(v,w,\lambda_{\mathrm B})
    =
    E_1\vargs*{z_\varepsilon(v,w)}{2}^{
    \alpha_{\varepsilon,k}}.
  \]
  Applying \zcref{e1-properties} with $\omega=2$ gives the meromorphic identity
  \[
    \frac{E_1\vargs{z+m}{2}}{E_1\vargs{z}{2}}
    =
    \begin{cases}
      1,& m\text{ even},\\[0.4em]
      \dfrac{1+e^{\pi iz}}{1-e^{\pi iz}},& m\text{ odd},
    \end{cases}
    \qquad m\in\mathbb Z.
  \]
  Indeed, the case $m$ even is the period-two identity, and the case $m$ odd reduces to the half-period identity. Moreover
  \[
    e^{\pi iz_\varepsilon(v,w)}
    =
    -e^{-\varepsilon\pi i(v+kw)/\lambda_{\mathrm B}}.
  \]

  The shift $v\mapsto v+\lambda_{\mathrm B}$ gives $z_\varepsilon(v+\lambda_{\mathrm B},w) =z_\varepsilon(v,w)-\varepsilon$, an odd integral translate. Hence
  \[
    \frac{
    E_1\vargs*{z_\varepsilon(v+\lambda_{\mathrm B},w)}{2}}
    {E_1\vargs*{z_\varepsilon(v,w)}{2}}
    =
    \frac{1-e^{-\varepsilon\pi i(v+kw)/\lambda_{\mathrm B}}}
    {1+e^{-\varepsilon\pi i(v+kw)/\lambda_{\mathrm B}}}.
  \]
  Raising this identity to the exponent $\alpha_{\varepsilon,k}$ gives the first formula.

  Similarly,
  \[
    z_\varepsilon(v,w+\lambda_{\mathrm B})
    =
    z_\varepsilon(v,w)-\varepsilon k.
  \]
  If $k$ is even, this is an even integral translate and the $E_1$ ratio is $1$. If $k$ is odd, it is an odd integral translate, and the same computation gives the preceding factor. In that case $\alpha_{\varepsilon,k}=\varepsilon/2$, which is the exponent $\pm1/2$ in the displayed formula. Translating back from $\varepsilon=\pm1$ to the $\pm$ notation proves the second formula.
\end{proof}

The two shift identities in \zcref{xi-shifts} are the two generator factors of \zcref{self-dual-ray-factor}: the $v$-shift sees $\eta_{\mathrm{all}}$ and every active ray, while the $w$-shift sees $\eta_{\mathrm{odd}}$ and only the odd rays.

Under this identification, the local self-dual Stokes identities are
\begin{align*}
  \mathbb S(\ell_k)^*(\xi_{\eta_{\mathrm{all}}})
  &=
  \xi_{\eta_{\mathrm{all}}}\star
  \frac{\Xi_{-\ell_k}(v-\hbar,w,-\hbar)}{\Xi_{-\ell_k}(v,w,-\hbar)},\\
  \mathbb S(\ell_k)^*(\xi_{\eta_{\mathrm{odd}}})
  &=
  \xi_{\eta_{\mathrm{odd}}}\star
  \frac{\Xi_{-\ell_k}(v,w-\hbar,-\hbar)}{\Xi_{-\ell_k}(v,w,-\hbar)}.
\end{align*}

\subsection{Crosscap \texorpdfstring{$\tau$}{tau}-functions and the solution}

The scalar sectorial functions below have Stokes jumps given by the $\Xi$-cocycle. Their two elementary shift-ratios recover the coordinate crosscap functions.

The crosscap $\tau$-functions $\tau_n$ are defined in \zcref{normalized-crosscap-sectorial-functions}. The sign comes from the following comparison. The coordinate active ray $\ell_n=\mathbb R_{>0}\pi i(v+nw)$ is compared with the negative Borel ray because the scalar and RH variables are related by $\lambda_{\mathrm B}=-\hbar$. The sector $\rho_n$ is the non-active sector between the negative Borel rays $-\ell_n$ and $-\ell_{n-1}$. Thus passing from $\rho_n$ to $\rho_{n+1}$ crosses $-\ell_n$, and the definition $\Xi_{\pm\ell}=e^{-\Phi_{\pm\ell}}$ gives the finite-ray transition formula
\begin{equation}\label{tau-jump}
  \tau_{n+1}(v,w,\hbar)
  =\tau_n(v,w,\hbar)\,\Xi_{-\ell_n}(v,w,-\hbar).
\end{equation}

The proof of the coordinate RH problem is split into three statements: sectorial analyticity, normalization and growth of the two shift-ratios, and the reflection products.

\begin{proposition}\label{crosscap-sectorial-analyticity}
  Fix $(v,w)\in M_+$. For every ray $\rho$ avoiding $\{\pm\ell_k\}_{k\in\mathbb Z}\cup\{\pm\ell_\infty\}$, the normalized sectorial free energy $\widehat{\mathcal F}_\rho$ is holomorphic on the corresponding Borel--Laplace half-plane in $\lambda_{\mathrm B}$. Consequently $\widehat Z_\rho=\exp\widehat{\mathcal F}_\rho$ is holomorphic and nowhere zero there. Across a finite active ray $\pm\ell_k$, the adjacent normalized sectorial partition functions satisfy
  \[
    \frac{\widehat Z_{\rho_+}}{\widehat Z_{\rho_-}}
    =
    \Xi_{\pm\ell_k},
  \]
  with $\Xi_{\pm\ell_k}$ given by \zcref{xi-e1}. The normalized transition at the limiting rays is $\Xi_{\pm\ell_\infty}=1$.
\end{proposition}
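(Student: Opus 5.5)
The proof will reduce everything to \zcref{crosscap-resurgence} and \zcref{crosscap-vzero-limit} through the substitution $t=v/w$, $\check\lambda=\lambda_{\mathrm B}/w$, as in the definition of $\mathcal F_\rho$.

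\emph{Holomorphy.} Fix $(v,w)\in M_+$ and a ray $\rho$ avoiding $\{\pm\ell_k\}\cup\{\pm\ell_\infty\}$. Multiplication by $w^{-1}$ sends $\pm\ell_k$ to the crosscap Stokes lines through $\pi i(v/w+k)$, and $\pm\ell_\infty$ to $\pm i\mathbb R_{>0}$. Hence $w^{-1}\rho$ is a non-Stokes ray for $\widetilde F_\textup{cc}(\cdot,v/w)$. Part (ii) of \zcref{crosscap-resurgence} then gives
\[
\mathcal F_\rho=\text{(explicit Laurent principal part)}+\int_{w^{-1}\rho}e^{-\xi w/\lambda_{\mathrm B}}G_\textup{cc}(\xi,v/w)\,d\xi .
\]
This integral is holomorphic for $\lambda_{\mathrm B}$ in the half-plane $\mathbb H_\rho$. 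One ingredient is needed here: $G_\textup{cc}$ grows at most exponentially along $w^{-1}\rho$, which I take from the representation of $G$ in \zcref{primitive-resurgence}(i) and the corresponding statement in \zcref{faddeev-resurgence-input}(iii). The subtracted term $\mathcal F^0_\rho$ is holomorphic in $\lambda_{\mathrm B}$ by \zcref{crosscap-vzero-limit}. Therefore $\widehat{\mathcal F}_\rho$ is holomorphic, and $\widehat Z_\rho=\exp\widehat{\mathcal F}_\rho$ is holomorphic and nowhere zero, since it is an exponential.

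\emph{Finite-ray jumps.} Let $\rho_\pm$ be the two non-active rays adjacent to $\pm\ell_k$. First I check that both rays use the same constant $\mathcal F^0$. The rays $\rho_\pm$ cross no limiting direction. By \zcref{crosscap-vzero-limit} together with the cancellation in \zcref{limiting-crosscap-jump-trivial}, the limiting value $\mathcal F^0_\rho$ changes across a finite ray only through the degenerate $k=0$ branch constant. So I must show that the difference $\mathcal F^0_{\rho_+}-\mathcal F^0_{\rho_-}$ contributes nothing to the ratio at finite $v$.

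I will handle this by reading the normalization as a fixed choice of branch constant per component $\mathcal U_\rho$. Since both adjacent rays lie in the same component, their subtracted values agree. The ratio $\widehat Z_{\rho_+}/\widehat Z_{\rho_-}$ is then $\exp(\mathcal F_{\rho_+}-\mathcal F_{\rho_-})$. By \zcref{crosscap-finite-ray-jump}, and with the sign convention fixed by \eqref{tau-jump}, this equals $e^{-\Phi_{\pm\ell_k}}=\Xi_{\pm\ell_k}$. The explicit $E_1$ form then follows from \zcref{xi-e1}, which also gives nonvanishing on the relevant half-plane.

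\emph{Limiting rays.} The statement $\Xi_{\pm\ell_\infty}=1$ is exactly \zcref{limiting-crosscap-jump-trivial}. I will cite it, noting that for $(v,w)\in M_+$ no Borel singularity lies on $\pm\ell_\infty$.

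\emph{Main obstacle.} The hard part is the bookkeeping of the normalization constant $\mathcal F^0_\rho$ across finite rays, specifically the $k=0$ branch constants $\pm\tfrac12\log2$. To make it rigorous I will fix $\mathcal F^0$ on each component $\mathcal U_\rho$ by continuation from the positive-real formula of \zcref{crosscap-vzero-limit}. I will then verify directly that crossing a finite ray at fixed $v\neq0$ does not change which component is used. If it does change, I expect the $\log2$ constants to cancel in pairs exactly as in the proof of \zcref{limiting-crosscap-jump-trivial}.
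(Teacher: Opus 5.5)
Your proposal is correct and follows essentially the paper's own argument. You get holomorphy of the Laplace integral together with \zcref{crosscap-vzero-limit}, the finite-ray ratio $e^{-\Phi_{\pm\ell_k}}$ from \zcref{crosscap-finite-ray-jump} with $\mathcal F^0_\rho$ taken in the same limiting sector (hence the same constant) on both sides of $\pm\ell_k$, and \zcref{limiting-crosscap-jump-trivial} for $\pm\ell_\infty$.

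There are two small caveats. First, exponential growth of $G_{\textup{cc}}$ along $w^{-1}\rho$ only gives holomorphy on a disc through the origin, not on the whole half-plane. What you need is that $G_{\textup{cc}}$ is bounded along the ray, which holds because the ray stays at a positive angle from every Stokes line and from the limiting directions. Second, your fallback does not work. The cancellation in \zcref{limiting-crosscap-jump-trivial} pairs $\ell_k$ with $-\ell_{-k}$, while the $k=0$ constants $\pm\frac12\log 2$ are absorbed into $\mathcal F^0_\rho$ rather than cancelled. So a consistent choice of the components $\mathcal U_\rho$ across each finite ray, which is your primary route and the paper's, is genuinely required.
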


\begin{proof}
  By \zcref{crosscap-resurgence}\textup{(i),(ii)}, the regular part of the crosscap series has a meromorphic Borel transform whose poles lie on the rays $\pm\ell_k$ and $\pm\ell_\infty$, and the Laplace integral along a non-active ray gives the sectorial sum on the half-plane $\mathbb H_\rho$. Let $C$ be a compact subset of $\mathbb H_\rho$. Since $\rho$ is separated by a positive angle from the pole rays, the exponential kernel has uniform decay on $C$, and the Borel transform is holomorphic on a fixed tubular neighbourhood of the integration ray. The Laplace integral is therefore locally uniformly convergent in $\lambda_{\mathrm B}$. Differentiation under the integral sign gives holomorphic dependence on $\lambda_{\mathrm B}$. The principal Laurent part in \zcref{sectorial-sum-finite-principal-part} is holomorphic on $\mathbb H_\rho$, so $\mathcal F_\rho$ is holomorphic there. The limiting normalization $\mathcal F_\rho^0$ is holomorphic by \zcref{crosscap-vzero-limit}. Hence $\widehat{\mathcal F}_\rho$ is holomorphic, and exponentiation gives a holomorphic nowhere-zero function $\widehat Z_\rho$.

  For a finite active ray, \zcref{crosscap-finite-ray-jump} gives the sectorial discontinuity $\Phi_{\pm\ell_k}$. With the orientation used in the transition factor $\Xi_{\pm\ell_k}=e^{-\Phi_{\pm\ell_k}}$, the logarithmic transition of the partition functions is $-\Phi_{\pm\ell_k}$. The normalizing term at $v=0$ is the constrained sectorial value of \zcref{crosscap-vzero-limit}. It is taken in the same limiting sector and contributes no finite-ray jump; the possible collapsed contribution at $v=0$ is assigned to the limiting ray and is normalized in \zcref{limiting-crosscap-jump-trivial}. Hence
  \[
    \log \widehat Z_{\rho_+}-\log \widehat Z_{\rho_-}
    =
    \log \Xi_{\pm\ell_k},
  \]
  Exponentiating gives the displayed transition factor. The formula in \zcref{xi-e1} fixes the branch of the fractional power, and \zcref{limiting-crosscap-jump-trivial} gives the limiting transition.
\end{proof}

\begin{lemma}\label{shifted-sectorial-domains}
  Fix $(v,w)\in M_+$, and let $S$ be a closed subsector compactly contained in $\mathcal V(n)$. There is $\epsilon>0$ such that, for $\hbar\in S$ with $\abs{\hbar}<\epsilon$ and every $s\in[0,1]$, the shifted parameters
  \[
    (v-s\hbar,w),
    \qquad
    (v,w-s\hbar)
  \]
  remain in $M_+$, and $-\hbar$ remains in a compact subcone of the Borel--Laplace half-plane $\mathbb H_{\rho_n}$. On the common domain obtained by removing the zero and pole divisors of the finite-ray factors, the evaluations of $\widehat Z_{\rho_n}$ at
  \[
    (-\hbar,v,w),\qquad
    (-\hbar,v-\hbar,w),\qquad
    (-\hbar,v,w-\hbar)
  \]
  have locally uniform Borel--Laplace estimates.
\end{lemma}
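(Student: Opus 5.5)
The plan is a compactness-and-openness argument, followed by reuse of the locally uniform Laplace estimates already set up in the proof of \zcref{crosscap-sectorial-analyticity}.

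The first step is the parameter statement. Since $(v,w)\in M_+$, we have $w\neq0$ and $\rmIm(v/w)>0$, and both conditions are open in $(v,w)$. Only the first matters: $\rmIm(v/w)>0$ forces $v/w\notin\mathbb R$, so $v+nw\neq0$ for every $n$ automatically. Hence there is $\delta>0$ such that any $(v',w')$ with $\abs{v'-v},\abs{w'-w}<\delta$ lies in $M_+$. For $s\in[0,1]$ and $\abs{\hbar}<\delta$ both shifted pairs $(v-s\hbar,w)$ and $(v,w-s\hbar)$ are then in $M_+$. I would also record that $\arg\pi i(v'+kw')$ depends continuously on $(v',w')$, uniformly in $k$ for $\abs{k}$ large, because these rays accumulate only at $\pm\ell_\infty=\pm\mathbb R_{>0}\pi i w'$ and $w'$ is close to $w$. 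Consequently the clockwise ordering of the active rays and the sector $\rho_n$ between $-\ell_n$ and $-\ell_{n-1}$ vary continuously under the perturbation.

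The second step is the half-plane statement, and it needs care because $\rho_n$ itself moves with the parameters. I would fix one reference ray $\rho$ strictly inside the unperturbed sector, at a positive angle $\theta_0$ from $-\ell_n$ and $-\ell_{n-1}$. After shrinking $\delta$, the ray $\rho$ stays at angle at least $\theta_0/2$ from every Stokes ray of the shifted parameters. By the independence of the Laplace integral within a non-Stokes component (\zcref{faddeev-resurgence-input}(iii), transported through \zcref{crosscap-resurgence}(ii)), $\rho$ computes $\widehat Z_{\rho_n}$ for all shifted parameters at once.

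It remains to place $-\hbar$ in a compact subcone of $\mathbb H_{\rho_n}$ uniformly in $\hbar\in S$. The point is that $-\mathcal V(n)=-\mathcal H(n-1)\cup-\mathcal H(n)$ is exactly the union of Borel half-planes attached to directions in the closed sector between $-\ell_{n-1}$ and $-\ell_n$. So a closed subsector $S$ compactly inside $\mathcal V(n)$ has $-S$ covered by finitely many $\mathbb H_{\rho'}$, with $\rho'$ ranging over rays in the open sector and each $\rho'$ admitting a uniform angular margin. I would therefore replace the single $\rho$ by a finite cover of such rays, one for each piece of $-S$. The lower bound $\rmRe(\xi/(-\hbar))\ge c\abs{\xi}/\abs{\hbar}$ holds uniformly for each piece. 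Choosing $\epsilon\le\delta$ small then gives the first two claims.

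The third step is the estimates. On each chosen ray, the Borel transform $G_\textup{cc}(\xi,v'/w')$ is holomorphic on a fixed tubular neighbourhood, because the nearest pole stays at distance bounded below. It also has at most exponential growth uniformly in $(v',w')$ near $(v,w)$, which I would read off from the explicit series for $G$ in \zcref{primitive-resurgence}(i). Combining this growth with the kernel decay gives Laplace integrals that are dominated uniformly for $\abs{\hbar}<\epsilon$. The finite Laurent principal part $\frac{i}{\lambda}\Li_2-\frac{i}{4\lambda}\Li_2$ is continuous in the parameters away from the divisors. The normalization $\mathcal F^0_{\rho_n}$ does not depend on $v$, and under the $w$-shift it is controlled by \zcref{crosscap-vzero-limit} away from the zero and pole divisors of $F_2$.

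I expect the main obstacle to be uniformity near the limiting directions. When $\hbar$ is small, the shift $w\mapsto w-\hbar$ also perturbs $\ell_\infty$, and infinitely many finite rays accumulate there. I would handle this with the angular margin $\theta_0$ together with the fact that the pole moduli $\pi\abs{v'+kw'}$ grow linearly in $k$, uniformly in the perturbation. This is the same estimate as in the support property of \zcref{active-lift-data}.
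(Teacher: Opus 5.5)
Your proposal is correct and follows the same route as the paper. Both arguments use openness of the conditions defining $M_+$ (with $\rmIm(v/w)>0$ giving $v+kw\neq0$ automatically), then place $-S$ in a compact subcone so that the integration ray keeps a positive angle from the pole rays, and conclude from uniform decay of the kernel together with holomorphy of the Borel transform on a tubular neighbourhood of the ray. You also fill in points the paper leaves implicit: uniform motion of the Stokes rays near $\pm\ell_\infty$, a growth bound on $G_\textup{cc}$ read off from its explicit series, and the fact that $-\mathcal V(n)$ is a union of half-planes $\mathbb H_{\rho'}$ over rays in the sector rather than the single half-plane $\mathbb H_{\rho_n}$ the paper writes, which you correctly handle by a finite cover.
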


\begin{proof}
  The conditions defining $M_+$ are open at $(v,w)$: $w\ne0$, $\rmIm(v/w)>0$, and $v+kw\ne0$ for all $k\in\mathbb Z$. The last condition is stable under small changes because $\rmIm(v/w)>0$ keeps $v/w$ a positive distance from the real lattice $-\mathbb Z$. Compactness of $s\in[0,1]$ gives the asserted neighbourhood for the two shifted parameter paths. Since $S$ is compactly contained in $\mathcal V(n)$, the relation $\hbar\in\mathcal V(n)\iff -\hbar\in\mathbb H_{\rho_n}$ puts $-S$ in a compact subcone of $\mathbb H_{\rho_n}$. The integration ray is therefore separated by a positive angle from the Borel pole rays throughout this compact set. The exponential kernel has uniform decay and the Borel transform is holomorphic in a fixed tubular neighbourhood of the integration ray, giving locally uniform convergence and differentiation under the integral sign after the zero and pole divisors of the finite-ray factors are removed.
\end{proof}

\begin{proposition}\label{crosscap-shift-ratio-normalization}
  Fix $(v,w)\in M_+$, and define
  \[
    \mathcal B_\rho(v,w,\hbar)
    \coloneqq
    \frac{\widehat Z_\rho(-\hbar,v-\hbar,w)}
    {\widehat Z_\rho(-\hbar,v,w)},
    \qquad
    \mathcal D_\rho(v,w,\hbar)
    \coloneqq
    \frac{\widehat Z_\rho(-\hbar,v,w-\hbar)}
    {\widehat Z_\rho(-\hbar,v,w)} .
  \]
  On the common domain where all three evaluations of $\widehat Z_\rho$ are defined and avoid the zero and pole divisors of the finite-ray factors, these functions are holomorphic and nowhere zero. For $\rho=\rho_n$, they satisfy
  \[
    \mathcal B_{\rho_n}(v,w,\hbar)\to1,
    \qquad
    \mathcal D_{\rho_n}(v,w,\hbar)\to1
  \]
  as $\hbar\to0$ in every closed subsector of $\mathcal V(n)$. Moreover, for every such closed subsector there is $K_n>0$ such that, for $\abs{\hbar}\gg0$ in the subsector,
  \[
    \abs{\hbar}^{-K_n}<\abs{\mathcal B_{\rho_n}(v,w,\hbar)}<\abs{\hbar}^{K_n},
    \qquad
    \abs{\hbar}^{-K_n}<\abs{\mathcal D_{\rho_n}(v,w,\hbar)}<\abs{\hbar}^{K_n}.
  \]
\end{proposition}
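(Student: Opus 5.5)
Holomorphy and non-vanishing are immediate. By \zcref{crosscap-sectorial-analyticity}, $\widehat Z_\rho=\exp\widehat{\mathcal F}_\rho$ is holomorphic and nowhere zero on its Borel--Laplace domain. By \zcref{shifted-sectorial-domains}, for $\hbar$ in a closed subsector $S\subset\mathcal V(n)$ the three evaluation points $(-\hbar,v,w)$, $(-\hbar,v-\hbar,w)$ and $(-\hbar,v,w-\hbar)$ lie in that domain. So both ratios are quotients of holomorphic nowhere-zero functions.

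For the limit $\hbar\to0$, write
\[
  \log\mathcal B_{\rho_n}=\mathcal F_{\rho_n}(-\hbar,v-\hbar,w)-\mathcal F_{\rho_n}(-\hbar,v,w),
\]
and similarly for $\mathcal D_{\rho_n}$. The normalizing term $\mathcal F^0_{\rho_n}(-\hbar,w)$ cancels in the $v$-shift. In the $w$-shift it does not cancel, and one must also show that $\mathcal F^0_{\rho_n}(-\hbar,w-\hbar)-\mathcal F^0_{\rho_n}(-\hbar,w)\to0$.

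Next split each sectorial sum, as in \zcref{sectorial-sum-finite-principal-part}, into the principal part and the Laplace integral of the regular part. With $\check\lambda=-\hbar/w$ and $t=v/w$, the principal part is
\[
  \frac{iw}{2\pi\lambda_{\mathrm B}}\Bigl(\Li_2(Q^{1/2})-\tfrac14\Li_2(Q)\Bigr).
\]
Under $v\mapsto v-\hbar$, its difference is a difference quotient of order $\hbar^{-1}\cdot\hbar$, so it tends to a constant. I expect this constant to be exactly the $O(1)$ term that the normalized characters $\exp(Z(\eta)/\hbar)$ absorb. So I would first check that the $\hbar^{-1}$ pieces cancel identically, via the one-variable Taylor expansion of $\Li_2$. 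I would then check that the remaining constant is killed by the choice $\xi^0=1$ together with the normalization of \zcref{crosscap-vzero-limit}; if it is not, the definition of $B_n$ must carry the factor.

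For the regular part, the Borel--Laplace integral is asymptotic to $i\Phi(\check\lambda,t)-\tfrac{i}{2}\Phi(2\check\lambda,2t)=O(\check\lambda)$, uniformly for $t$ in a compact neighbourhood. By \zcref{shifted-sectorial-domains}, this estimate holds uniformly along the shifted parameter paths. So the regular-part contribution to each difference is $O(\hbar)\to0$.

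For polynomial growth at $\abs{\hbar}\to\infty$, I would first use the positive-real closed form from \zcref{crosscap-resurgence}(ii). This gives $\widehat Z$ as a combination of $F_2$-factors with periods $(2\check\lambda,1)$ and $(2\check\lambda,2)$. The shifts $v\mapsto v-\hbar$ and $w\mapsto w-\hbar$ then become shifts of the argument by a period $2\check\lambda$, or a fixed multiple of it. By the standard difference equations for $F_2$ (Bridgeland, Proposition~4.1), these reduce to $E_1$- or $\sin$-type ratios in the other period. Such ratios are bounded above and below polynomially, indeed by constants, as $\check\lambda\to\infty$ in a closed subsector.

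Other sectors $\rho_n$ differ from the base sector by finitely many $\Xi$-factors (\zcref{crosscap-sectorial-analyticity}). By \zcref{xi-shifts}, their shift-ratios are explicit powers of $(1-e^{\cdots/\lambda_{\mathrm B}})/(1+e^{\cdots/\lambda_{\mathrm B}})$. As $\abs{\hbar}\to\infty$ these tend to finite nonzero constants away from the divisors, so they preserve the bounds.

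The main obstacle is the $\hbar\to0$ step for $\mathcal D$. The $w$-shift changes both $t$ and $\check\lambda$, and it moves the normalizing limit $\mathcal F^0$. One has to make sure that the $\hbar^{-1}$ principal parts and the branch constants $\pm(\log2)/2$ of \zcref{limiting-crosscap-jump-trivial} cancel consistently in the chosen component. I would handle this with the explicit $F_2$ formula of \zcref{crosscap-vzero-limit} in the base sector and then transport it by the finitely many $\Xi$-jumps.
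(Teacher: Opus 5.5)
\paragraph{Where the proposal breaks.}
Your holomorphy argument, the $O(\hbar)$ bound for the Laplace-integral part, and your remark that $\mathcal F^0_{\rho_n}$ cancels in the $v$-shift are all correct. The gap is the step where you expect the $O(1)$ principal-part constant to be absorbed. Nothing absorbs it.
\begin{itemize}
\item $\mathcal B_\rho$ is the bare ratio of the $\widehat Z_\rho$'s, with no factor $\exp(Z(\eta)/\hbar)$. Such a factor has size $e^{c/\hbar}$ anyway, so it could not cancel an $\hbar$-independent constant.
\item $\mathcal F^0_{\rho_n}$ drops out of the $v$-shift, as you noted.
\end{itemize}

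If you finish your own Taylor expansion, you get a non-zero limit. With $\lambda=-2\pi\hbar/w$, the shift $v\mapsto v-\hbar$ is $t\mapsto t+\lambda/(2\pi)$, i.e.\ $Q^{1/2}\mapsto Q^{1/2}e^{i\lambda/2}$. Then
\[
  \frac{i}{\lambda}\Bigl(\Li_2(Q^{1/2}e^{i\lambda/2})-\Li_2(Q^{1/2})\Bigr)
  -\frac{i}{4\lambda}\Bigl(\Li_2(Qe^{i\lambda})-\Li_2(Q)\Bigr)
  \longrightarrow
  \frac14\log\frac{1-Q^{1/2}}{1+Q^{1/2}},
\]
where $Q^{1/2}=e^{\pi iv/w}$ satisfies $0<\abs{Q^{1/2}}<1$. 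Hence $\mathcal B_{\rho_n}\to\bigl((1-e^{\pi iv/w})/(1+e^{\pi iv/w})\bigr)^{1/4}\neq1$.

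The same constant comes out of the $q$-Pochhammer form of the crosscap block. Indeed, $i\bigl(F^+(\lambda,t)-\tfrac12F^+(2\lambda,2t)\bigr)=\tfrac12\log\bigl((-xq^{1/2};q)_\infty/(xq^{1/2};q)_\infty\bigr)$ with $x=e^{\pi it}$, and the half-step $x\mapsto xq^{1/2}$ gives this limit as $q\to1$.

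For $\mathcal D$, the principal part contributes $\frac{i}{2\pi}\bigl(g(t)-t\,g'(t)\bigr)$, where $g(t)=\Li_2(e^{\pi it})-\tfrac14\Li_2(e^{2\pi it})$. This is non-constant in $v$: its $t$-derivative is $-t\,g''(t)$, with $g''(t)=-\pi^2e^{\pi it}/(1-e^{2\pi it})\neq0$. The change in $\mathcal F^0_{\rho_n}$, by contrast, is independent of $v$. So $\mathcal D_{\rho_n}\to1$ fails as well.

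\paragraph{Consequence.}
The normalization claim is false for the bare shift-ratios, and no argument can close this step. What is missing is an explicit elementary prefactor in the definitions of $\mathcal B_\rho$ and $\mathcal D_\rho$, for instance $\bigl((1+e^{\pi iv/w})/(1-e^{\pi iv/w})\bigr)^{1/4}$ for $\mathcal B$. After such a correction, the jump and reflection conditions would have to be re-checked.

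Do not look to the paper's proof for a fix. It represents $\log\mathcal B_{\rho_n}$ and $\log\mathcal D_{\rho_n}$ as sums of logarithms of crossed finite-ray factors, which are exponentially small as $\hbar\to0$. That representation silently drops exactly the perturbative term you isolated. Your principal-part analysis is the right diagnostic, but it refutes the claim rather than proving it.

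\paragraph{The growth step.}
This step also misidentifies the shifts.
\begin{itemize}
\item In the base-sector formula the factors are $F_2(t+\check\lambda\,|\,2\check\lambda,1)$ and $F_2(t+\check\lambda\,|\,2\check\lambda,2)$. The shift $v\mapsto v-\hbar$ moves the argument by $\check\lambda$, which is half the period $2\check\lambda$. So the full-period difference equation of $F_2$ does not reduce the ratio to $E_1$-factors. This non-elementary half-step is exactly what produces the exponent $1/4$ above.
\item The shift $w\mapsto w-\hbar$ changes both $t=v/w$ and $\check\lambda=-\hbar/w$. It therefore changes the periods, not only the argument.
\item As $\abs{\hbar}\to\infty$, the crossed-ray factors $(1-e^{a/\hbar})/(1+e^{a/\hbar})$ tend to $0$ like $\abs{\hbar}^{-1}$, not to nonzero constants. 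With finitely many crossed rays this is still polynomial.
\end{itemize}
Even so, the base-sector half-period ratio needs a genuine large-$\check\lambda$ analysis, which your sketch does not provide.
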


\begin{proof}
  Holomorphicity and non-vanishing follow from \zcref{crosscap-sectorial-analyticity,shifted-sectorial-domains}, because each denominator is a value of the nowhere-zero function $\widehat Z_{\rho_n}$ on the same half-plane as the corresponding numerator. The normalization at $v=0$ is the constrained limiting value of \zcref{crosscap-vzero-limit}, and \zcref{limiting-crosscap-jump-trivial} gives the trivial normalized transition at $\pm\ell_\infty$.

  Let $S$ be a closed subsector compactly contained in $\mathcal V(n)$. By \zcref{shifted-sectorial-domains}, the Borel--Laplace estimates used in \zcref{crosscap-resurgence}\textup{(ii)} are locally uniform on the shifted parameter paths for all sufficiently small $\abs{\hbar}$ in $S$. Thus the logarithms of the two ratios are represented by normally convergent sums of the logarithms of the finite-ray factors crossed between the normalization ray and $\rho_n$. On $S$ the exponentials
  \[
    \exp\mleft(-\frac{\pi i(v+mw)}{\hbar}\mright),
    \qquad m\in\mathbb Z,
  \]
  occurring in those factors are uniformly exponentially small in the relevant half-tail of active indices. Hence the normally convergent sums of logarithms tend to $0$. Exponentiating gives $\mathcal B_{\rho_n}\to1$ and $\mathcal D_{\rho_n}\to1$.

  The remaining point is the polynomial bound at infinity. We use the following sectorial estimate. For every closed subsector $S$ compactly contained in $\mathcal V(n)$ and avoiding the zero and pole divisors of the finite-ray factors, there is a constant $C>0$ such that
  \[
    \abs{\log \mathcal B_{\rho_n}(v,w,\hbar)}
    +
    \abs{\log \mathcal D_{\rho_n}(v,w,\hbar)}
    \le C\log(1+\abs{\hbar})
  \]
  for $\abs{\hbar}\gg0$, $\hbar\in S$, after choosing the logarithm branches fixed by \zcref{xi-e1}. To prove this estimate directly, write each crossed finite-ray factor by \zcref{xi-shifts} as a fixed rational power of
  \[
    R_a(\hbar)
    \coloneqq
    \frac{1-e^{a/\hbar}}{1+e^{a/\hbar}},
    \qquad a=\pm\pi i(v+mw),\quad m\in\mathbb Z.
  \]
  On the chosen closed subsector, after removing the zero and pole divisors, the denominator of each such factor is bounded away from zero for $\abs{\hbar}\gg0$. If the numerator is also bounded away from zero, the logarithm is bounded. Otherwise $a/\hbar\to0$, and the Taylor expansions
  \[
    1-e^{a/\hbar}=-\frac{a}{\hbar}\mleft(1+O(\abs{\hbar}^{-1})\mright),
    \qquad
    1+e^{a/\hbar}=2+O(\abs{\hbar}^{-1})
  \]
  give $\abs{\log R_a(\hbar)}\le C_a\log(1+\abs{\hbar})$. The crossed-ray product has only finitely many non-tail factors on each compact angular subregion, and the normally convergent tail is uniformly bounded there. Summing the factor estimates gives the displayed logarithmic bound. Exponentiating gives the displayed two-sided polynomial bounds after increasing $K_n$.
\end{proof}

\begin{proposition}\label{crosscap-reflection-products}
  On the component of $\mathcal V(0)\cap-\mathcal V(0)$ fixed in \zcref{coordinate-crosscap-rh-problem}, the base-sector shift-ratios satisfy
  \begin{align*}
    \mathcal B_{\rho_0}(v,w,\hbar)\mathcal B_{\rho_0}(v,w,-\hbar)
    &=
    \prod_{m\geq0}
    \left(\frac{1-x^{1/2}q^{m/2}}{1+x^{1/2}q^{m/2}}
    \right)^{-(-1)^m/2}
    \prod_{m\geq1}
    \left(\frac{1-x^{-1/2}q^{m/2}}{1+x^{-1/2}q^{m/2}}
    \right)^{-(-1)^m/2},\\
    \mathcal D_{\rho_0}(v,w,\hbar)\mathcal D_{\rho_0}(v,w,-\hbar)
    &=
    \prod_{\substack{m\geq0\\ m\ \mathrm{odd}}}
    \left(\frac{1-x^{1/2}q^{m/2}}{1+x^{1/2}q^{m/2}}
    \right)^{1/2}
    \prod_{\substack{m\geq1\\ m\ \mathrm{odd}}}
    \left(\frac{1-x^{-1/2}q^{m/2}}{1+x^{-1/2}q^{m/2}}
    \right)^{1/2}.
  \end{align*}
\end{proposition}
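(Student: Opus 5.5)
The plan is to adapt Bridgeland's argument for the resolved-conifold reflection identity, where $\tau(\hbar)\tau(-\hbar)$ is computed by comparing the sector $\rho_0$ with the opposite sector $-\rho_0$. Replacing $\hbar$ by $-\hbar$ turns the Borel variable $\lambda_{\mathrm B}=-\hbar$ into $+\hbar$. So $\mathcal B_{\rho_0}(v,w,-\hbar)$ is a shift-ratio of $\widehat Z_{\rho_0}$ evaluated at $\lambda_{\mathrm B}=\hbar$ with a shift by $+\hbar$, whereas $\mathcal B_{\rho_0}(v,w,\hbar)$ uses $\lambda_{\mathrm B}=-\hbar$. The first step is therefore a symmetry of the primitive sums: since $F^-(\lambda,t)=-F^+(-\lambda,t)$ and $\widetilde F$ is odd in $\lambda$, Proposition~\ref{borel-scaling-linearity} with $\alpha=-1$ gives $\mathcal S_{-\rho}\widetilde F_\textup{cc}(-\lambda,t)=-\mathcal S_\rho\widetilde F_\textup{cc}(\lambda,t)$. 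Hence $\widehat Z_{\rho_0}(\hbar,\cdot)=\widehat Z_{-\rho_0}(-\hbar,\cdot)^{-1}$, provided the normalizations $\mathcal F^0_{\pm\rho}$ transform compatibly, and Lemma~\ref{crosscap-vzero-limit} gives this. Setting $f(v)\coloneqq\log\widehat Z_{\rho_0}(-\hbar,v,w)$ and $g(v)\coloneqq\log\widehat Z_{-\rho_0}(-\hbar,v,w)$, the product $\mathcal B_{\rho_0}(\hbar)\mathcal B_{\rho_0}(-\hbar)$ becomes $\exp\bigl([f(v-\hbar)-f(v)]-[g(v+\hbar)-g(v)]\bigr)$, up to relabelling the shift. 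The same reduction applies to $\mathcal D$ with $w$ in place of $v$.

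The second step rewrites $g-f$ as the total Stokes jump from $\rho_0$ to $-\rho_0$, taken through the half-plane on the side of the component fixed in Problem~\ref{coordinate-crosscap-rh-problem}. By Proposition~\ref{crosscap-sectorial-analyticity} this jump is the product of $\Xi_{-\ell_m}$ for $m\ge 0$, the trivial limiting factor $\Xi_{\pm\ell_\infty}=1$ from Proposition~\ref{limiting-crosscap-jump-trivial}, and $\Xi_{+\ell_{-m}}$ for $m\ge1$. In the coordinates $x^{1/2}q^{m/2}=e^{-\pi i(v+mw)/\hbar}$, the rays $\ell_{-m}$ produce the factors in $x^{-1/2}q^{m/2}$ after $v+(-m)w\mapsto-(-v+mw)$. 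Then Corollary~\ref{xi-shifts} converts each $\Xi$ ratio under $v\mapsto v-\hbar$ into
\[
  \bigl((1-x^{\pm1/2}q^{m/2})/(1+x^{\pm1/2}q^{m/2})\bigr)^{\mp(-1)^m/2}.
\]
This gives exactly the exponents $-(-1)^m/2$ in the $\mathcal B$ identity, with the sign flip on the $\ell_{-m}$ side coming from the opposite-ray convention of Proposition~\ref{positive-negative-rays}. For the $w$-shift only odd $m$ survive, with exponent $+1/2$, which matches the $\mathcal D$ identity. I still have to check that the shift by $\hbar$ in $f$ and by $-\hbar$ in $g$ combine into a single ratio of the jump product. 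This follows because the even-shift ($m$ even) $E_1$ ratios are $1$ and the odd ones are self-inverse up to the exponent sign.

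The third step is convergence and branches. On the component where $|q|<1$, both half-tails $m\to\infty$ are dominated geometrically, so the infinite products converge normally. The fractional powers are fixed by the $E_1$ representation of Proposition~\ref{xi-e1}, and the collapsed $k=0$ constants $\pm\tfrac12\log2$ cancel in the normalized ratio.

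I expect the main obstacle to be the first step. One must show that the normalized functions obey the exact reflection $\widehat Z_{\rho}(\hbar)\widehat Z_{-\rho}(-\hbar)=1$ with consistent logarithm branches, and also that the constrained limits $\mathcal F^0_{\pm\rho_0}$ in Lemma~\ref{crosscap-vzero-limit} are related by $\lambda\mapsto-\lambda$ rather than differing by a stray branch constant. I would first verify this on $w^{-1}\rho=\mathbb R_{>0}$ using the explicit $F_2$ formula together with the reflection symmetry of $F_2$ under $(z,\omega_1,\omega_2)\mapsto(-z,-\omega_1,-\omega_2)$ from Proposition~\ref{f2-bridgeland-properties}(ii). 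I would then transport the result to other sectors by the finite jump bookkeeping.
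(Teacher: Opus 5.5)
There is a genuine gap, exactly at the step you postponed, and it cannot be closed: the displayed identity is false for the shift-ratios as defined.

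Your reflection step is correct. $\widetilde F_\textup{cc}$ is odd in $\lambda$, so $\mathcal F_{-\rho}(-\lambda_{\mathrm B},v,w)=-\mathcal F_\rho(\lambda_{\mathrm B},v,w)$. Since $\mathcal F^0_\rho$ does not depend on $v$, it cancels from every $v$-shift ratio, so your normalization worry is moot for $\mathcal B$. Put $f(u)=\mathcal F_{\rho_0}(-\hbar,u,w)$ and $g(u)=\mathcal F_{-\rho_0}(-\hbar,u,w)$, and let $J=g-f$ be the total jump from $\rho_0$ to $-\rho_0$. Then
\[
  \log\bigl(\mathcal B_{\rho_0}(v,w,\hbar)\,\mathcal B_{\rho_0}(v,w,-\hbar)\bigr)
  =\bigl[f(v-\hbar)-f(v+\hbar)\bigr]+\bigl[J(v)-J(v+\hbar)\bigr].
\]
Equivalently, $\mathcal B_{\rho_0}(v,w,-\hbar)=\mathcal B_{-\rho_0}(v+\hbar,w,\hbar)$, not $\mathcal B_{-\rho_0}(v,w,\hbar)^{-1}$.

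Your $E_1$ parity remark only handles the second bracket: both shifts $v\mapsto v\pm\hbar$ send $x^{1/2}\mapsto-x^{1/2}$, so $J(v+\hbar)=J(v-\hbar)$. The first bracket carries no Stokes data and is not zero. Take $0<\rmRe(v/w)<1$, $t=v/w$ and $\check\lambda=-\hbar/w$. Then $\rho_0$ is the positive-real sector of \zcref{crosscap-resurgence}(ii). The two evaluations place the argument of both $F_2$ factors at $t+2\check\lambda$ and at $t$, a full period $\omega_1=2\check\lambda$ apart. The quasi-periodicity $F_2(z+\omega_1\mid\omega_1,\omega_2)=F_2(z\mid\omega_1,\omega_2)(1-e^{2\pi iz/\omega_2})^{-1}$, which is immediate from \eqref{f2-integral-representation}, gives
\[
  f(v-\hbar)-f(v+\hbar)=\frac12\log\frac{1-e^{\pi iv/w}}{1+e^{\pi iv/w}} .
\]
The limiting-sector series of \zcref{crosscap-resurgence}(iv) gives exactly the same value.

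A branch-free check shows nothing can compensate for this term. Let $\hbar\to0$ in the fixed component. Because of the principal part $\frac{i}{\lambda}\Li_2(Q^{1/2})-\frac{i}{4\lambda}\Li_2(Q)$, each of $\mathcal B_{\rho_0}(v,w,\pm\hbar)$ tends to $\exp\bigl(-\tfrac12\Li_1(Q^{1/2})+\tfrac14\Li_1(Q)\bigr)=\bigl((1-Q^{1/2})/(1+Q^{1/2})\bigr)^{1/4}$. On the other side, $x^{1/2}q^{m/2}\to0$ for $m\ge0$ and $x^{-1/2}q^{m/2}\to0$ for $m\ge1$, so every factor on the right-hand side tends to $1$. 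The same test fails for $\mathcal D$: the left side tends to a function that genuinely depends on $v$. This also contradicts $\mathcal B_{\rho_n}\to1$ in \zcref{crosscap-shift-ratio-normalization}, so that proposition cannot be used to patch the step.

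There is a second, independent error in your bookkeeping. Write $R_m$ and $\widetilde R_m$ for the quotients in $x^{1/2}q^{m/2}$ and $x^{-1/2}q^{m/2}$. By \zcref{xi-shifts} with $\lambda_{\mathrm B}=-\hbar$, the shift $v\mapsto v-\hbar$ gives $R_m^{(-1)^m/2}$ across $-\ell_m$ and $\widetilde R_m^{-(-1)^m/2}$ across $\ell_{-m}$. The opposite-ray sign of \zcref{positive-negative-rays} is already built into $\Xi_{\pm\ell_k}$. So $J(v)-J(v+\hbar)$ yields $\prod_{m\ge0}R_m^{-(-1)^m/2}\prod_{m\ge1}\widetilde R_m^{(-1)^m/2}$, which is what your own $\mp$ display says. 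Appealing to \zcref{positive-negative-rays} again to make the second exponent $-(-1)^m/2$ counts that sign twice.

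Carried out correctly, your reduction gives, up to branch constants,
\[
  \mathcal B_{\rho_0}(v,w,\hbar)\,\mathcal B_{\rho_0}(v,w,-\hbar)
  =\left(\frac{1-e^{\pi iv/w}}{1+e^{\pi iv/w}}\right)^{1/2}
  \prod_{m\ge0}R_m^{-(-1)^m/2}\prod_{m\ge1}\widetilde R_m^{(-1)^m/2}.
\]
This differs from the statement both in the prefactor and in the sign of the second product. The paper's proof does not catch this, because it multiplies the crossed-ray factors between $\rho_0$ and $-\rho_0$ without ever relating $\mathcal B_{\rho_0}(v,w,-\hbar)$ to the sector $-\rho_0$. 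You correctly identified that reflection as the crux, but carried out honestly it refutes the displayed identity rather than proving it.
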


\begin{proof}
  On the chosen component, the path from the base sector $\rho_0$ to the opposite sector $-\rho_0$ crosses exactly the negative finite rays $-\ell_m$, $m\geq0$, and the positive finite rays $\ell_{-m}$, $m\geq1$. The limiting ray contributes no factor by \zcref{limiting-crosscap-jump-trivial}. Applying the transition relation \eqref{tau-jump} to the two shift-ratios gives
  \begin{align*}
    \frac{\mathcal B_{\rho_{m+1}}(v,w,\hbar)}
    {\mathcal B_{\rho_m}(v,w,\hbar)}
    &=
    \frac{\Xi_{-\ell_m}(v-\hbar,w,-\hbar)}
    {\Xi_{-\ell_m}(v,w,-\hbar)},\\
    \frac{\mathcal D_{\rho_{m+1}}(v,w,\hbar)}
    {\mathcal D_{\rho_m}(v,w,\hbar)}
    &=
    \frac{\Xi_{-\ell_m}(v,w-\hbar,-\hbar)}
    {\Xi_{-\ell_m}(v,w,-\hbar)}.
  \end{align*}
  The analogous formula for the opposite finite rays is obtained by replacing $-\ell_m$ with $\ell_{-m}$.

  Now apply \zcref{xi-shifts} with $\lambda_{\mathrm B}=-\hbar$. For the rays $-\ell_m$, the $v$-shift gives the factor
  \[
    \left(\frac{1-x^{1/2}q^{m/2}}
    {1+x^{1/2}q^{m/2}}\right)^{-(-1)^m/2},
  \]
  and the $w$-shift gives the same factor with exponent $1/2$ only when $m$ is odd. For the rays $\ell_{-m}$, $m\geq1$, the same calculation gives the factors with $x^{1/2}$ replaced by $x^{-1/2}$. Multiplying the crossed-ray factors gives the two displayed products.

  The products converge normally on compact subsets of the branch specified after \zcref{coordinate-crosscap-rh-problem}. Indeed, on that branch $\abs{q}<1$, and each logarithm is bounded by a constant times $\abs{q}^{m/2}$ for all sufficiently large $m$, uniformly on compact sets avoiding the zero and pole divisors. The displayed formulas on the other components are the analytic continuations from this branch.
\end{proof}

\begin{theorem}[Crosscap $\tau$-function solution]\label{crosscap-tau-solution}
  Fix $(v,w)\in M_+$. For $n\in\mathbb Z$, define
  \[
    B_n(\hbar)
    \coloneqq
    \frac{\tau_n(v-\hbar,w,\hbar)}{\tau_n(v,w,\hbar)},
    \qquad
    D_n(\hbar)
    \coloneqq
    \frac{\tau_n(v,w-\hbar,\hbar)}{\tau_n(v,w,\hbar)}.
  \]
  Then the family $\{B_n,D_n\}_{n\in\mathbb Z}$ solves the coordinate crosscap Riemann--Hilbert problem \zcref{coordinate-crosscap-rh-problem}.

  In particular, let $\tau(v,w,\hbar)=\tau_0(v,w,\hbar)$, where $\tau_0$ is defined in \zcref{normalized-crosscap-sectorial-functions}. Define
  \begin{equation}\label{crosscap-shift-ratio-solution}
    B(v,w,\hbar)\coloneqq\frac{\tau(v-\hbar,w,\hbar)}{\tau(v,w,\hbar)},
    \qquad
    D(v,w,\hbar)\coloneqq\frac{\tau(v,w-\hbar,\hbar)}{\tau(v,w,\hbar)}.
  \end{equation}
  Then $B$ and $D$ solve the two-step formulation \zcref{crosscap-difference-rh-problem}.
\end{theorem}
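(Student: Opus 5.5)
The proof assembles the three preparatory propositions, checking the four conditions of \zcref{coordinate-crosscap-rh-problem} in order. Since $\tau_n(v,w,\hbar)=\widehat Z_{\rho_n}(-\hbar,v,w)$ by \zcref{normalized-crosscap-sectorial-functions}, we have
\[
  B_n=\mathcal B_{\rho_n}(v,w,\hbar),
  \qquad
  D_n=\mathcal D_{\rho_n}(v,w,\hbar).
\]
Two facts then follow at once. First, \zcref{crosscap-sectorial-analyticity} and \zcref{shifted-sectorial-domains} give holomorphy on $\mathcal V(n)$, because $\hbar\in\mathcal V(n)$ if and only if $-\hbar\in\mathbb H_{\rho_n}$. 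Second, conditions (i) and (ii) are exactly the limit and polynomial-growth statements of \zcref{crosscap-shift-ratio-normalization}. Throughout, zero and pole divisors are removed as in those statements.

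For the jump condition (iii), I would divide the shifted and unshifted versions of the transition formula \eqref{tau-jump}. This gives
\[
  \frac{B_{n+1}}{B_n}
  =\frac{\Xi_{-\ell_n}(v-\hbar,w,-\hbar)}{\Xi_{-\ell_n}(v,w,-\hbar)},
\]
and similarly for $D$ with the $w$-shift. I would then apply \zcref{xi-shifts} with the lower sign and $\lambda_{\mathrm B}=-\hbar$. Here $e^{\pi i(v+nw)/\lambda_{\mathrm B}}=e^{-\pi i(v+nw)/\hbar}=x^{1/2}q^{n/2}$, so the $v$-shift factor becomes
\[
  \left(\frac{1-x^{1/2}q^{n/2}}{1+x^{1/2}q^{n/2}}\right)^{(-1)^n/2}.
\]
The $w$-shift factor has exponent $-1/2$ for odd $n$ and is trivial for even $n$. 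This is precisely the exponent $-(1-(-1)^n)/4$, matching (iii) and agreeing with \zcref{self-dual-ray-factor}.

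One point needs care. The shift $v\mapsto v-\hbar$ corresponds to $v\mapsto v+\lambda_{\mathrm B}$, which is the shift direction used in \zcref{xi-shifts}. I need to verify the orientation of the shift and the sign of the exponent here, since this is where a sign error is most likely. Condition (iv) is \zcref{crosscap-reflection-products} verbatim, using $B_0=\mathcal B_{\rho_0}$ and $D_0=\mathcal D_{\rho_0}$.

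For the two-step formulation, set $B=B_0$ and $D=D_0$. Conditions (i), (ii) and (iv) are then inherited directly. For (iii), I would use the covariance of the construction under the translation $(v,w)\mapsto(v+2w,w)$. This translation relabels the Stokes rays by $\ell_k\mapsto\ell_{k+2}$, so $\tau_0(v+2w,w,\hbar)$ equals $\tau_2(v,w,\hbar)$. The parity is preserved because the shift is even, and $\mathcal F^0$ does not depend on $v$.

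The identification of $\tau_0$ at the translated point with $\tau_2$ is the main obstacle. It requires the sectorial normalization $\mathcal F_\rho^0$ to be the same constant on both sides. To handle it, I would first argue at the level of the Borel transform: $G_{\mathrm{cc}}(\xi,t+2)$ has the same poles with relabelled $k$, and its residues are unchanged because they depend on $k$ only through $(-1)^k$. I would then argue that the Laplace integrals along the corresponding rays agree.

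Granting this identification, I compute
\[
  \frac{B(v+2w)}{B(v)}=\frac{B_2}{B_0}=\frac{B_2}{B_1}\cdot\frac{B_1}{B_0}.
\]
By the $n=0$ and $n=1$ jump relations this is
\[
  \left(\frac{1-x^{1/2}}{1+x^{1/2}}\right)^{1/2}
  \left(\frac{1-x^{1/2}q^{1/2}}{1+x^{1/2}q^{1/2}}\right)^{-1/2}.
\]
For $D$, only the odd step $n=1$ contributes, giving the single factor with exponent $-1/2$. The domain $\mathcal H(0)\cap\mathcal H(1)$ in the two-step formulation is exactly where both jump relations apply.
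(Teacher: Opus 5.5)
Your proposal is correct and follows the paper's proof essentially step by step. You identify $B_n=\mathcal B_{\rho_n}$ and $D_n=\mathcal D_{\rho_n}$, read off (i), (ii) and (iv) from \zcref{crosscap-shift-ratio-normalization, crosscap-reflection-products}, obtain (iii) by dividing \eqref{tau-jump} and applying \zcref{xi-shifts} with $\lambda_{\mathrm B}=-\hbar$, and derive the two-step relations from $\tau_0(v+2w,w,\hbar)=\tau_2(v,w,\hbar)$ together with the $n=0,1$ jumps. The paper simply asserts that last identification from the sector labelling, and your extra justification can be shortened: $\widetilde F_{\mathrm{cc}}$ depends on $t$ only through $e^{\pi i t}$, so $G_{\mathrm{cc}}(\xi,t+2)=G_{\mathrm{cc}}(\xi,t)$ identically (not merely with matching poles and residues), and $\rho_0$ at $(v+2w,w)$ is the same ray as $\rho_2$ at $(v,w)$.
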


\begin{proof}
  By definition, $B_n=\mathcal B_{\rho_n}$ and $D_n=\mathcal D_{\rho_n}$. The finite-ray transition identities are \zcref{crosscap-finite-ray-jump}. The value at $v=0$ used in the normalization is the constrained sectorial limit of \zcref{crosscap-vzero-limit}, and the normalized limiting transition is \zcref{limiting-crosscap-jump-trivial}. The holomorphicity, normalization at $\hbar=0$, polynomial growth at infinity, branch choices, and reflection identities are \zcref{crosscap-shift-ratio-normalization, crosscap-reflection-products}. The transition relation \eqref{tau-jump} gives
  \begin{align*}
    \frac{B_{n+1}(\hbar)}{B_n(\hbar)}
    &=
    \frac{\Xi_{-\ell_n}(v-\hbar,w,-\hbar)}
    {\Xi_{-\ell_n}(v,w,-\hbar)},\\
    \frac{D_{n+1}(\hbar)}{D_n(\hbar)}
    &=
    \frac{\Xi_{-\ell_n}(v,w-\hbar,-\hbar)}
    {\Xi_{-\ell_n}(v,w,-\hbar)}.
  \end{align*}
  Applying \zcref{xi-shifts} with $\lambda_{\mathrm B}=-\hbar$ gives the jump factors in \zcref{coordinate-crosscap-rh-problem}\textup{(iii)}. Thus the family $\{B_n,D_n\}_{n\in\mathbb Z}$ satisfies all conditions of the coordinate crosscap problem.

  For the two-step formulation, $B=B_0$ and $D=D_0$. The remaining check is the pair of two-step difference relations.

  The translation $v\mapsto v+2w$ sends the ray indexed by $k$ to the ray indexed by $k+2$.  With the sector labeling fixed above, $\tau_0(v+2w,w,\hbar)=\tau_2(v,w,\hbar)$. Therefore the cocycle relation \eqref{tau-jump} gives
  \begin{align*}
    \frac{B(v+2w,w,\hbar)}{B(v,w,\hbar)}
    &=
    \prod_{n=0}^{1}
    \frac{\Xi_{-\ell_n}(v-\hbar,w,-\hbar)}{\Xi_{-\ell_n}(v,w,-\hbar)},\\
    \frac{D(v+2w,w,\hbar)}{D(v,w,\hbar)}
    &=
    \prod_{n=0}^{1}
    \frac{\Xi_{-\ell_n}(v,w-\hbar,-\hbar)}{\Xi_{-\ell_n}(v,w,-\hbar)}.
  \end{align*}
  Applying \zcref{xi-shifts} with $\lambda_{\mathrm B}=-\hbar$ and $k=0,1$ gives \zcref{crosscap-difference-rh-problem}\textup{(iii)}. The other conditions are the $n=0$ part of the coordinate solution, so $B$ and $D$ solve the two-step problem.
\end{proof}

\begin{remark}
  The RH problem constructed in this section is the self-dual crosscap part of the orientifolded conifold. Combining it with Bridgeland's resolved-conifold RH problem suggests the following large-$N$ product formula, after choosing a logarithmic cover on which the ordinary conifold wall-crossing factors admit square roots:
  \[
    \mathbb S_{\SO/\Sp}(\ell)^*
    =
    \mleft(\mathbb{S}_{\mathrm{B}}(\ell)^*\mright)^{-1/2}
    \otimes
    \mleft(\mathbb{S}_{\mathrm{cc}}(\ell)^*\mright)^{\varepsilon_{\SO/\Sp}},
    \qquad
    \varepsilon_{\SO}=+1,\quad \varepsilon_{\Sp}=-1.
  \]
  This formula is not used as a theorem in the present paper. It records the formal wall-crossing factor decomposition corresponding to the scalar large-$N$ identity $F_{\SO/\Sp}=-F_{\mathrm{GV}}/2\pm F_{\mathrm{cc}}$. The exponent $-1/2$ requires a choice of square roots of the ordinary conifold factors, such as $(1-x_{\beta+n\delta})^{1/2}$.
\end{remark}

The $\tau$-function is characterized by $\tau_{n+1}/\tau_n=\Xi_{-\ell_n}(v,w,-\hbar)$ and by the limiting boundary condition $\widehat Z_\rho(\lambda_{\mathrm B},0,w)=1$, which fixes $\tau_n(0,w,\hbar)=1$. The two functions in \eqref{crosscap-shift-ratio-solution} are the crosscap analogue of Bridgeland's $\tau$-function shift-ratios for the resolved conifold \cite{bridgeland-2020-7a4fdc8b}. The $\tau$-function is built from the normalized scalar crosscap transitions of \zcref{crosscap-sectorial-analyticity, crosscap-shift-ratio-normalization}.

\section{Conclusion and discussion}\label{conclusion}
We analyzed the crosscap contribution to the large-$N$ $\SO/\Sp$ orientifold conifold free energy in the convention fixed in \eqref{sosp-free-energy-large-n}. The primitive unprojected block is a single $q$-Pochhammer tower whose rank-one quantum-torus equation matches Faddeev's quantum dilogarithm after \zcref{faddeev-change-of-variables}. The known quantum-dilogarithm resurgence theorem gives \zcref{primitive-resurgence}; the odd projection gives the crosscap pole set, residues, Stokes jumps, and limiting sector in \zcref{crosscap-resurgence}.

Combining this calculation with the resolved-conifold summation theorem gives \zcref{orientifold-resurgence}. In the positive-real sector the answer is the displayed double-sine quotient times the resolved-conifold triple-sine factor with exponent $-1/2$, as dictated by \eqref{sosp-free-energy-large-n}.

The self-dual Riemann--Hilbert problem constructed here is an axiomatic wall-crossing factor model extracted from the scalar Stokes calculation. Its weights \eqref{classical-stokes-weights} and O-plane incidence function give the local self-dual Stokes factors. The normalized sectorial functions $\tau_n$ solve the coordinate crosscap problem through their elementary shift-ratios by \zcref{crosscap-shift-ratio-normalization, crosscap-reflection-products,crosscap-tau-solution}.

A geometric or categorical construction from orientifold DT theory remains outside the paper. A future comparison is to test the normalized crosscap sectorial functions against refined Chern--Simons partition functions for the $A$-, $B$-, $C$-, and $D$-type gauge groups \cite{avetisyan-2022-e99b6793,alexandrov-2024-52f5f751,chuang-2025-741da84d}.

\subsection*{Acknowledgments}
The authors were partially supported by Taiwan NSTC grant 114-2115-M-002-010 and NTU Core Consortiums grants 113L893603, 114L891803, and 115L890503 (TIMS).

\printbibliography
\end{document}